

\documentclass[final,5p,times,twocolumn]{elsarticle}


\usepackage{amssymb}
\usepackage{amsmath}
\usepackage{graphicx}%
\usepackage{multirow}%
\usepackage{amsmath,amssymb,amsfonts}%
\usepackage{amsthm}%

\newtheorem{theorem}{Theorem}
\newtheorem{lemma}{Lemma}
\newtheorem*{proof}{Proof}\renewenvironment{proof}[1][\proofname]{%
	\par\noindent\textnormal{\bfseries #1.} \ignorespaces
}{%
	\hfill \qedsymbol
}

\newtheorem{definition}{Definition}
\usepackage{mathrsfs}%
\usepackage{textcomp}%
\usepackage{manyfoot}%
\usepackage{booktabs}%
\usepackage{algorithm}%
\usepackage{algorithmicx}%
\usepackage{algpseudocode}%
\usepackage{listings}%
\usepackage{framed}%
\usepackage{comment}%
\usepackage{arydshln}  
\usepackage{makecell}
\usepackage{threeparttable}
\usepackage{bbding}
\usepackage[table,xcdraw]{xcolor}
\usepackage[section]{placeins}
\usepackage{hyperref}
\hypersetup{
	colorlinks=true, 
	linkcolor=blue, 
	citecolor=blue, 
	urlcolor=blue  
}

\sloppy
\begin{document}

	\begin{frontmatter}

		\title{Lattice-Based Dynamic $k$-Times Anonymous Authentication with Attribute-Based Credentials}
		
		\author[label1]{Junjie Song} 
		\ead{junjiesong@seu.edu.cn}
		\author[label1,label2]{Jinguang Han\corref{corresponding}} 
		\ead{jghan@seu.edu.cn}
		\author[label3]{Man Ho Au}
		\ead{mhaau@polyu.edu.hk}
		\author[label4]{Rupeng Yang}
		\ead{rupengy@uow.edu.au}
		\author[label1]{Chao Sun}
		\ead{sunchaomt@seu.edu.cn}
		\affiliation[label1]{organization={School of Cyber Science and Engineering, Southeast University},
			city={Nanjing 211189},
			country={China}
		}
		\affiliation[label2]{
			organization={Wuxi Campus, Southeast University},
			city={Wuxi 214125},
			country={China}
		}
		\affiliation[label3]{
			organization={Department of Computing, The Hong Kong Polytechnic University},
			city={Hong Kong},
			country={China}
		}
		\affiliation[label4]{
			organization={School of Computing and Information Technology, University of Wollongong},
			city={Wollongong 2522},
			country={Australia}
		}
		
		\cortext[corresponding]{Corresponding author.}
		\begin{abstract}
			With the development of Internet, privacy has become a primary concern of users. Anonymous authentication plays an important role in privacy-preserving systems. 
			A $k$-times anonymous authentication ($k$-TAA) scheme allows a group member to anonymously authenticate to the application provider up to $k$ times.
			Considering quantum computing attacks, lattice-based $k$-TAA was introduced. However, existing schemes neither support the dynamic granting and revocation of users nor enable users to control the release of their attributes.
			In this paper, we construct the first lattice-based dynamic $k$-TAA, which offers limited times anonymous authentication, dynamic member management, attribute-based authentication, and post-quantum security.
			We present a concrete construction, and reduce its security to standard complexity assumptions. 
			Notably, compared with existing lattice-based $k$-TAA, our scheme is efficient in terms of communication cost.
		\end{abstract}
		
		
		
		\begin{keyword}
			Lattice-Based Cryptography \sep Anonymous Authentication  \sep Attribute-Based Credentials \sep Dynamic $k$-TAA
			
			
		\end{keyword}
		
	\end{frontmatter}
	
		
	\section{Introduction}\label{sec1}
	Many applications, such as electronic voting \cite{Chaum81,Neff01,ChillottiGGI16,FarzaliyevPSW25}, electronic cash \cite{Chaum82,tcs_VaradharajanNM99,CamenischHL05,asiacrypt_LLNW17,DeoLNS20}, and trial browsing of content \cite{k-TAA_TFS04,dynamic_k-taa_NS05}, require the verification of users' identities while protecting their privacy. Anonymous authentication enables authorized users to authenticate with application providers (AP) while their real identities will not be disclosed. Hence, anonymous authentication can implement authentication and privacy protection. However, traditional anonymous authentication has two main issues: (1) it is difficult to trace dishonest users; (2) it cannot limit the authentication times.
	
	To fix the problems mentioned above,  Teranishi et al. \cite{k-TAA_TFS04} first proposed the definition of $k$-times anonymous authentication ($k$-TAA).
	In $k$-TAA, an authorized user is able to anonymously authenticate to AP at most $k$ times; otherwise, the user can be de-anonymized and identified. Due to its metrics, $k$-TAA has been applied to e-voting  system \cite{Chaum81,Neff01,ChillottiGGI16,FarzaliyevPSW25}, e-cash system \cite{Chaum82,tcs_VaradharajanNM99,CamenischHL05,asiacrypt_LLNW17,DeoLNS20} and system supporting trial browsing of content \cite{k-TAA_TFS04,dynamic_k-taa_NS05}. to protect users' privacy and limit the authentication times.
	
	Considering revocation issues in $k$-TAA, Nguyen et al. \cite{dynamic_k-taa_NS05} proposed the first dynamic $k$-TAA where an AP can revoke a user even he/she accesses a service less than the allowed times.
	In dynamic $k$-TAA, each AP can grant and revoke users dynamically and independently. For instance, when offering trial browsing services, an AP might prefer to grant access to users who have favorable profiles, or it might charge some fee for access group membership.
	
	Nevertheless, existing dynamic $k$-TAA schemes cannot resist quantum computing attacks.
	Although post-quantum $k$-TAA has been proposed \cite{Rupeng_wPRF}, it does not support dynamic grant and revocation, and can not control the release of attributes.
	Therefore, it is unsuitable to environments that require flexible user management and fine-grain authentication. In this paper, we design the first post-quantum dynamic $k$-TAA with attribute-based authentication to address the problem mentioned above.
	
	\begin{table*}[htbp]
		\centering
		\caption{\centering Comparison of existing $k$-TAAs}
		\label{tab:comparisons}
		\begin{tabular}{lllll}
			\toprule
			{Schemes}  & \makecell[l]{Post-quantum\\security}  & \makecell[l]{Dynamic\\management} & \makecell[l]{Attribute-based\\ authentication} & {Assumptions}\\
			\midrule
			\cite{k-TAA_TFS04}      & {\XSolidBrush}  & {\XSolidBrush} & {\XSolidBrush}  &   S-RSA, DDH    \\
			\cite{periodic_k-TAA}       & {\XSolidBrush}  & {\XSolidBrush} & {\XSolidBrush}  &   S-RSA, $y$-DDHI / SDDHI   \\
			\cite{tifs_YangXSYLPD24}    & {\XSolidBrush}  & {\XSolidBrush} & {\Checkmark}  &   DL, DDH, LRSW  \\
			\cite{LianCML15}	  & {\XSolidBrush}  & {\XSolidBrush}  & {\XSolidBrush}  &    S-RSA, DDH, $y$-DHI  \\
			\cite{dynamic_k-taa_NS05}      & {\XSolidBrush}  & {\Checkmark} & {\XSolidBrush}  &  $q$-SDH, DBDH   \\
			\cite{Allen_k_TAA}          & {\XSolidBrush}  & {\Checkmark} & {\XSolidBrush}   &   DDH, $q$-SDH, $y$-DDHI    \\
			\cite{tc_YuenLAHSZ15}       & {\XSolidBrush}  & {\Checkmark}  & {\Checkmark}  &  $y$-DBDHI    \\
			\cite{tifs_ChaterjeeMC21}    & {\XSolidBrush}  & {\XSolidBrush} & {\XSolidBrush}  &  DDH, D-Uniqueness of PUF   \\
			\cite{tdsc_HuangSGWZH22}   & {\XSolidBrush}  & {\XSolidBrush}& {\XSolidBrush}  &  $q$-SDH, $y$-CBDH, $y$-DBDH     \\
			\cite{Rupeng_wPRF}     & {\Checkmark}    & {\XSolidBrush}& {\XSolidBrush}  &     LWE, LWR, SIS   \\
			\cite{asiacrypt_LLNW17}     & {\Checkmark}    & {\XSolidBrush}& {\XSolidBrush}  &     LWE, LWR, SIS   \\
			\cite{e-cash_DLNS20}     & {\Checkmark}    & {\XSolidBrush}& {\XSolidBrush}  &     LWE, LWR, SIS   \\
			\midrule
			\rowcolor[HTML]{E0E0E0} {Ours}        & {\Checkmark}  & {\Checkmark} & {\Checkmark} &   LWE, LWR, SIS \\
			\bottomrule
		\end{tabular}
	\end{table*}

	\subsection{Related Works}
	After Teranishi et al. \cite{k-TAA_TFS04} introduced the first $k$-TAA scheme, $k$-TAA schemes with different features were proposed.
	Chaterjee et al. \cite{tifs_ChaterjeeMC21} presented a $k$-TAA scheme based on physically unclonable functions, which is suitable for trusted platform modules (TPM). To improve the efficiency of $k$-TAA, Huang et al. \cite{tdsc_HuangSGWZH22} proposed an efficient $k$-TAA, which has constant computation cost and communication cost, making it practical for real-world pay-as-you-go cloud computing.
	
	Camenisch et al. \cite{periodic_k-TAA} first presented a periodic $k$-TAA scheme, where a registered user is allowed to anonymously authenticate himself/herself $k$ times in each time period. Lian et al. \cite{LianCML15} presented a more efficient periodic $k$-TAA, and their scheme achieves the ``least additional costs'' for revocation by leaking the dishonest user's secret parameter through a special zero-knowledge proof. Later, Yang et al. \cite{tifs_YangXSYLPD24} proposed a periodic fine-grained $k$-TAA, which supports selective disclosure of attributes and does not require the time consuming pairing operation. Therefore, this scheme provides flexible access control and is efficient.
	
	However, all these schemes cannot support dynamic member management. To support dynamic grant and revocation, Nguyen and Safavi-Naini \cite{dynamic_k-taa_NS05} introduced the first dynamic $k$-TAA, where AP can dynamically grant or revoke users. To reduce the communication costs of dynamic $k$-TAA, Au et al. \cite{Allen_k_TAA} proposed a constant-size dynamic $k$-TAA, where the proof cost remains constant instead of linearly with $k$. Subsequently, Yuen et al. \cite{tc_YuenLAHSZ15} designed an attribute-based dynamic $k$-TAA scheme that supports flexible access control policies based on various attributes.

	The above schemes show various advantages, but cannot resist quantum computing attacks. Considering this problem, Yang et al. \cite{Rupeng_wPRF} presented the first post-quantum $k$-TAA scheme. They formalized a new primitive named weak pseudorandom function (wPRF) with efficient protocols to ensure the ``$k$-times'' authentication, and developed an extended abstract Stern's protocol, which provides a general zero-knowledge argument of knowledge (ZKAoK) framework for the statements required in their $k$-TAA.
	However, this scheme is an ordinary $k$-TAA, so AP lacks the ability to grant or revoke access to users dynamically and to support attribute-based authentication. 
	Meanwhile, they just showed the theoretical possibility of a lattice-based $k$-TAA without providing an concrete instantiation, and it is actually inefficient.
	
	Additionally, e-cash \cite{CamenischHL05} can be viewed as a special form of $k$-TAA, 
	where the authentication tokens take the form of e-coins that can be used up to $k$ times 
	in the context of electronic payments. Libert et al. \cite{asiacrypt_LLNW17} proposed the first lattice-based e-cash scheme, and Yang et al. \cite{YAZ+19_ZKP_on_Z} later introduced a more efficient construction using a newly developed ZKAoK framework. Subsequently, Deo et al. \cite{e-cash_DLNS20} showed that both prior e-cash schemes fail to provably achieve exculpability, a property that ensures no entity can falsely accuse an honest user of double spending, and are therefore insecure. They further proposed the first concrete lattice-based e-cash systems that eliminate this security limitation. 
	Despite these advances, existing lattice-based e-cash schemes, as specific instantiations of $k$-TAA, still do not support dynamic management and attribute-based authentication.
	
	In this paper, we propose the first post-quantum dynamic $k$-TAA scheme with attribute-based credentials based on lattices. A detailed comparison between our scheme and existing $k$-TAA is presented in Table \ref{tab:comparisons}. 
	
	\subsection{Our Contributions}  
	Our new scheme provides the following interesting properties and features:
	
	\begin{enumerate}
		\item \textbf{Anonymity.} 
		A user's identity cannot be publicly identified if and only if he/she authenticates no more than $k$ times.
		\item \textbf{Accountability.}
		A user's identity can be publicly identified if and only if he/she authenticates more than $k$ times.
		\item \textbf{Exculpability.} 
		No honest user can be falsely accused of authenticating with the same honest AP more than $k$ times.
		\item \textbf{Dynamicity.} 
		The AP can grant and revoke users dynamically and independently, even if the users authenticate less than $k$ times.
		\item \textbf{Efficiency.} Our scheme achieves higher efficiency compared to existing lattice-based $k$-TAA in terms of communication costs.
		\item \textbf{Attribute-Based Credentials.}
		The authorized users are allowed to selectively disclose attributes during authentication by using the credentials from the GM.
		\item \textbf{Post-quantum Security.} 
		The proposed scheme relies on standard lattice-based hardness assumptions and therefore achieves post-quantum security.
	\end{enumerate}
	
	Our contributions are summarized as follows: (1) we build a concrete construction of our dynamic $k$-TAA scheme; (2) our scheme is instantiated and compared with related schemes; (3) the security of our scheme is formally proven. The novelty is to protect users' privacy and implement flexible revocation.
	
	\section{Preliminary}
	\subsection{Notations}
	Table \ref{tab:notation} summarizes the basic notations used in this paper.
	
	\begin{table*}[htbp]
		\centering
		\caption{\centering Basic notations used in this paper}
		\label{tab:notation}
		\scalebox{0.94}{
			\begin{threeparttable}
				\begin{tabular}{l@{\hspace{18pt}}l}
					\toprule
					Notation & Description\\
					\midrule
					$[a,b]$ & The set $\{k\in\mathbb{Z}:a\leq k\leq b\}$ for $a\leq b$.\\
					$[c]$ & The set $\{k\in\mathbb{Z}:0\leq k\leq c-1\}$ for $1\leq c$.\\
					$\mathbb{Z}_q$ & For an integer $q>0$, define $\mathbb{Z}_q = \mathbb{Z}/q\mathbb{Z}$.\\
					$\mathbf{A}$ / $\mathbf{v}$ & The matrix / the column vector  in $\mathbb{Z}_q$.\\
					$\mathbf{v}[i]$ & The $i$-th entry of  $\mathbf{v}$.\\
					$x\leftarrow\chi$ &$x$ is sampled according to the distribution $\chi$.\\
					$s\xleftarrow{\$}\mathcal{S}$ & $s$ is  uniformly sampled from the set $\mathcal{S}$.\\						
					$\lfloor\cdot\rceil$ & The rounding function.\\
					$\lfloor\cdot\rfloor$ / $\lceil\cdot\rceil$ & The floor function / the ceiling function.\\
					${\lfloor \cdot \rceil}_{p}$ & The map from $\mathbb{Z}_{q}$ to $\mathbb{Z}_{p}$, ${\lfloor x \rceil}_{p}={\lfloor \frac{p}{q}\cdot x \rfloor}$, where $q \geq p \geq 2$.\\
					
					$\|\mathbf{v}\|_\infty$ & The infinity ($\ell_\infty$) norm of $\mathbf{v}$.\\
					$\|\mathbf{v}\|_2$ & The Euclidean ($\ell_2$) norm of $\mathbf{v}$.\\
					\multirow{2}{*}{$\mathbf{I}_n \otimes \mathbf{g}$} & The block-diagonal matrix $\mathrm{diag}(\mathbf{g},\ldots,\mathbf{g}) \in \mathbb{Z}_q^{n \times n\ell}$, \\
					& \qquad constructed from the $n \times n$ identity matrix $\mathbf{I}_n$ and a vector $\mathbf{g} \in \mathbb{Z}_q^\ell$. \\
					
					\multirow{2}{*}{$\vartheta_{j}^{N}$} & For $\mathbf{y} = [\mathbf{y}_0^\top \| \cdots \| \mathbf{y}_{N-1}^\top]^\top \in \mathbb{Z}^{\ell N}$ with 
					$\mathbf{y}_0,\ldots,\mathbf{y}_{N-1} \in \mathbb{Z}^\ell$, \\
					& \qquad define $\vartheta_{j}^{N}$ $(j \in [N])$ as the vector 
					satisfying $\mathbf{y}_j = 1^\ell$ and $\mathbf{y}_i = 0^\ell$ for all $i \neq j$. \\
					
					$\mathsf{bin}(\cdot)$ & The element-wise binary decomposition of vector.\\
					$\mathsf{vdec}(\cdot)$ & The mapping $\mathsf{vdec} : \mathbb{Z}_{q}^{n} \to [0,2^\iota-1]^{nk'}$, defined by 
					$\mathsf{vdec}(\mathbf{a}) = \bar{\mathbf{a}}$. \tnote{$\dagger$}\\
					\multirow{2}{*}{$\mathsf{M2V}(\cdot)$} & The column vector concatenation of matrix, $\mathbf{v} = \mathsf{M2V}(\mathbf{M}) = (\mathbf{M}[1]^{\top}\|\cdots\|\mathbf{M}[n]^{\top})^{\top}$, \\
					&\qquad 
					where $\mathbf{M}[j]_{j\in[1,n]}$ denote columns of $\mathbf{M}\in\mathbb{Z}^{m\times n}$. \\
					$negl(n)$ & A function that is negligible in $n$.\\
					$poly(n)$ & A function that is polynomial in $n$.\\
					\bottomrule
				\end{tabular}
				\begin{tablenotes}
					\item[$\dagger$] $\forall i\in[1,n]$, $\mathbf{a}[i]=\sum^{k'}_{j=1}\left(\left(2^\iota\right)^{j-1}\cdot k'\cdot\bar{\mathbf{a}}[\left(i-1\right) +j]\right)$, where $k'=\frac{\log{q}}{\iota}$.
				\end{tablenotes}
			\end{threeparttable}
		}
	\end{table*}
	
	\subsection{Lattices}
	A lattice $\Lambda$ is defined as:		
	$$\Lambda = \mathcal{L}(\mathbf{b}_1, \ldots, \mathbf{b}_m) = \left\{ \sum_{i=1}^{m} z_i \mathbf{b}_i : z_i \in \mathbb{Z} \right\},$$		
	where the vectors $(\mathbf{b}_1, \ldots, \mathbf{b}_m)$ form a basis for the lattice. In addition, if $n=m$, the lattice $\Lambda$ is said to be full-rank.
	
	\begin{definition}
		Given positive integers $n,m,q$ with $q\geq2$, and a matrix $\mathbf{A} \in \mathbb{Z}_q^{n \times m}$, we define the q-ary lattice consisting of the vectors which are orthogonal to $\mathbf{A}$, more concretely,
		$$\Lambda^\perp(\mathbf{A}) = \{ \mathbf{e} \in \mathbb{Z}^m : \mathbf{Ae} = \mathbf{0} \mod q \},$$
		and for any vector $\mathbf{u} \in \mathbb{Z}_q^n$  generated by the linear combination of columns of $\mathbf{A}$, we define a lattice coset consisting of the vectors defined as follows:
		$$\Lambda^\mathbf{u}(\mathbf{A}) = \{ \mathbf{e} \in \mathbb{Z}^m : \mathbf{Ae} = \mathbf{u} \mod q \}.$$
	\end{definition}
	
	\begin{definition}
		Let $\mathbf{c}\in\mathbb{R}^n$ be a center and $\sigma>0$ be a standard deviation. \textbf{The Gaussian function} centered at $\mathbf{c}$ with parameter $\sigma$ is defined by:
		$$\forall \mathbf{x} \in \mathbb{R}^n, \rho_{\sigma, \mathbf{c}}(\mathbf{x}) = \exp \left( -\frac{\|\mathbf{x} - \mathbf{c}\|^2}{2\sigma^2} \right).$$
	\end{definition}
	
	\begin{definition}
		Let $\mathbf{c}\in\mathbb{R}^n$ be a center and $\sigma>0$ be a standard deviation. \textbf{The discrete Gaussian distribution} over an $n$-dimensional lattice $\Lambda$ centered at $\mathbf{c}$ with parameter $\sigma$ is defined by:
		
		$$\forall \mathbf{x} \in \Lambda, \mathcal{D}_{\Lambda, \sigma, \mathbf{c}} = \frac{\rho_{\sigma, \mathbf{c}}(\mathbf{x})}{\sum_{\mathbf{x} \in \Lambda} \rho_{\sigma, \mathbf{c}}(\mathbf{x})}.$$
		
	\end{definition}
	
	\subsection{Lattice Trapdoors}
	By \textbf{TrapGen} and \textbf{SamplePre}, we denote the trapdoor generation algorithm that generates a random lattice together with a trapdoor corresponding to the lattice, and the pre-sampling algorithm that samples short vectors within a given lattice coset, respectively.
	\begin{theorem}[\cite{MP12}]
		Given positive integers $n,m,q$ with $q\geq2$ and $m=\mathcal{O}(n\log q)$, there exists 
		a probabilistic polynomial time (PPT) algorithm $\mathbf{TrapGen}(1^n, 1^m, q)$ outputs a pair $(\mathbf{A}, \mathbf{T})$, where $\mathbf{A} \in \mathbb{Z}_q^{n \times m}$ is within $2^{-\Omega(n)}$ statistical distance of a uniform matrix, and $\mathbf{T} \in \mathbb{Z}_q^{m \times m}$ is a basis for the lattice $\Lambda^\perp(\mathbf{A})$.
	\end{theorem}
	
	\begin{theorem}[\cite{MP12}]
		Given positive integers $n,m,q$ with $q\geq2$ and $m=$ $O(n\log q)$, there exists 
		a PPT algorithm $\mathbf{SamplePre}(\mathbf{A},\mathbf{T},\mathbf{u}, \sigma$), which takes as input a matrix $\mathbf{A}\in\mathbb{Z}_q^{n\times m}$, a short basis $\mathbf{T}\in\mathbb{Z}_q^{m\times m}$, a gaussian parameter $\sigma\geq\omega(\sqrt{\log n})\cdot\|\tilde{\mathbf{T}}\|$ and a vector $\mathbf{u}\in\mathbb{Z}_q^n$, then outputs a vector $\mathbf{e}\in\mathbb{Z}^m$, and the distribution of $\mathbf{e}$ is statistically close to $\mathcal{D}_{{\Lambda ^\mathrm{u}} ( \mathbf{A} ) , \sigma }$.
		
	\end{theorem}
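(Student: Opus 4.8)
The plan is to prove the theorem by exhibiting the randomized nearest-plane (Klein/GPV) sampler, since the hypothesis $\sigma\geq\omega(\sqrt{\log n})\cdot\|\tilde{\mathbf{T}}\|$ phrased in terms of the Gram-Schmidt norm is precisely the condition under which that sampler is correct; the gadget-based trapdoor of \cite{MP12} supplies the short basis $\mathbf{T}$ that the sampler consumes. First I would reduce preimage sampling to coset sampling for the homogeneous lattice: using $\mathbf{T}$ one computes, by linear algebra over $\mathbb{Z}_q$, some $\mathbf{t}\in\mathbb{Z}^m$ with $\mathbf{A}\mathbf{t}=\mathbf{v}\bmod q$, so that producing a sample from $\mathcal{D}_{\Lambda^{\mathbf{v}}(\mathbf{A}),\sigma}$ is equivalent to producing $\mathbf{e}=\mathbf{t}+\mathbf{x}$ with $\mathbf{x}$ distributed as $\mathcal{D}_{\Lambda^\perp(\mathbf{A}),\sigma,-\mathbf{t}}$.

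Next I would describe the sampler itself. Writing the Gram-Schmidt orthogonalization $\tilde{\mathbf{T}}=(\tilde{\mathbf{b}}_1,\ldots,\tilde{\mathbf{b}}_m)$ of the columns of $\mathbf{T}$, the algorithm processes the basis vectors from last to first; at each index $i$ it samples a one-dimensional integer coefficient from a discrete Gaussian over a coset of $\mathbb{Z}$ whose center is the residual target projected onto $\tilde{\mathbf{b}}_i$, subtracts the corresponding lattice contribution, and recurses. Because every coefficient is an integer, the accumulated vector is a genuine lattice point of $\Lambda^{\mathbf{v}}(\mathbf{A})$, which yields the correctness claim $\mathbf{A}\mathbf{e}=\mathbf{v}\bmod q$ and the membership $\mathbf{e}\in\Lambda^{\mathbf{v}}(\mathbf{A})$ immediately by construction.

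The core of the argument, and the main obstacle, is the statistical-closeness claim. The per-coordinate samples are one-dimensional Gaussians of effective width $\sigma/\|\tilde{\mathbf{b}}_i\|$, and the product of these would reproduce the target spherical distribution $\mathcal{D}_{\Lambda^{\mathbf{v}}(\mathbf{A}),\sigma}$ exactly were it not for the discretization error introduced at each level. I would control this error through the smoothing parameter $\eta_\epsilon$: when $\sigma\geq\eta_\epsilon(\mathbb{Z})\cdot\|\tilde{\mathbf{b}}_i\|$ for every $i$ — equivalently $\sigma\geq\omega(\sqrt{\log n})\cdot\|\tilde{\mathbf{T}}\|$ for a negligible $\epsilon$ — each one-dimensional discrete Gaussian is within negligible statistical distance of its idealized counterpart, and a telescoping induction over the $m$ coordinates accumulates only negligible total error. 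The delicate point is ensuring the smoothing bound holds simultaneously for all Gram-Schmidt vectors, which is exactly what the stated hypothesis on $\sigma$ encodes.

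Finally I would attend to the parameter bookkeeping. Since $m=O(n\log q)=poly(n)$, the telescoped statistical distance remains $negl(n)$, and each of the $m$ one-dimensional samplings runs in polynomial time, which together establish the claimed PPT guarantee and the statistical closeness of the output to $\mathcal{D}_{\Lambda^{\mathbf{v}}(\mathbf{A}),\sigma}$. I would remark that an alternative route follows the gadget-trapdoor sampler of \cite{MP12} directly, replacing the nearest-plane recursion by a perturbation step plus an efficient sampler for the structured gadget lattice and invoking the discrete-Gaussian convolution lemma in place of the telescoping argument; the two approaches give the same conclusion, but the GPV formulation matches the hypothesis as stated most transparently.
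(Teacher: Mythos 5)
The paper offers no proof of this statement to compare against: it is imported verbatim by citation to \cite{MP12}, and indeed, as phrased in terms of a short basis $\mathbf{T}$ and its Gram--Schmidt norm $\|\tilde{\mathbf{T}}\|$, it is really the Gentry--Peikert--Vaikuntanathan preimage-sampling theorem rather than the gadget-trapdoor algorithm native to \cite{MP12} --- a distinction your closing remark correctly identifies. Evaluated on its own, your sketch follows the standard and correct route: reduce preimage sampling to coset sampling via an arbitrary integral solution $\mathbf{t}$ of $\mathbf{A}\mathbf{t}=\mathbf{v}\bmod q$, run Klein's randomized nearest-plane recursion so that lattice membership and $\mathbf{A}\mathbf{e}=\mathbf{v}\bmod q$ hold by construction, and invoke the smoothing parameter to handle the distributional claim.

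One step in your third paragraph is stated loosely enough that, taken literally, it would not go through. The one-dimensional samples in Klein's algorithm are \emph{exact} discrete Gaussians over cosets $c_i+\mathbb{Z}$; there is no per-level ``discretization error'' to telescope in the statistical-distance sense, and a naive induction over coordinates does not typecheck because each conditional center $c_i$ depends on the previously drawn coefficients. The actual GPV analysis is multiplicative: the probability of outputting a fixed lattice point factors as $\rho_\sigma(\mathbf{e}-\mathbf{t})$ divided by the product of normalizers $\rho_{\sigma/\|\tilde{\mathbf{b}}_i\|}(c_i+\mathbb{Z})$, and the hypothesis $\sigma\geq\eta_\epsilon(\mathbb{Z})\cdot\max_i\|\tilde{\mathbf{b}}_i\|$ (which is what $\sigma\geq\omega(\sqrt{\log n})\cdot\|\tilde{\mathbf{T}}\|$ delivers for negligible $\epsilon$, using $m=O(n\log q)$ so $\log m=O(\log n)$) pins every normalizer within a factor $\bigl[\tfrac{1-\epsilon}{1+\epsilon},1\bigr]$ of a constant independent of the coset. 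This yields a $\bigl(\tfrac{1+\epsilon}{1-\epsilon}\bigr)^{m}$ multiplicative bound on each output probability relative to $\mathcal{D}_{\Lambda^{\mathbf{v}}(\mathbf{A}),\sigma}$, hence negligible statistical distance since $\epsilon$ is negligible and $m=poly(n)$. With that repair --- replacing the telescoping of statistical distances by the ratio-of-normalizers argument --- your proof is complete, and your observation that the perturbation-plus-gadget sampler of \cite{MP12} with a discrete-Gaussian convolution lemma gives an alternative derivation is accurate.
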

	
	\subsection{Complexity Assumptions}
	In this work, we consider the following well-known complexity assumptions on lattices.
	\begin{definition}[Short Integer Solution Assumption (${SIS}_{n,m,q,\beta}$) \cite{Ajtai96}]
		Let $\mathbf{A}\in\mathbb{Z}_q^{n\times(m-n)}$ be a random matrix and let $\mathbf{I}_n$ be a $n$-th identity matrix.
		We say that the ${SIS}_{n,m,q,\beta}$ assumption holds if all PPT adversaries $\mathcal{A}$ can find a solution $\mathbf{z}$ that is short (i.e., $0\leq\|\mathbf{z}\|\leq\beta$) and satisfies  $[\,\mathbf{I}_{n}\mid\mathbf{A}\,]\cdot\mathbf{z}=0$ with negligible probability $negl(\lambda)$, namely, 
		\[
		Adv_\mathcal{A}^{SIS}=
		\Pr\left[
		\begin{aligned}
			&0\le\|\mathbf{z}\|\le\beta,\\
			&[\,\mathbf{I}_{n}\mid\mathbf{A}\,]\cdot\mathbf{z}=0
		\end{aligned}
		\;\middle|\;
		\mathbf{z}\leftarrow\mathcal{A}(\mathbf{A})
		\right]<{negl}(\lambda).
		\]
	\end{definition}
	
	Since the difficulty of the SIS assumption is typically determined only by the parameters $n,q,\beta$ (under the assumption that $m$ is sufficiently large), we adopt the simplified notation ${SIS}_{n,q,\beta}$ in place of ${SIS}_{n,m,q,\beta}$ throughout this work.
	
	\begin{definition}[Learning with Errors Assumption (${LWE}_{n,m,q,\chi}$) \cite{Regev05}]
		Let $\mathbf{A}\in\mathbb{Z}_q^{(m-n)\times n}$ be a random matrix and let $\chi$ be a probability distribution over the elements of $\mathbb{Z}_q$.
		We say that the ${LWE}_{n,m,q,\chi}$ assumption holds if all PPT adversaries $\mathcal{A}$ distinguish a generated vector $\mathbf{A}\cdot\mathbf{s}+\mathbf{e}$ from a random vector $\mathbf{v}$ with negligible probability $negl(\lambda)$, namely,
		$$\begin{aligned}
			Adv_\mathcal{A}^{LWE}=\big|\Pr[1 \leftarrow &\mathcal{A}(\mathbf{A}, \mathbf{A\cdot s} + \mathbf{e})]\\
			&- \Pr [1 \leftarrow \mathcal{A}(\mathbf{A}, \mathbf{v})]\:\big|
			<negl(\lambda),
		\end{aligned}
		$$
		where $\mathbf{s}\leftarrow\chi^n$ ,$\mathbf{e}\leftarrow\chi^{m-n}$, and $\mathbf{v}\xleftarrow{\$}\mathbb{Z}_q^{m-n}$.
	\end{definition}
	In the case $\chi$ is defined as a discrete Gaussian distribution with standard deviation $\sigma$, ${LWE}_{n,m,q,\chi}$ can be denoted as ${LWE}_{n, m, q, \alpha }$, with the parameter $\alpha$ defined as $\alpha =\sigma\cdot\sqrt{2\pi}/q$.
	Besides, we can adopt the simplified notation ${LWE}_{n,q,\alpha}$ in place of ${LWE}_{n,m,q,\alpha}$ throughout this work, since the difficulty of LWE assumption is typically determined merely by the parameters $n,q,\alpha$ (under the assumption that $m$ is sufficiently large).
	
	\begin{definition}[Learning with Rounding Assumption (${LWR}_{n,m,q,p}$) \cite{BanerjeePR12}]
		Let $\mathbf{A}\in\mathbb{Z}_q^{m \times n}$ be a random matrix.
		We say that the ${LWR}_{n,m,q,p}$ assumption holds if all PPT adversaries $\mathcal{A}$ distinguish a generated vector $\lfloor \mathbf{A} \cdot \mathbf{s} \rceil_p$ from a random vector $\mathbf{v}$ with negligible probability $negl(\lambda)$, namely,
		$$\begin{aligned}Adv_\mathcal{A}^{LWR}=\big|
			\Pr[1 \leftarrow &\mathcal{A}(\mathbf{A}, \lfloor \mathbf{A} \cdot \mathbf{s} \rceil_p)] \\
			&- \Pr [1 \leftarrow \mathcal{A}(\mathbf{A}, \mathbf{v})]
			\:\big|<negl(\lambda),\end{aligned}$$
		where $\mathbf{s} \stackrel{\$}{\leftarrow} \mathbb{Z}_q^n$, and $\mathbf{v}\stackrel{\$}{\leftarrow} \lfloor \mathbb{Z}_q^m \rceil_p$.
	\end{definition}
	Also, we adopt the simplified notation ${LWR}_{n,q,p}$ in place of ${LWR}_{n,m,q,p}$ throughout this work, since the difficulty of the LWR assumption is typically determined only by the parameters $n,q,p$ (under the assumption that $m$ is sufficiently large).
	In addition, as demonstrated in \cite{BanerjeePR12}, the ${LWR}_{n,q,p}$ problem is at least as difficult as the ${LWE}_{n,q,\alpha}$ problem, 		
	where the parameter $\alpha$ is defined as $\alpha=\sqrt{\frac{(q/p)^{2}-1}{12}}\cdot\sqrt{2\pi}/q$.
	\subsection{Dynamic $k$-TAA}
	\subsubsection{Syntax}
	Figure \ref{fig:dynamicktaa} presents the framework of our scheme, involving three entities: the application provider (AP), the group manager (GM), and users (U). After U register with the GM, each AP independently announces a limit on how many times a user can authenticate anonymously. A user can authenticate up to this limit; otherwise he/she can be traced and revoked. The AP can also dynamically grant or revoke access, even if a user have not exceeded their authentication limit.
	
	\begin{figure*}[t]
		\centering
		\includegraphics[width=0.8\textwidth]{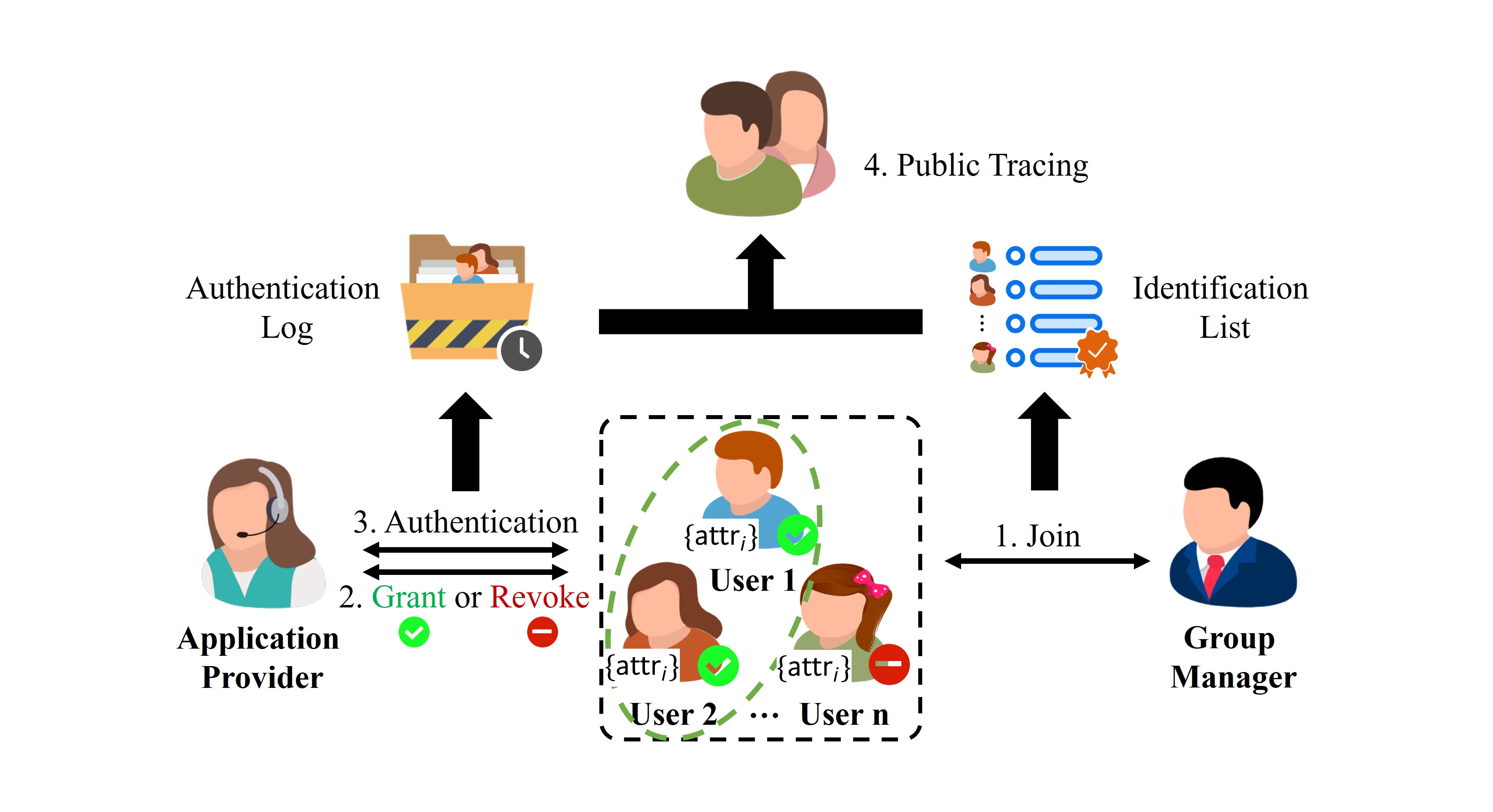}
		\caption{Framework of our dynamic $k$-TAA}
		\label{fig:dynamicktaa}
	\end{figure*}
	
	A dynamic $k$-TAA scheme consists of the following algorithms \cite{dynamic_k-taa_NS05}: 
	\begin{enumerate}
		\item[-] $(usk,upk)\leftarrow\mathbf{UserSetup}(1^{\lambda}).$ This algorithm takes as input a security parameter $1^{\lambda}$, and outputs a pair of public and secret keys ($upk$, $usk$) for each user.
		\item[-] $(gsk,gpk)\leftarrow\mathbf{GMSetup}(1^{\lambda}).$ This algorithm takes as input a security parameter $1^{\lambda}$, and outputs a pair of public and secret keys ($gpk$, $gsk$) for GM.
		\item[-] $((apk,ask),ID,k)\leftarrow\mathbf{APSetup}(1^\lambda).$ This algorithm takes as input a security parameter $1^{\lambda}$, and outputs a pair of public and secret keys $(apk,ask)$ for AP, the identity $ID$ for AP, and the allowed number of times $k$ that an authorized user can access AP's application.
		\item[-] $((mpk,msk),\mathcal{LIST})\leftarrow\mathbf{Join}(U(i,upk,usk)\leftrightarrow GM(gpk,$ $gsk)).$ This interactive algorithm is run between $U$ and $GM$. $U$ takes as input an identity $i$, his/her public-secret key pair ($upk,usk$), and outputs a zero-knowledge proof $\Pi_U$ of the secret key $usk$ corresponding to the public key $upk$. $GM$ takes as input a public-secret key pair ($gpk,gsk$), and outputs a member public-secret key pair ($mpk$, $msk$) for $U$ and an identification list $\mathcal{LIST}$, where $\mathcal{LIST}$ consists of $U$'s identity $i$, his/her public key $upk$ and member public key $mpk$.
		\item[-] $(w,ARC)\leftarrow\mathbf{GrantingAccess}(apk,ask,i,mpk).$ This algorithm takes as input AP's public-secret key pair ($apk,ask$), an authorized user's identity $i$ and his/her member public key $mpk$, and outputs a witness $w$ for $U$ and a public archive $ARC$, where $w$ provides user $i$ an evidence of being granted access, and the public archive $ARC$ is for granted users to update state information.
		\item[-] $ARC\leftarrow\mathbf{RevokingAccess}(apk,ask,i,mpk).$ 
		This algorithm takes as input AP's public-secret key pair ($apk,ask$), a revoked user's identity $i$ and his/her member public key $mpk$, and outputs a public archive $ARC$.
		\item[-] 
		$\left(\left(\mathsf{True}/\mathsf{False}\right),\mathcal{LOG}\right)\leftarrow\mathbf{Authentication}(U(ID,k,i,w,$ $upk,usk,mpk,msk,t)\leftrightarrow AP(ID,k,apk,ask)).$ This interactive algorithm is run between $U$ and $AP$. $U$ takes as input his/her identity $i$, $AP$'s identity $ID$, the allowed number of times $k$, an evidence $w$, his/her public-secret key pair $(upk,usk)$ and the member public-secret key pair $(mpk,msk)$, then outputs a zero-knowledge proof $\Pi$ to prove he/she is an authorized user, and belongs to the access group of the $AP$, with the number of accesses $t$ satisfying $t\leq k$. $AP$ takes as input its own identity $ID$, the allowed number of times $k$, and its public-secret key pair $(apk,ask)$, and outputs $\mathsf{True}$ if $U$ is authenticated successfully; otherwise, outputs $\mathsf{False}$. In addition, AP maintains an authentication log $\mathcal{LOG}$ containing transcripts of all authentication interactions, and this algorithm outputs the $\mathcal{LOG}$.
		\item[-] $(i/GM/\emptyset)\leftarrow\mathbf{PublicTracing}(\mathcal{LIST},\mathcal{LOG}).$ This algorithm
		takes as input the identification list $\mathcal{LIST}$ and the authentication log $\mathcal{LOG}$, and outputs a set consisting of a user's identity $i$, GM or $\emptyset$, which respectively represents ``the user $i$ attempts to access the application for more than $k$ times'', ``the GM cheated'' and ``No illegal entity is recorded in the $\mathcal{LOG}$'' .
	\end{enumerate}

	A dynamic $k$-TAA scheme is \emph{Correct} if:
	{\small\[
		\Pr\hspace{-0mm}\left[\,
		\begin{aligned}
			&\left(\mathsf{True},\mathcal{LOG}\right)\leftarrow\\
			&\;\mathbf{Authentication}(\\
			&\; U(ID,k,i,w,upk,\\
			&\; usk,mpk,msk,t)\leftrightarrow\\
			& \:AP(ID,k,apk,ask))
		\end{aligned}
		\hspace{0.4mm}\middle|\hspace{0.4mm}
		\begin{aligned}
			&(usk,upk)\leftarrow\mathbf{UserSetup}(1^{\lambda}),\\
			&(gsk,gpk)\hspace{0mm}\leftarrow\hspace{0mm}\mathbf{GMSetup}(1^{\lambda}),\hspace{0.5mm}\\
			&(apk,ask,ID,k)\leftarrow\mathbf{APSetup}(1^\lambda),\\
			&(msk,mpk,\mathcal{LIST})\leftarrow\mathbf{Join}(\\
			&\quad U(i,upk,usk)\leftrightarrow GM(gpk,gsk)),\\
			&(w,ARC)\leftarrow\mathbf{GrantingAccess}(\\
			&\qquad\qquad apk,ask,i,mpk).
		\end{aligned}
		\right]\hspace{-0mm}=\hspace{0mm}1
		\]}
	\subsubsection{Security Properties}
	We briefly introduce security properties of dynamic $k$-TAA here, the formal definitions of which are provided in \ref{Formal_Definitions_Requirements}.
	\begin{enumerate}
		\item[-] \textit{D-Anonymity.} 
		A user's identity cannot be publicly identified if and only if he/she authenticates no more than $k$ times, even in case of collusion among other group users, the GM, and the AP.
		\item[-] \textit{D-Detectability.} 
		A user's identity can be publicly identified by the \textbf{PublicTracing} algorithm if and only if he/she authenticates more than $k$ times.
		\item[-] \textit{D-Exculpability.} 
		No honest user can be falsely accused of authenticating with the same honest AP more than $k$ times.
		In addition, if the GM is honest, the output set of \textbf{PublicTracing} algorithm cannot contain GM, even in case of collusion among the users and the AP.
	\end{enumerate}
	\subsection{Building Blocks}		
	The following cryptographic primitives form the core building blocks of our construction: weak pseudorandom function with efficient protocols, non-interactive zero-knowledge arguments of knowledge, signature with efficient protocols, static accumulator scheme, and dynamic accumulator scheme.

	\subsubsection{Weak Pseudorandom Function with Efficient Protocols}

	Our construction employs the weak pseudorandom function (wPRF) with efficient protocols constructed in \cite{Rupeng_wPRF}.
	
	Let $p, q, n, m$ be positive integers, where $p\geq 2$, $\gamma =q/p \in n^{\omega(1)}$ denotes an odd integer, and $m \geq n\cdot (\log q+1)/(\log p-1)$. Two algorithms of wPRF with efficient protocols are as follows.
	\begin{enumerate}
		\item[-]  $\mathbf{s}\leftarrow\mathbf{KeyGen}(1^\lambda)$. On input a security parameter $1^\lambda$, this algorithm samples a vector $\mathbf{s} \xleftarrow{\$} \mathbb{Z}_{q}^{n}$, then it returns $\mathsf{sk}=\mathbf{s}$.
		\item[-] $\mathbf{y}\leftarrow\mathbf{Eval}(\mathbf{s}, \mathbf{A})$. On input a matrix $\mathbf{A} \xleftarrow{\$} \mathbb{Z}_{q}^{m\times n}$, this algorithm computes $\mathbf{y}={\lfloor \mathbf{A}\cdot \mathbf{s} \rceil}_{p}$, then it returns $\mathbf{y}$.
	\end{enumerate}
	
	\begin{theorem}[\cite{Rupeng_wPRF}]
		If the ${LWR}_{n,q,p}$ assumption holds, we additionally require that $m \geq n\cdot (\log q+1)/(\log p-1)$, the scheme in \cite{Rupeng_wPRF} is secure wPRF with weak pseudorandomness and uniqueness properties. 
		Furthermore, the wPRF satisfies the strong uniqueness property when $m \geq 2n\cdot (\log q+1)/(\log p-1)$.
	\end{theorem}
	
	\subsubsection{Zero-Knowledge Arguments of Knowledge}
	A zero-knowledge argument of knowledge (ZKAoK) system \cite{GoldwasserMR85} is run between a prover $\mathcal{P}$ and a verifier $\mathcal{V}$,
	where $\mathcal{P}$ interacts with $\mathcal{V}$ to convince $\mathcal{V}$ that he/she knows a witness for a statement, while exposing no additional information.		
	
	Formally, let $x$ be a statement and $w$ be a corresponding witness.
	Given an NP relation $\mathtt{R} = \{(x, w)\} \in \{0,1\}^* \times \{0,1\}^*$, a ZKAoK protocol for $\mathtt{R}$ satisfies the following properties:		
	\begin{enumerate}
		\item[-] \textbf{Completeness.}  For every $(x,w)\in \mathtt{R}$, $\Pr[\langle \mathcal{P}(x,w), \mathcal{V}(x)\rangle \neq1] \leq \delta_c$, where $\delta_{c}$ denotes the completeness error.
		\item[-] \textbf{Knowledge Soundness.}   
		For every $x$ and any PPT cheating prover $\hat{\mathcal{P}}$, there exists a PPT extractor $\mathcal{E}$ that accesses $\hat{\mathcal{P}}$ in a black-box manner, satisfying that for every $(x,w) \in \mathtt{R}$, $\Pr[\langle \hat{\mathcal{P}}(x,w), \mathcal{V}(x)\rangle = 1\:|\:w\leftarrow\mathcal{E}(x)]>\delta_s+\epsilon$ for a non-negligible probability, where $\delta_{s}$ denotes the soundness error.
		\item[-] \textbf{(Honest Verifier) Zero-Knowledge.}  
		For any $(x,w) \in \mathtt{R}$, 
		a PPT simulator $\mathcal{S}$ can be constructed such that 
		the two distributions below are computationally indistinguishable:
		\begin{itemize} 
			\item[(i)] The simulated transcript generated by $\mathcal{S}(x)$; 
			\item[(ii)] The real transcript viewed by an honest PPT $\mathcal{V}$ during the interaction of $\langle \mathcal{P}(x,w), \mathcal{V}(x)\rangle$.
		\end{itemize}
		%
	\end{enumerate}
	
	Furthermore, we consider non-interactive zero-knowledge proofs of knowledge (NIZKAoK), derived by applying the Fiat-Shamir transformation \cite{FiatS86} to public-coin ZKAoK.
	In addition, it can be transferred to be a signature proof of knowledge scheme if a message is used. By $\text{SPK}\{(x,w) : (x,w) \in \mathtt{R}\}[\mathfrak{m}]$, we denote a signature proof of knowledge protocol, where $\mathfrak{m}$ is a message.
	
	Our construction applies the ZKAoK introduced by Yang et al. \cite{YAZ+19_ZKP_on_Z}, which provides a ZKAoK for the relation $\mathcal{R}^*$ as follows:
	\[\mathcal{R}^{*}\hspace{0mm}=\hspace{0mm}
	\left\{\hspace{0mm}
	\begin{aligned}
		&(\mathbf{P}, \mathbf{z}, \mathcal{M}), (\mathbf{x})\hspace{0mm}\in\hspace{0mm}\bigl(\mathbb{Z}_{q}^{m\times n}\times \mathbb{Z}_{q}^{m}\times ([1,n]^3)^\ell\bigr)\hspace{0mm}\times\hspace{0mm}(\mathbb{Z}_q^n) \hspace{0mm}:\\
		&\mathbf{P}\cdot\mathbf{x}=\mathbf{z}\wedge\forall(h,i,j)\in\mathcal{M},\mathbf{x}[h]=\mathbf{x}[i]\cdot\mathbf{x}[j]
	\end{aligned}
	\right\},
	\]		
	where $\mathbf{P}$, $\mathbf{z}$ and $\mathcal{M}$ are public, and $\mathbf{x}$ is the witness. 
	They jointly define the statement: the linear relation on $\mathbf{x}$ is specified by $\mathbf{P}$ and $\mathbf{z}$, and the quadratic relations on $\mathbf{x}$ are defined by the set of tuples $\mathcal{M}$.
	
	\begin{theorem}[\cite{YAZ+19_ZKP_on_Z}]
		If the ${LWE}_{n,q,\alpha}$ and ${SIS}_{n,q,\beta}$ assumptions hold, the scheme in \cite{YAZ+19_ZKP_on_Z} is a secure NIZKAoK with negligible completeness error and soundness error.
	\end{theorem}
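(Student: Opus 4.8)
The plan is to establish the three properties of the underlying public-coin interactive argument of \cite{YAZ+19_ZKP_on_Z} --- completeness, proof of knowledge, and honest-verifier zero-knowledge --- and then to invoke the Fiat--Shamir heuristic to lift it to the non-interactive setting. First I would recall the shape of the interactive protocol: the prover commits to the witness $\mathbf{x}$ together with auxiliary masking vectors via an additively homomorphic commitment scheme that is computationally hiding and binding, with both properties resting on $LWE_{n,q,\alpha}$ and $SIS_{n,q,\beta}$; after receiving a uniform challenge from $\mathbb{Z}_q$, the prover reveals a masked linear combination that lets the verifier check the linear relation $\mathbf{A}\cdot\mathbf{x}=\mathbf{y}$ and each quadratic constraint $\mathbf{x}[h]=\mathbf{x}[i]\cdot\mathbf{x}[j]$ indexed by $\mathcal{M}$. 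Completeness then follows by direct substitution: for an honest first message and any challenge, the homomorphic structure forces every verification equation to hold identically, so the completeness error is governed only by the negligible failure probability of the commitment and opening procedures.

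For honest-verifier zero-knowledge I would construct a simulator that, given the statement and a target challenge, samples the masked responses from their prescribed marginal distribution and then reconstructs the commitments so as to satisfy the verification equations. Indistinguishability between simulated and real transcripts reduces to the computational hiding of the commitment scheme, and hence to $LWE_{n,q,\alpha}$: any distinguisher against the simulated view yields a distinguisher against the hiding game, which is negligible under the assumption.

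The substantive part is proof of knowledge with \emph{standard} (exact, non-relaxed) soundness. Here I would supply a special-soundness extractor that collects several accepting transcripts sharing the same first message but using distinct challenges, and solves the resulting system to recover candidate openings of every committed value. The binding property --- reduced to $SIS_{n,q,\beta}$, since two distinct openings of one commitment furnish a short nonzero element of the kernel of the public matrix --- guarantees these openings are unique. Consequently the recovered $\mathbf{x}$ is forced to satisfy $\mathbf{A}\cdot\mathbf{x}=\mathbf{y}$ exactly and, crucially, to satisfy the quadratic relations as well, so that the extracted witness genuinely lies in $\mathcal{R}^{*}$.

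The main obstacle I anticipate is precisely the quadratic part. The masked response contains cross-terms that are bilinear in the witness and the masking randomness, and the masking must be designed so that (i) within any single transcript these cross-terms are fully hidden, preserving zero-knowledge, yet (ii) across a sufficient set of challenge-varied transcripts they become recoverable, so that the products $\mathbf{x}[i]\cdot\mathbf{x}[j]$ can be pinned down during extraction. Reconciling hiding with extractability for the quadratic constraints, rather than the routine linear relation, is the technical core. Finally I would drive the single-execution soundness error below a constant and amplify it to negligible by parallel repetition, then apply Fiat--Shamir in the random-oracle model to conclude that the resulting NIZKAoK achieves negligible completeness and soundness errors under $LWE_{n,q,\alpha}$ and $SIS_{n,q,\beta}$.
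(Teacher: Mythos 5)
This theorem is imported verbatim from \cite{YAZ+19_ZKP_on_Z}: the present paper states it as a cited result and contains no proof of it, so your attempt can only be measured against the cited work. Your outline gets the architecture and the division of labour among assumptions right (hiding of the commitments from $LWE_{n,q,\alpha}$, binding from $SIS_{n,q,\beta}$, extraction of an \emph{exact} witness for $\mathcal{R}^*$ from binding plus challenge-varied transcripts, Fiat--Shamir to go non-interactive), but it is not a proof, and you say so yourself: the handling of the quadratic constraints, which you defer as ``the technical core,'' is precisely the contribution of \cite{YAZ+19_ZKP_on_Z} and cannot be left as an anticipated obstacle. Concretely, in that scheme each constraint $\mathbf{x}[h]=\mathbf{x}[i]\cdot\mathbf{x}[j]$ is enforced by making the verification equation a low-degree polynomial in the challenge whose coefficients involve committed cross/garbage terms; the prover commits to those auxiliary terms in advance, zero-knowledge is preserved because a single transcript reveals only one evaluation of the polynomial, and extraction interpolates the coefficients from a small number of accepting transcripts with distinct challenges, after which binding forces the product relations to hold exactly. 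Without supplying this mechanism, your extractor only recovers an $\mathbf{x}$ with $\mathbf{A}\cdot\mathbf{x}=\mathbf{y}$; nothing in your sketch forces the Hadamard constraints indexed by $\mathcal{M}$, so the extracted witness need not lie in $\mathcal{R}^*$ and standard soundness is not established.

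Two further points diverge from how the cited scheme actually works, in ways that matter for this paper's use of it. First, your plan to ``drive the single-execution soundness error below a constant and amplify by parallel repetition'' is not the route taken: the challenge is drawn from an exponentially large set (in this paper's instantiation $\hat{p}=2^{80}$ with $N_\kappa=1$ repetition), so the per-execution soundness error is already negligible, and the communication-cost estimates in the comparison section depend on exactly this ($\|\pi_1\|$ scales with $N_\kappa$; constant-soundness-plus-repetition would inflate it by a factor of order the security parameter). Second, completeness does not follow ``by direct substitution'': the responses are Gaussian-masked and subject to rejection sampling (this is where the widths $\hat{\sigma}_1,\hat{\sigma}_2$ in the paper's size formula come from), so there is an abort probability that must be argued negligible, or handled by re-running, under the chosen parameters --- this is the source of the nonzero completeness error $\delta_c$ in the theorem statement.
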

	\subsubsection{Signature with Efficient Protocols}
	
	Signature with efficient protocols (SEP) is a specific type of signature introduced by Camenisch and Lysyanskaya \cite{CamenischL02}, which enables a user to prove possession of a valid signature on a message without revealing either the signature or the message itself. Our construction employs the signature (with efficient protocols) scheme proposed in \cite{JRS23_ACS}. This scheme is executed between a signer $S$ and a user $U$. The user $U$ interacts with $S$ to get an oblivious signature on a message, by only providing $S$ a commitment of the message. 
	
	Let $n$, $q$, $q'$, $m_1$, $m_2$ and $m_3$ be positive integers, where $q$ is a prime with $q\geq q'$. Set $m=m_1+m_2$ and $\mathcal{T}\xleftarrow{}\mathbb{Z}_{q'}\setminus\{0\}$. Select gaussian sampling width $\sigma$ and $\sigma_2$, and $\sigma_1=\sqrt{\sigma^2+\sigma_2^2}$. Sample a message commitment key $\mathbf{D}\xleftarrow{\$}\mathbb{Z}_q^{n\times m_3}$.
	\begin{enumerate}
		\item[-] $\left(\left(\mathbf{A},\mathbf{B},\mathbf{u}\right),\mathbf{R}\right)\leftarrow\mathbf{KeyGen}$$(1^\lambda)$. On input a security parameter $1^\lambda$, this algorithm samples matrices $\mathbf{A}\xleftarrow{\$}\mathbb{Z}_q^{n\times m_1}$ and $\mathbf{R}\xleftarrow{\$}[-1,1]^{m_1\times m_2}$, and a vector $\mathbf{u}\xleftarrow{\$}\mathbb{Z}_q^n$, then computes $\mathbf{B}=\mathbf{AR}\mod q\in\mathbb{Z}_q^{n\times m_2}$. Finally, the algorithm outputs a public key $\mathsf{pk}=(\mathbf{A},\mathbf{B},\mathbf{u})$ and a secret key $\mathsf{sk}=\mathbf{R}$.
		\item[-] $(\tau,\mathbf{v})\leftarrow\mathbf{Sign}$$(S(\mathbf{R},(\mathbf{A},\mathbf{B},\mathbf{u}))\leftrightarrow U(\mathbf{m}))$. This algorithm is an interaction protocol between $U$ and $S$. On input a message $\mathbf{m}\in\{0,1\}^{m_3}$, $U$ samples a vector $\mathbf{r}'\xleftarrow{\$}\mathcal{D}_{\mathbb{Z}^{m_1},\sigma_2}$, then computes a commitment to message, i.e. $\mathbf{c}=\mathbf{Ar'}+\mathbf{Dm}\mod q$, $U$ sends $\mathbf{c}$ together with a proof $\Pi_{m}$ that $\mathbf{c}$ is commitment to $\mathbf{m}$ with randomness $\mathbf{r}$' to $S$. On input the secret key $\mathsf{sk}=\mathbf{R}$, the public key $\mathsf{pk}=(\mathbf{A},\mathbf{B},\mathbf{u})$, and the commitment $\mathbf{c}$ and $\Pi_{m}$, $S$ verifies if $\Pi_{m}$ is valid, if not, $S$ aborts; Otherwise, $S$ samples a vector $\mathbf{r}''\xleftarrow{\$}\mathcal{D}_{\mathbb{Z}^{m_1},\sigma_2}$, computes $\mathbf{c}'=\mathbf{c}+\mathbf{Ar''}\mod q$, then samples a tag $\tau\xleftarrow{\$}\mathcal{T}$ and a vector $\mathbf{v}'\leftarrow\mathbf{SamplePre}(\mathbf{A}_\tau,\mathbf{R},\mathbf{u+c'},\sigma)-[\mathbf{r''}^{\top}|\mathbf{0}_{m_2}]^\top$, where $\mathbf{A}_\tau=[\mathbf{A}\:|\:\tau(\mathbf{I}_n\otimes \mathbf{g})-\mathbf{B}]\in \mathbb{Z}_q^{n\times m}$. Finally, $S$ sends $(\tau,\mathbf{v}')$ to $U$. After that, $U$ computes $\mathbf{v}=\mathbf{v}'-[\mathbf{r'}^\top|\mathbf{0}_{m_2}]$ and get the signature $(\tau,\mathbf{v})$. 
		
		\item[-] $(\mathsf{True}/\mathsf{False})\leftarrow\mathbf{Verify}((\mathbf{A},\mathbf{B},\mathbf{u}),\mathbf{m},(\tau,\mathbf{v}))$. On input the public key $\mathsf{pk}=(\mathbf{A},\mathbf{B},\mathbf{u})$, a message $\mathbf{m}\in\{0,1\}^{m_3}$ and a signature $\mathsf{sig}=(\tau,\mathbf{v})$, this algorithm computes the matrix $\mathbf{A}_\tau=[\mathbf{A}\:|\:\tau(\mathbf{I}_n\otimes \mathbf{g})-\mathbf{B}]$, also parses $\mathbf{v}$ as $[\mathbf{v}_1^\top|\mathbf{v}_2^\top]$ with $\mathbf{v}_1\in\mathbb{Z}^{m_1}$ and $\mathbf{v}_2\in\mathbb{Z}^{m_2}$. Finally, the algorithm outputs $\mathsf{True}$ if $\mathbf{A}_\tau\cdot[\mathbf{v}_1^\top|\mathbf{v}_2^\top]=\mathbf{u}+\mathbf{Dm}\mod q$ with $\|\mathbf{v}_1\|_\infty\leq\sigma_1\log{m_1}$ and $\|\mathbf{v}_2\|_\infty\leq\sigma_1\log{m_2}$; otherwise, it outputs $\mathsf{False}$ to indicate failure.
	\end{enumerate}
	
	\begin{theorem}[\cite{JRS23_ACS}]
		If the ${SIS}^{}_{n,q,\beta}$ assumption holds, the scheme in \cite{JRS23_ACS} is existentially unforgeable under chosen message attacks.
	\end{theorem}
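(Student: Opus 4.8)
The plan is to reduce existential unforgeability under chosen-message attacks (EUF-CMA) to the hardness of $SIS_{n,q,\beta}$, following the programming-and-puncturing paradigm for gadget-based (Micciancio--Peikert) signatures. I would build a reduction $\mathcal{B}$ that takes an $SIS$ matrix $\mathbf{A}\in\mathbb{Z}_q^{n\times m_1}$ (whose short kernel vectors are hard to find), simulates the signing game for a forger $\mathcal{A}$, and from $\mathcal{A}$'s forgery extracts a short nonzero $\mathbf{z}$ with $\mathbf{A}\mathbf{z}=\mathbf{0}\bmod q$ and $\|\mathbf{z}\|\le\beta$. The challenge matrix is used verbatim as the public $\mathbf{A}$; the remaining public data $\mathbf{B}$, $\mathbf{u}$ and the commitment key $\mathbf{D}$ are programmed around it, with $\mathbf{u}=\mathbf{A}\mathbf{e}_u$ and $\mathbf{D}=\mathbf{A}\mathbf{E}_D$ for small, secret, $\mathcal{B}$-known $\mathbf{e}_u,\mathbf{E}_D$ (these stay statistically close to uniform by the regularity of $\mathbf{A}$).

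The core device is a \emph{vanishing tag}. The reduction fixes a target tag $\tau^{*}\in\mathcal{T}$, samples a small $\mathbf{R}'$, and sets $\mathbf{B}=\mathbf{A}\mathbf{R}'+\tau^{*}(\mathbf{I}_n\otimes\mathbf{g})$, so that for every tag
\[
\mathbf{A}_{\tau}=\left[\,\mathbf{A}\;\middle|\;(\tau-\tau^{*})(\mathbf{I}_n\otimes\mathbf{g})-\mathbf{A}\mathbf{R}'\,\right].
\]
Since $q$ is prime, the scalar $(\tau-\tau^{*})$ is invertible whenever $\tau\neq\tau^{*}$, so $\mathbf{R}'$ is a valid trapdoor for $\mathbf{A}_{\tau}$ and $\mathcal{B}$ answers every signing query with fresh tag $\tau\neq\tau^{*}$ by running \textbf{SamplePre}; the masking term $\mathbf{r}\leftarrow\mathcal{D}_{\mathbb{Z}^{m_1},\sigma_2}$ together with the convolution width $\sigma_1=\sqrt{\sigma^{2}+\sigma_2^{2}}$ makes the simulated signature statistically close to a genuine one, by the preimage-sampling guarantees of \textbf{SamplePre}. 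At $\tau=\tau^{*}$ the gadget term cancels, leaving $\mathbf{A}_{\tau^{*}}=[\,\mathbf{A}\mid-\mathbf{A}\mathbf{R}'\,]$, which carries no usable trapdoor: this is exactly the tag at which a forgery becomes exploitable.

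Conditioning on the event that no signing query drew $\tau^{*}$ (each fresh tag hits $\tau^{*}$ with probability $1/|\mathcal{T}|$) and that $\mathcal{A}$'s forgery uses $\tau^{*}$, the verification equation $\mathbf{A}_{\tau^{*}}\mathbf{v}^{*}=\mathbf{u}+\mathbf{D}\mathbf{m}^{*}$ expands to $\mathbf{A}\bigl(\mathbf{v}_1^{*}-\mathbf{R}'\mathbf{v}_2^{*}-\mathbf{e}_u-\mathbf{E}_D\mathbf{m}^{*}\bigr)=\mathbf{0}\bmod q$. The bracketed vector $\mathbf{z}$ is the SIS solution: its norm is controlled by the verification bounds $\|\mathbf{v}_1^{*}\|_\infty\le\sigma_1\log m_1$, $\|\mathbf{v}_2^{*}\|_\infty\le\sigma_1\log m_2$ together with $\|\mathbf{R}'\|$, $\|\mathbf{E}_D\|$ and the binary $\mathbf{m}^{*}$, which I would chain into the claimed bound $\|\mathbf{z}\|\le\beta$. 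That $\mathbf{z}\neq\mathbf{0}$ follows because $\mathbf{R}'$, $\mathbf{e}_u$, $\mathbf{E}_D$ retain conditional entropy given $\mathcal{A}$'s view (only $\mathbf{A}\mathbf{R}'$, $\mathbf{A}\mathbf{e}_u$, $\mathbf{A}\mathbf{E}_D$ are revealed), so a forger-chosen $\mathbf{v}^{*}$ can force $\mathbf{z}=\mathbf{0}$ with only negligible probability.

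I expect the main obstacle to be the accounting of the tag guess. A naive guess of $\tau^{*}$ succeeds with probability $1/|\mathcal{T}|$, which is a polynomial loss only when $\mathcal{T}=\mathbb{Z}_{q'}\setminus\{0\}$ is polynomially bounded; for a super-polynomial tag space one must replace the single guess by a confined-guessing / admissible-hashing partition of $\mathcal{T}$ so that signing remains simulatable on all queried tags while the forgery tag is punctured with noticeable probability, and one must recheck that this still leaves $\mathbf{R}'$ hidden enough for the nonzero argument. The second delicate point is the statistical indistinguishability of the simulated signatures, which rests on choosing $\sigma,\sigma_2$ above the smoothing threshold so that \textbf{SamplePre} and the convolution reproduce the real signing distribution. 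Once these are in place, $\mathcal{B}$'s advantage is within a polynomial (or admissible-hash) factor of $\mathcal{A}$'s, contradicting $SIS_{n,q,\beta}$.
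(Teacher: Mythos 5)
First, a structural observation: the paper does not prove this statement at all --- it is imported verbatim from \cite{JRS23_ACS}, so the only meaningful comparison is with the proof in that reference. Your overall strategy matches the paradigm underlying that proof: embed the SIS challenge as the public $\mathbf{A}$, kernel-program $\mathbf{u}=\mathbf{A}\mathbf{e}_u$ and $\mathbf{D}=\mathbf{A}\mathbf{E}_D$, puncture via $\mathbf{B}=\mathbf{A}\mathbf{R}'+\tau^{*}(\mathbf{I}_n\otimes\mathbf{g})$ so that $\mathbf{R}'$ is a usable gadget trapdoor for every $\mathbf{A}_\tau$ with $\tau\neq\tau^{*}$, and argue nonzeroness of the extracted vector from the conditional entropy of $(\mathbf{R}',\mathbf{e}_u,\mathbf{E}_D)$. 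Those ingredients are sound.

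The genuine gap is your treatment of tags. Your reduction extracts only from forgeries whose tag equals the single guessed $\tau^{*}$, conditioned on no signing query having drawn $\tau^{*}$. But in this scheme tags are sampled by the signer uniformly from $\mathcal{T}=\mathbb{Z}_{q'}\setminus\{0\}$, which is \emph{polynomially small} ($q'\leq q_0$; $q'=2^{10}$ in the paper's instantiation). Hence with $Q$ signing queries, tag repetitions are the normal case, and an adversary may always output its forgery on a tag $\tau^{*}=\tau_i$ that already appeared in query $i$ (a new message or new $\mathbf{v}$ suffices for a valid forgery). Your conditioning event then has negligible --- against that adversary, zero --- probability, and even the benign factor $(1-1/|\mathcal{T}|)^{Q}$ is negligible once $Q\gg|\mathcal{T}|$, so the advantage accounting collapses. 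A complete proof, as in \cite{JRS23_ACS}, must split on whether the forgery tag collides with a queried tag: in the collision case one guesses the query index $i$, programs $\mathbf{B}=\mathbf{A}\mathbf{R}'+\tau_i(\mathbf{I}_n\otimes\mathbf{g})$, answers the punctured query \emph{without} a trapdoor using precisely the kernel-programming of $\mathbf{u}$ and $\mathbf{D}$ (e.g.\ sampling $\mathbf{v}_2$ and setting $\mathbf{v}_1=\mathbf{R}'\mathbf{v}_2+\mathbf{e}_u+\mathbf{E}_D\mathbf{m}$, with a separate distributional argument that this is statistically close to \textbf{SamplePre} output), and extracts the SIS solution from the \emph{difference} between the forgery and the honestly issued signature at that tag. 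Your closing worry about confined guessing points the opposite way: here $|\mathcal{T}|$ is small, and the missing case is tag reuse, not a super-polynomial tag space. One further parameter slip: in this paper $q=q_0^{e_3}$ is a prime power, not a prime; invertibility of $\tau-\tau^{*}$ modulo $q$ holds because $0<|\tau-\tau^{*}|<q'\leq q_0$ with $q_0$ prime (so the difference is a unit mod $q_0^{e_3}$), not because $q$ itself is prime.
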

	
	Furthermore, SEP is equipped with two efficient protocols: one to prove knowledge of an  opening of commitment and another to prove knowledge of a signature-message pair.
	For the above two protocols, a recent framework of ZKAoK from \cite{YAZ+19_ZKP_on_Z} is applied to instantiate them \cite{JRS23_ACS}. 
	
	\subsubsection{Accumulator}\label{dynamic_acc}
	
	In our construction, a static accumulator is used to ensure limited-times authentication and a dynamic accumulator is used to dynamically manage users. 
	
	We employ the static accumulator constructed by Libert et al. in \cite{LLNW16_static_acc}. 			
	This scheme proceeds as follows.
	
	Let $n$, $m$, $k$ and $q$ be positive integers, where $k=\lceil\log{q}\rceil$ and $m=2nk$.
	\begin{enumerate}
		\item[-]  $\mathbf{A}\leftarrow\mathbf{Setup}(1^\lambda).$ On input a security parameter $1^\lambda$, this algorithm picks a matrix $\mathbf{A}\xleftarrow{\$}\mathbb{Z}_q^{n\times m}$ randomly, and outputs $pp=\mathbf{A}$.
		\item [-] $\mathbf{u}\leftarrow\mathbf{Acc}(pp,D)$. On input a set $D$ consisting of the following $N$ binary vectors: $\mathbf{d}_0\in\{0,1\}^{nk},\ldots,\mathbf{d}_{N-1}\in\{0,1\}^{nk}$, this algorithm runs as follows.
		
		For any $j\in[0,N-1]$, $(j_1,\ldots,j_\ell)\in\{0,1\}^\ell$ denotes the binary decomposition of $j$, and $\mathbf{d}_j=\mathbf{u}_{j_1,\ldots,j_\ell}$. Form all nodes of tree with depth $\ell=\log N$ based on the $N$ leaves $\mathbf{u}_{0,0,...,0},\ldots,\mathbf{u}_{1,1,...,1}$ in the following way:
		$$\left\{
		\begin{array}{l@{\hspace{0.2cm}}l}
			\begin{aligned}
				\mathbf{u}_{b_1,...,b_i}&=h_\mathbf{A}(\mathbf{u}_{b_1,...,b_i,0},\mathbf{u}_{b_1,...,b_i,1})\in\{0,1\}^{nk},\\
				&\text{for }\forall (b_1,\ldots,b_i)\in\{0,1\}^i
			\end{aligned}; & \text{if }i\in[\ell]\\
			&\\
			\mathbf{u}= h_\mathbf{A}(\mathbf{u}_0,\mathbf{u}_1)\in\{0,1\}^{nk}; & \text{if }i= 0
		\end{array}
		\right.,$$
		where $i$ represents the depth of tree. 
		The algorithm outputs the accumulator value $\mathbf{u}$.
		\item[-] $\mathbf{w}\leftarrow\mathbf{Wit}(pp,D,\mathbf{d}).$ On input a set $D$ and a vector $\mathbf{d}$, this algorithm runs as follows. 
		
		If $\mathbf{d}\not\in D$, the algorithm outputs $\bot$. Otherwise, $\mathbf{d}=\mathbf{d}_j\in D$ for some $j\in[0,N-1]$
		with binary decomposition $(j_1,\ldots,j_\ell)$, the algorithm outputs the witness $\mathbf{w}$ defined as:
		$$\begin{aligned}
			\mathbf{w}=((j_1,\ldots,j_\ell),&(\mathbf{u}_{j_1,\ldots,j_{\ell-1},\bar{j}_\ell},\ldots,\mathbf{u}_{j_1,\bar{j}_2},\mathbf{u}_{\bar{j}_1}))\\
			&\in\{0,1\}^\ell\times\left(\{0,1\}^{nk}\right)^\ell,
		\end{aligned}$$ where $\mathbf{u}_{j_1,...,j_{\ell-1},\bar{j_\ell}},\ldots,\mathbf{u}_{j_1,\bar{j_2}},\mathbf{u}_{\bar{j_1}}$ are calculated as specified in the algorithm $\mathbf{Acc}_\mathbf{}(pp,D)$.
		\item[-] $(\mathsf{True}/\mathsf{False})\leftarrow\mathbf{Verify}(pp,\mathbf{u},\mathbf{d},\mathbf{w})$. On input a accumulator value $\mathbf{u}$, a vector $\mathbf{d}$, and a witness $\mathbf{w}$ parsed as:
		$$\mathbf{w}=((j_1,\ldots,j_\ell),(\mathbf{w}_\ell,\ldots,\mathbf{w}_1))\hspace{0mm}\in\hspace{0mm}\{0,1\}^\ell\times(\{0,1\}^{nk})^\ell,$$
		from bottom to top, this algorithm generates the auxiliary node $\mathbf{v}_\ell,\mathbf{v}_{\ell-1},\ldots,\mathbf{v}_1,\mathbf{v}_0\in\{0,1\}^{nk}$ recursively, which is as follows: $\mathbf{v}_\ell=\mathbf{d}$, and
		$$\mathbf{v}_{i}=\begin{cases}h_{\mathbf{A}}(\mathbf{v}_{i+1},\mathbf{w}_{i+1}),\:\mathrm{if}\:j_{i+1}=0;\\h_{\mathbf{A}}(\mathbf{w}_{i+1},\mathbf{v}_{i+1}),\:\mathrm{if}\:j_{i+1}=1,\end{cases}$$
		for $\forall i\in\{\ell-1,\ldots,1,0\}$.
		Then it returns $\mathsf{True}$ if $\mathbf{v}_0 = \mathbf{u}$. Otherwise, it returns $\mathsf{False}$.
	\end{enumerate}
	
	\begin{theorem}[\cite{LLNW16_static_acc}]
		If the ${SIS}^{\infty}_{n,q,1}$ assumption holds, the scheme in \cite{LLNW16_static_acc} is a secure accumulator with collision-resistance property.
	\end{theorem}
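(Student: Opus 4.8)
The plan is to reduce the collision-resistance of the Merkle-tree accumulator to the collision-resistance of the underlying SIS-based compression function $h_\mathbf{A}$, and then reduce that in turn to $\mathrm{SIS}^{\infty}_{n,q,1}$. Recall that $h_\mathbf{A}\colon\{0,1\}^{nk}\times\{0,1\}^{nk}\to\{0,1\}^{nk}$ is given by $h_\mathbf{A}(\mathbf{u}_0,\mathbf{u}_1)=\mathsf{bin}(\mathbf{A}_0\mathbf{u}_0+\mathbf{A}_1\mathbf{u}_1 \bmod q)$, where $\mathbf{A}=[\,\mathbf{A}_0\mid\mathbf{A}_1\,]$ and $\mathsf{bin}$ is the injective binary decomposition. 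First I would observe that any collision of $h_\mathbf{A}$ yields a short nonzero SIS solution: if distinct inputs $(\mathbf{u}_0,\mathbf{u}_1)\neq(\mathbf{u}_0',\mathbf{u}_1')$ have equal output, then injectivity of $\mathsf{bin}$ forces $\mathbf{A}_0\mathbf{u}_0+\mathbf{A}_1\mathbf{u}_1=\mathbf{A}_0\mathbf{u}_0'+\mathbf{A}_1\mathbf{u}_1'\bmod q$, so $\mathbf{z}=[(\mathbf{u}_0-\mathbf{u}_0')^{\top}\mid(\mathbf{u}_1-\mathbf{u}_1')^{\top}]^{\top}$ satisfies $\mathbf{A}\mathbf{z}=\mathbf{0}\bmod q$ with $\mathbf{z}\neq\mathbf{0}$ and $\|\mathbf{z}\|_\infty\le 1$, contradicting the $\mathrm{SIS}^{\infty}_{n,q,1}$ assumption.

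Next I would set up the reduction on the accumulator side. An adversary against collision-resistance outputs a set $R=\{\mathbf{d}_0,\ldots,\mathbf{d}_{N-1}\}$, a value $\mathbf{d}^{*}\notin R$, and a witness $\mathbf{w}^{*}=((j_1,\ldots,j_\ell),(\mathbf{w}_\ell,\ldots,\mathbf{w}_1))$ with $\mathbf{TVerify}_\mathbf{A}(\mathbf{u},\mathbf{d}^{*},\mathbf{w}^{*})=\mathsf{True}$, where $\mathbf{u}=\mathbf{TAcc}_\mathbf{A}(R)$ and $N=2^\ell$ so that every position is occupied. The reduction recomputes two paths to the root: the verification path $\mathbf{v}_\ell=\mathbf{d}^{*},\mathbf{v}_{\ell-1},\ldots,\mathbf{v}_0=\mathbf{u}$ dictated by $\mathbf{w}^{*}$, and the honest path $\mathbf{v}_\ell'=\mathbf{d}_j,\mathbf{v}_{\ell-1}',\ldots,\mathbf{v}_0'=\mathbf{u}$ together with its honest siblings $\mathbf{w}_i'$, where $j$ has binary expansion $(j_1,\ldots,j_\ell)$. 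Both paths use the same ordering bits and terminate at the same root $\mathbf{v}_0=\mathbf{v}_0'=\mathbf{u}$, yet $\mathbf{v}_\ell=\mathbf{d}^{*}\neq\mathbf{d}_j=\mathbf{v}_\ell'$ because $\mathbf{d}^{*}\notin R$.

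The crux is a divergence argument locating where the two paths merge. Let $i$ be the smallest index in $\{1,\ldots,\ell\}$ with $\mathbf{v}_i\neq\mathbf{v}_i'$; such $i$ exists since the leaves differ, and $\mathbf{v}_{i-1}=\mathbf{v}_{i-1}'$ by minimality. At level $i$ both parents are obtained by applying $h_\mathbf{A}$ to $(\mathbf{v}_i,\mathbf{w}_i)$ respectively $(\mathbf{v}_i',\mathbf{w}_i')$ in the order fixed by $j_i$; since the outputs coincide while the inputs differ (as $\mathbf{v}_i\neq\mathbf{v}_i'$), this is exactly a collision of $h_\mathbf{A}$, which the reduction converts into a SIS solution as in the first paragraph. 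I expect this divergence step to be the main obstacle to state cleanly — one must argue rigorously that such a level always occurs and that the two hash inputs genuinely differ — but given the injectivity of $\mathsf{bin}$ and the binary structure of the nodes it is a finite pigeonhole argument along a path of length $\ell=\log N$, not a technical difficulty. Since the reduction is deterministic given the adversary's output, the SIS-solver's advantage equals the accumulator adversary's advantage up to the negligible chance that no collision is produced, so collision-resistance follows from $\mathrm{SIS}^{\infty}_{n,q,1}$.
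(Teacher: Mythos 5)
Your proposal is correct and takes essentially the same route as the source: the paper itself does not reprove this theorem but imports it from \cite{LLNW16_static_acc}, whose proof is exactly your two-step reduction --- locate the first level at which the adversarial verification path and the honest path (both anchored at the same root $\mathbf{u}$ and using the same ordering bits $(j_1,\ldots,j_\ell)$) diverge, extract a collision of $h_\mathbf{A}$ there, and convert it via injectivity of $\mathsf{bin}$ into a nonzero solution $\mathbf{z}$ with $\|\mathbf{z}\|_\infty\le 1$ for ${SIS}^{\infty}_{n,q,1}$. The only cosmetic slack is your closing sentence: the reduction yields a collision deterministically whenever the adversary succeeds, so there is no additional ``negligible chance that no collision is produced'' to account for.
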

	Our construction applies the dynamic accumulator introduced by Zhao et al. \cite{ZYH24_dynamic_acc} to implement dynamic grant and revocation. 
	This scheme works as follows.
	
	Let $n$, $m$, $k$, $q$, $N$ and $l$ be positive integers, where $N\in$ $\mathbb{N}>2$, $l=\lceil\log_{}{N}\rceil$, $q>N$, $k=\lceil\log_{}{q}\rceil$, and $m=\mathcal{O}(n\log q)$.
	\begin{enumerate}
		\item[-] $\left((\mathbf{U},\mathbf{A}), \mathbf{R}\right)\leftarrow\mathbf{KeyGen}(1^\lambda){.}$ 
		On input a security parameter $1^\lambda$, this algorithm first runs $(\mathbf{A},\mathbf{T})\leftarrow$ \textbf{TrapGen}$(1^n,1^m,q)$. Second, it selects a random matrix $\mathbf{U}=[\mathbf{U}_0\mid\cdots\mid\mathbf{U}_{N-1}]\stackrel{s}{\leftarrow}\mathbb{Z}_{q}^{n\times lN}$, where $\mathbf{U}_i\in\mathbb{Z}_q^{n\times l}$. Let $\mathfrak{s}_1=\mathfrak{s}_1(\mathbf{T})$, and the algorithm generates a Gaussian parameter $s$ satisfying $s \geq \mathfrak{s}_{1}^{2}\cdot\omega(\sqrt{\log n})$. For every $i\in[N]$, it samples $\mathbf{R}_{i}\leftarrow$ \textbf{SamplePre}$(\mathbf{A},\mathbf{T},s,\mathbf{U}_{i})$, such that $\mathbf{A}\mathbf{R}_i=\mathbf{U}_i$. Finally, the algorithm outputs $pk=(\mathbf{U},\mathbf{A})$, $sk=\mathbf{R}=(\mathbf{R}_i)_{i\in[N]}\in\mathbb{Z}^{m\times lN}$.
		\item[-] $(\mathbf{u},\mathbf{r})\leftarrow\textbf{Acc}$$(pk,X\subseteq[N]){.}$ On input the public key $pk$ and a set $X\subseteq[N]$, this algorithm parses $\mathbf{m}=[\mathbf{m}_0^{\top}\mid\cdots\mid\mathbf{m}_{N-1}^{\top}]^{\top}\in\{0,1\}^{lN}$, and for $i\in[N]$, if $i\in X$, $\mathbf{m}_i$ is a $l$-dimensional all-ones vector, i.e. $\mathbf{m}_i=1^l$; otherwise, $\mathbf{m}_i$ is a $l$-dimensional all-zeros vector, i.e. $\mathbf{m}_i=0^l$. Then the algorithm chooses a vector $\mathbf{r}\leftarrow D_{\mathbb{Z}^{m},s}$, and generates an accumulator value via $\mathbf{u}=\mathbf{Um}+\mathbf{Ar}\in\mathbb{Z}_q^{n}$. Finally, it outputs the accumulator value $\mathbf{u}$ and a auxiliary vector $\mathbf{r}$.
		\item[-] $\mathbf{w}\leftarrow\textbf{WitGen}$$(sk,X,\mathbf{r},i\in[N]){.}$ 
		On input the secret key $sk$, an accumulator value $\mathbf{u}$, an element $i\in[N]$, and a vector $\mathbf{r}$, this algorithm computes a witness $\mathbf{w}=\sum_{j\in[N]\setminus\{i\}}\mathbf{R}_{j}\mathbf{m}_{j}+\mathbf{r}\in\mathbb{Z}_{q}^{m}$, where $\mathbf{m}_{j}$ is computed according to $X$. Finally, it outputs the witness $\mathbf{w}$.
		\item[-] $(\mathsf{True}/\mathsf{False})\leftarrow\textbf{Verify}$$(pk,\mathbf{u},j,\mathbf{w}){.}$ On input the public key $pk$, an accumulator value $\mathbf{u}$, an element $j$, and a witness $\mathbf{w}$, this algorithm outputs $\mathsf{True}$ if $\|\mathbf{w}\|_\infty\leq\gamma$ (or $\|\mathbf{w}\|_2\leq\gamma_1$) and $\mathbf{u}=\mathbf{Aw}+\mathbf{U}\mathbf{y}$; otherwise, it outputs $\mathsf{False}$, where $\gamma=O(s\sqrt{lN})<q$, $\gamma_{1}=O(s(m+lN))<q$, and $\mathbf{y}=\vartheta_{j}^{N}$.
		\item[-] $\left(\mathbf{u}',\text{state}\right)\leftarrow\textbf{UpdateAcc}$$(pk,sk,j,\mathbf{u},X){.}$ On input the public key $pk$, the secret key $sk$, an element $j$, an accumulator value $\mathbf{u}$, and a set $X$, this algorithm sets $X=X\cup\{j\}$, $\mathbf{m}'_j=1^l$ and $\mathbf{m}^{*} = \mathbf{m}'_j - \mathbf{m}_{j}$ when $j\notin X$ needs to be added in $X$. And it sets $X=X\setminus\{j\}$, $\mathbf{m}_j^{\prime}=0^{l}$ and $\mathbf{m}^{*}=\mathbf{m}_j^{\prime}-\mathbf{m}_j$ when $j\in X$ needs to be deleted from $X$. Now, the algorithm generates an updated accumulator value via $\mathbf{u}'=\mathbf{u}+\mathbf{U}_j\mathbf{m}^*\in\mathbb{Z}_q^n$, then computes $\mathbf{t}_j=\mathbf{R}_j\mathbf{m}^*\in\mathbb{Z}^m$. Finally, the algorithm outputs the new accumulator value $\mathbf{u}'$, and a state information $\mathsf{state}=((j,\mathbf{m}^{*}),\mathbf{t}_j)$.
		\item[-] $\mathbf{w}'\leftarrow\textbf{UpdateWit}$$(i,\mathbf{w},\mathsf{state}){.}$ On input an element $i$, a witness $\mathbf{w}$, and a state information, this algorithm computes an updated witness $\mathbf{w}'=\mathbf{w}+\mathbf{t}_j\in\mathbb{Z}^m$. Finally, it outputs the new witness $\mathbf{w}'$ for user $i$.
	\end{enumerate}
	
	\begin{theorem}[\cite{ZYH24_dynamic_acc}]
		If the ${SIS}^{}_{n,q,\beta}$ assumption holds, the scheme in \cite{ZYH24_dynamic_acc} is a secure dynamic accumulator with collision resistance property.
	\end{theorem}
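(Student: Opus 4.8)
The plan is to reduce collision resistance to the $SIS_{n,q,\beta}$ assumption through a simulation that secretly holds the whole master key. The first step is to set up the public parameters so that the public matrix $\mathbf{A}$ is the SIS challenge while the simulator can still answer every oracle query. Rather than running $\mathbf{TrapGen}$ followed by $\mathbf{SamplePre}$ (which would require the trapdoor), I would take the SIS challenge matrix $\mathbf{A}$ directly, sample each $\mathbf{R}_i\leftarrow\mathcal{D}_{\mathbb{Z}^{m},s}$, and \emph{define} $\mathbf{U}_i=\mathbf{A}\mathbf{R}_i$. By the $\mathbf{TrapGen}$ theorem the real $\mathbf{A}$ is within $2^{-\Omega(n)}$ of uniform, and by the preimage-sampling guarantee of the $\mathbf{SamplePre}$ theorem the pair $(\mathbf{A},\mathbf{A}\mathbf{R}_i)$ is statistically close to $(\mathbf{A},\mathbf{U}_i)$ with $\mathbf{U}_i$ uniform and $\mathbf{R}_i$ a Gaussian preimage; hence the simulated key generation is statistically indistinguishable from the real one. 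The payoff is that the simulator now knows all $\mathbf{R}_i$, so it answers every $\mathbf{Acc}$, $\mathbf{WitGen}$, $\mathbf{UpdateAcc}$, and $\mathbf{UpdateWit}$ query exactly as in the scheme.

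Next I would analyze a winning forgery. Suppose the adversary outputs a set $X$, an index $j\notin X$, and a witness $\mathbf{w}'$ passing $\mathbf{Verify}$ against the honest accumulator $\mathbf{u}=\mathbf{U}\mathbf{m}+\mathbf{A}\mathbf{r}$. Verification gives $\mathbf{u}=\mathbf{A}\mathbf{w}'+\mathbf{U}\cdot\vartheta_j^N=\mathbf{A}\mathbf{w}'+\mathbf{U}_j\mathbf{1}^\ell$ with $\|\mathbf{w}'\|$ small, while $j\notin X$ forces $\mathbf{m}_j=\mathbf{0}^\ell$, so $\mathbf{u}=\sum_{k\in X}\mathbf{U}_k\mathbf{1}^\ell+\mathbf{A}\mathbf{r}$. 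Substituting $\mathbf{U}_k=\mathbf{A}\mathbf{R}_k$ into both expressions and subtracting yields $\mathbf{A}\mathbf{z}=\mathbf{0}$ with
$$\mathbf{z}=\mathbf{w}'+\mathbf{R}_j\mathbf{1}^\ell-\sum_{k\in X}\mathbf{R}_k\mathbf{1}^\ell-\mathbf{r}.$$
Every summand is short: $\mathbf{w}'$ by the verification bound $\gamma$, and the $\mathbf{R}_k\mathbf{1}^\ell$ together with $\mathbf{r}$ because they are Gaussian with parameter $s$ and $|X|\le N=\mathrm{poly}(n)$. So $\|\mathbf{z}\|\le\beta$ for a suitably chosen SIS norm, and $\mathbf{z}$ is a valid SIS solution the moment we certify $\mathbf{z}\neq\mathbf{0}$.

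The hard part will be exactly this: ruling out $\mathbf{z}=\mathbf{0}$, i.e. the degenerate case $\mathbf{w}'=\sum_{k\in X}\mathbf{R}_k\mathbf{1}^\ell+\mathbf{r}-\mathbf{R}_j\mathbf{1}^\ell$. I would argue this information-theoretically. Conditioned on the adversary's entire view — the public key, which pins down only $\mathbf{U}_j=\mathbf{A}\mathbf{R}_j$, together with all oracle responses — the preimage $\mathbf{R}_j$ retains large min-entropy, since with $m=\mathcal{O}(n\log q)$ the lattice $\Lambda^\perp(\mathbf{A})$ contains many short vectors and the conditional discrete Gaussian is not concentrated on any single point. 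Because $j\notin X$, the quantity $\mathbf{R}_j\mathbf{1}^\ell$ never appears in any answer the adversary sees, so $\mathbf{w}'$ is produced independently of the residual randomness of $\mathbf{R}_j$; therefore $\Pr[\mathbf{z}=\mathbf{0}]=\Pr[\mathbf{R}_j\mathbf{1}^\ell=\text{fixed vector}]\le 2^{-\Omega(n)}$. Combining the pieces, a forger winning with non-negligible probability produces a short nonzero $\mathbf{z}$ with $\mathbf{A}\mathbf{z}=\mathbf{0}$ with non-negligible probability, contradicting $SIS_{n,q,\beta}$. The remaining care is to confirm that $\mathbf{R}_j$ genuinely does not leak through the update oracles for a non-member index and to track the norm growth so that $\beta$ is fixed consistently with the verification bounds.
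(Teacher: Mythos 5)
You should first note that the paper you are comparing against does not actually prove this statement: it is imported verbatim from \cite{ZYH24_dynamic_acc} as a building-block theorem, so your proposal can only be judged against the standard argument for such schemes (which is indeed the one used in the cited work). Your template is the right one and matches it: embed the SIS challenge as $\mathbf{A}$, sample $\mathbf{R}_i$ Gaussian and \emph{define} $\mathbf{U}_i=\mathbf{A}\mathbf{R}_i$ (note $\mathbf{R}_i\in\mathbb{Z}^{m\times l}$ is a matrix, its $l$ columns sampled i.i.d.; the statistical closeness of $(\mathbf{A},\mathbf{A}\mathbf{R}_i)$ to $(\mathbf{A},\text{uniform})$ comes from the regularity/leftover-hash lemma for Gaussians, not from the \textbf{SamplePre} theorem as you cite it, but this is cosmetic), simulate all oracles with the known $\mathbf{R}_i$, and extract $\mathbf{z}=\mathbf{w}'+\mathbf{R}_j\mathbf{1}^\ell-\sum_{k\in X}\mathbf{R}_k\mathbf{1}^\ell-\mathbf{r}$ with $\mathbf{A}\mathbf{z}=\mathbf{0}$. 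The norm bookkeeping ($\beta\geq\gamma_1+O(Nls\sqrt{m})$, say) is routine, as you indicate.

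The genuine gap is exactly the step you flagged and then waved off: your assertion that, because $j\notin X$, ``the quantity $\mathbf{R}_j\mathbf{1}^\ell$ never appears in any answer the adversary sees'' is false in the dynamic game. The condition $j\notin X$ refers only to the \emph{final} set; if the adversary has ever had index $j$ granted or revoked, then $\mathbf{UpdateAcc}$ published the state component $\mathbf{t}_j=\mathbf{R}_j\mathbf{m}^*=\pm\mathbf{R}_j\mathbf{1}^\ell$ --- and in the surrounding $k$-TAA construction this value sits in the \emph{public} archive $ARC$ precisely so that members can update witnesses. Worse, in the grant-then-revoke scenario the degenerate case $\mathbf{z}=\mathbf{0}$ is not merely possible but constructible: while a member, $j$ holds $\mathbf{w}_j$ with $\mathbf{A}\mathbf{w}_j=\mathbf{u}-\mathbf{U}_j\mathbf{1}^\ell$; after revocation $\mathbf{u}'=\mathbf{u}-\mathbf{U}_j\mathbf{1}^\ell$, and $\mathbf{w}'=\mathbf{w}_j-\mathbf{t}_j$ satisfies $\mathbf{A}\mathbf{w}'=\mathbf{u}'-\mathbf{U}_j\mathbf{1}^\ell$ with $\|\mathbf{w}'\|_\infty\leq\gamma+O(ls\sqrt{\log m})$, which passes the (loose) bound $\gamma=O(s\sqrt{lN})$ since $N\gg l$. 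Against such an adversary your reduction outputs $\mathbf{z}=\mathbf{0}$ with probability essentially $1$, so the conditional min-entropy argument proves nothing. A correct proof must therefore either confine the information-theoretic step to indices never touched by an update query (i.e., prove the theorem in a model where the forged index was never a member, matching the restrictions actually imposed in \cite{ZYH24_dynamic_acc}), or add a mechanism --- e.g., fresh re-randomization of $\mathbf{r}$ at each update --- that restores residual entropy in $\sum_{k\in X}\mathbf{R}_k\mathbf{1}^\ell+\mathbf{r}$ even given the published $\mathbf{t}_j$'s. As written, your argument only establishes collision resistance in the static sense.
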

	
	\section{Main Construction}
	
	\subsection{Dynamic $k$-TAA Scheme}\label{Dynamic_k-TAA}
	
	In this section, we propose a lattice-based dynamic $k$-TAA scheme with attribute-based credentials.
	Our scheme is derived from the wPRF with efficient protocols (with strong uniqueness) $\mathsf{PRF}=(\mathbf{KeyGen}$, $\mathbf{Eval})$ \cite{Rupeng_wPRF}, the accumulator scheme $\mathsf{ACC}=(\mathbf{Setup}$, $\mathbf{Acc}$, $\mathbf{Wit}$, $\mathbf{Verify})$ \cite{LLNW16_static_acc}, the dynamic accumulator scheme $\mathsf{DACC}=(\mathbf{KeyGen}$, $\mathbf{Acc}$, $\mathbf{WitGen}$, $\mathbf{Verify}$, $\mathbf{UpdateAcc}$, $\mathbf{UpdateWit})$ \cite{ZYH24_dynamic_acc}, the signature scheme $\mathsf{SIGN}=(\mathbf{KeyGen}$, $\mathbf{Sign}$, $\mathbf{Verify})$ \cite{JRS23_ACS}, a ZKAoK framework \cite{YAZ+19_ZKP_on_Z}, and a hash function defined as $\mathsf{H}:\{0,1\}^{*}\to\mathbb{Z}_{q_1}^{m_D\times n}\times\mathbb{Z}_{q_1}^{m_D\times n}$, it works as follows.
	\begin{enumerate}
		\item[-]  
		\textbf{Public Parameters.} We choose positive integers $q_0$, $p$, $q$, $q_1$ and $q'$. Let $e_1, e_2, e_3$ be positive integers that $1\leq e_1<e_2<e_3$,  and $q_0$ be a prime number, we set $p=q_0^{e_1}$, $q_1=q_0^{e_2}$, $q=q_0^{e_3}$, where $q_1$ is modulus for wPRF and $q$ is modulus for other components, and $q'$ is modulus for tag space of signature ($q'<q_0$). Let $k_{1}=\lceil\log p\rceil$, $k_{2}=\lceil\log q_1\rceil$ and $k=\lceil\log{q}\rceil$. Then we choose integers $n\geq 2$, $n_1\geq 2$, $m=\mathcal{O}(n\log{q})$, and $m^{\prime}=n_{}k$, where $n$ is the SIS dimension for signature and dynamic accumulator, $n_1$ is the SIS dimension for other components.
		Let $N(=|\mathbb{Z}_{q^{\prime}}|)\in\mathbb{N}$ with $N>2$, and $l=\lceil \log_{2}{N}\rceil$ (we require $q>N$). 
		Set $m_A=\mathcal{O}(n\log{q})$,
		$m_S=\mathcal{O}(n_1\log{q})$, 
		and $m_D \geq 2n\cdot (\log q_1+1)/(\log p-1)$. 
		Select a positive integer $q$ with $q^{\prime}\leq q$, and set the tag space $\mathcal{T}\leftarrow\mathbb{Z}_{q^{\prime}}\setminus\{0\}$. Suppose that  $m_1=m$, and $m_2= m^{\prime}$. 
		Set $m_e$ as the message length of signature,  $\sigma=\mathcal{O}\left(\sqrt{m_1}+\sqrt{m_2}\right)$ (Gaussian sampling width for signatures), $\sigma_2=\max\mathcal{O}\left(\left(\sqrt{m_1}+\sqrt{m_2}\right),\omega\left(\log{m_1}\right)\right)$ (Gaussian sampling width for commitment randomness), and $\sigma_1=\sqrt{\sigma^2+\sigma_2^2}$. 
		In addition, the modulus employed in the ZKAoK of \cite{YAZ+19_ZKP_on_Z} is $q$.
		Finally, $\mathbf{E}$ is employed to map the sum of two wPRF inputs into the value space of the static accumulator, while $\mathbf{F}$ is used to map the wPRF's output into the message space of the signature scheme, and matrix $\mathbf{A}$ serves as a public input for the wPRF.
		\\
		\item[-] 
		\textbf{UserSetup.} 
		A user $i$ runs $\mathsf{PRF}.\mathbf{KeyGen}(1^\lambda)$ to generate his/her secrete key  $usk=\mathbf{k}$, then computes $\mathsf{PRF}.\mathbf{Eval}(\mathbf{k},\mathbf{A}_{})$ to get his/her public key $upk=\mathbf{y}$. 
		\\
		\item[-] 
		\textbf{GMSetup.}
		The GM runs $\mathsf{SIGN}.\mathbf{KeyGen}(1^\lambda)$ to generate a public key $gpk=(\mathbf{A}_{GM},\mathbf{B}_{GM},\mathbf{u})$ and a secret key $gsk=\mathbf{R}_{GM}$. 
		The GM also maintains an identification list, denoted as $\mathcal{LIST}$, which consists of entries of the form $(i, \mathbf{y}, \tau)$, where $i$ denotes user $i$, $\mathbf{y}$ is the identification representation of the user, and $\tau$ is referred to as the membership public key of the user. \\
		\item[-] 
		\textbf{APSetup.} Each AP publishes the tuple  $(ID_{AP}$, $k_{AP})$, where $ID_{AP}$ denotes the AP's identity and $k_{AP}$ represents the allowed number of times that a group member can access the AP's application.
		Thus, each entity in the system is able to generate the tag bases set $\mathcal{B}$ corresponding to the AP, where each element is a pair $(\mathbf{B}_j, \check{\mathbf{B}}_j) = \mathsf{H}(ID_{AP}, k_{AP}, j)$ for $j \in [1, k_{AP}]$.
		In addition, each AP executes $\mathsf{DACC}.\mathbf{KeyGen}(1^\lambda)$ to obtain
		$apk=(\mathbf{U},\mathbf{A}_{AP})$ and $ask=\mathbf{R}_{AP}$. The AP maintains an authentication log $\mathcal{LOG}$, and also a public archive $ARC$ consisting of entries of the form $(\mathsf{state\_info}, \mathsf{acc\_value})$, where $\mathsf{state\_info}$ is a part of state information generated by AP via $\mathsf{DACC}.\mathbf{UpdateAcc}((\mathbf{U},\mathbf{A}_{AP}),\mathbf{R}_{AP},j,\mathbf{u},X)$, and $\mathsf{acc\_value}$ is the updated accumulator value after the member is granted or revoked by AP. The $\mathcal{LOG}$ and $ARC$ are initialized as empty sets, while AP runs  $\mathsf{DACC}.\mathbf{Acc}((\mathbf{U},\mathbf{A}_{AP}),X=\emptyset\subseteq[N])$ to get the accumulator value $\mathbf{u}_0$ and a random vector $\mathbf{r}$.\\
		\item[-] 
		\textbf{Join.} User with a set of attributions $\mathsf{attr}:=\{\mathsf{attr}_i\}_{i\in[m_4]}$ sends his/her public key $\mathbf{y}$ to the GM, and proves the knowledge of the $usk$ $\mathbf{k}$ corresponding to $upk$ $\mathbf{y}$ by $\Pi_U=\text{SPK}\{(\mathbf{k}):\mathbf{y}=\mathsf{PRF}.\mathbf{Eval}(\mathbf{k},\mathbf{A})\}[\mathfrak{m}]$. If $\mathbf{y}$ has not yet been recorded in the $\mathcal{LIST}$, the GM first appends entry $(i,\mathbf{y})$ to the $\mathcal{LIST}$, then user computes $\mathbf{m}=\mathsf{bin}(\mathbf{F}\cdot \mathsf{bin}(\mathbf{y}))$, and runs $\mathsf{SIGN}.\mathbf{Sign}(S((\mathbf{A}_{GM},\mathbf{B}_{GM},\mathbf{u}),\mathbf{R}_{GM}),U(\mathbf{m},\mathsf{attr}))$ with GM to get a signature on $\tilde{\mathbf{m}}:=(\mathbf{m},\mathsf{attr})$. Then, it returns the signature $(\tau, \mathbf{v})$ back to user $i$, and updates the corresponding entry $(i, \mathbf{y})$ by including $\tau$, resulting in $(i, \mathbf{y}, \tau)$.
		At this point, 
		GM maintains an identification list containing $(i,\mathbf{y},\tau)$. So user's $mpk$ is $\tau$, and $msk$ is $(\mathbf{v},(\mathbf{m},\mathsf{attr}))$.
		\\
		\item[-] 
		\textbf{GrantingAccess.} The AP grants access to a specific user with the membership public key $\tau$, more concretely, $\tau$ is included into the latest accumulator value. Let us assume that the AP's $ARC$ contains $j$ tuples, and the current accumulator value being $\mathbf{u}_j$.
		The AP calculates a new accumulator value and state information $(\mathbf{u}_{j+1},((j,\mathbf{m}^{*}),\mathbf{t}_j))$ via $\mathsf{DACC}.\mathbf{UpdateAcc}((\mathbf{U},\mathbf{A}_{AP}),\mathbf{R}_{AP}, \tau, \mathbf{u}_j, X)$, and runs $\mathsf{DACC}.\mathbf{WitGen}((\mathbf{U},\mathbf{A}_{AP}),\mathbf{R}_{AP},X,\mathbf{r},\tau\in[N])$ to generate a witness $\mathbf{w}$ for the user. Then the AP appends $(\mathbf{t}_j, \mathbf{u}_{j+1})$ to the $ARC$, and sends $\mathbf{w}$ to the user.
		Existing users in the access group of AP update their witnesses using the $\mathsf{state\_info}$ information from $ARC$ as follows. User $k$ with witness $\mathbf{w}$ runs $\mathsf{DACC}.\mathbf{UpdateWit}(k,\mathbf{w},ARC)$ to calculate an updated witness $\mathbf{w}^*=\mathbf{w}+\mathbf{t}_j\in\mathbb{Z}^m$ for himself/herself.\\
		\item[-] 
		\textbf{RevokingAccess.}  The AP revokes access of a specific user with the membership public key $\tau$, more concretely, $\tau$ is excluded from the latest accumulator value. Assume that the AP's $ARC$ contains $j$ tuples, and the current accumulator value being $\mathbf{u}_j$.
		The AP computes a new accumulator value $\mathbf{u}_{j+1}$ and state information $((j,\mathbf{m}^{*}),\mathbf{t}_j)$ via $\mathsf{DACC}.\mathbf{UpdateAcc}((\mathbf{U},\mathbf{A}_{AP}),\mathbf{R}_{AP}, \tau, \mathbf{u}_j, X)$. It then appends $(\mathbf{t}_j, \mathbf{u}_{j+1})$ to $ARC$. In a manner similar to \textbf{GrantingAccess}, existing users in the access group of AP update their witnesses using the $\mathsf{state\_info}$ from the $ARC$, user $k$ with witness $\mathbf{w}$ get an updated witness $\mathbf{w}^*=\mathbf{w}+\mathbf{t}_j\in\mathbb{Z}^m$.\\
		\item[-] 
		\textbf{Authentication.} The user maintains a counter list, with each entry corresponding to an AP, to ensure that the user does not authenticate more than $k_{AP}$ times with any given AP. User with public key $\mathbf{y}$ and attributes vector $\mathsf{attr}$ authenticates himself/herself by the following interactive protocol. Users can selectively disclose their attributes according to the requirements of the AP. For example, if the AP requires users to possess the attributes $\{\mathsf{attr}_i\}_{i \in \mathcal{I}}$ with $\mathcal{I} \subseteq [m_4]$, then the user only discloses these required attributes to convince the AP, while the remaining attributes $\{\mathsf{attr}_i\}_{i \notin \mathcal{I}}$ stay hidden.
		
		Before interacting with an AP, the user first verifies the number of authentications associated with that AP has not exceeded $k_{AP}$, if not, then calculates the tag bases set $\mathcal{B}$ which consists of $(\mathbf{B}_{j_{}}, \check{\mathbf{B}}_{j_{}}) = \mathsf{H}(ID_{AP}, k_{AP}, j_{})$ for $j_{}\in[1,k_{AP}]$, and constructs a set $\mathcal{B}^{\prime}$, where each element is defined as $\mathbf{b}_j^{\prime} = \mathsf{bin}(\mathbf{E} \cdot \mathsf{vdec}(\mathsf{M2V}(\mathbf{B}_j + \check{\mathbf{B}}_j)))$ for every $(\mathbf{B}_j, \check{\mathbf{B}}_j) \in \mathcal{B}$.
		It is assumed that the user is conducting $j$-th access to the AP's application, the user calculates $(\mathbf{B}, \check{\mathbf{B}}) = \mathsf{H}(ID_{AP}, k_{AP}, j_{AP})$, then $\mathbf{b}^{\prime} = \mathsf{bin}(\mathbf{E} \cdot \mathsf{vdec}(\mathsf{M2V}(\mathbf{B} + \check{\mathbf{B}})))$.
		With $\mathcal{B}^{\prime}$ and $\mathbf{b}^\prime$ in place, the user proceeds to calculate the accumulator value $\mathbf{u}^{\prime} = \mathsf{ACC}.\mathbf{Acc}(\mathcal{B}^{\prime})$ together with its witness $\mathbf{w}^{\prime} = \mathsf{ACC}.\mathbf{Wit}(\mathcal{B}^{\prime}, \mathbf{b}^{\prime})$.
		After preparing these, the user sends a request to the AP then receives a challenge $(c, \mathfrak{m})$, where $c$ is a challenge uniformly chosen from a challenge space (e.g. $\mathbb{Z}_p$ for a prime $p$) and $\mathfrak{m}$ is an arbitrary message. Finally, the user calculates the following response information:
		$$\mathbf{t} = \mathsf{PRF}.\mathbf{Eval}(\mathbf{k}, \mathbf{B}),$$ 
		and
		$$\check{\mathbf{t}} = \mathsf{PRF}.\mathbf{Eval}(\mathbf{k}, \check{\mathbf{B}}) + c \cdot \mathbf{y} \bmod p,$$ 
		then generates the proof $\Pi =$
		
		{\small\[ \hspace{-0mm}\text{SPK}\hspace{-0mm}\left\{
			\begin{aligned}\label{Pi}
				&(\mathbf{k},\mathbf{y},\mathbf{m},(\tau, \mathbf{v}),\mathbf{w},(\mathbf{B},\check{\mathbf{B}}),\mathbf{b}^{\prime},\check{\mathbf{t}}^{\prime},\mathbf{w}^{\prime},\{\mathsf{attr_i}\}_{i\notin\mathcal{I}}):\\
				& \mathbf{y}=\mathsf{PRF}.\mathbf{Eval}(\mathbf{k},\mathbf{A}), \\ 
				& \mathbf{G}_{n,p-1}\cdot\mathbf{m}=\mathbf{F}\cdot \mathsf{bin}(\mathbf{y}) \mod p,\\
				& \mathsf{DACC}.\mathbf{Verify}((\mathbf{U},\mathbf{A}_{AP}),\mathbf{u}_{AP},\tau,\mathbf{w})=1, \\
				& \mathsf{SIGN}.\mathbf{Verify}((\mathbf{A}_{GM},\mathbf{B}_{GM},\mathbf{u}), (\mathbf{m},\mathsf{attr}), (\tau, \mathbf{v}))=1,\\
				& \mathbf{t}=\mathsf{PRF}.\mathbf{Eval}(\mathbf{k},\mathbf{B}),\\
				& \check{\mathbf{t}}^{\prime}=\mathsf{PRF}.\mathbf{Eval}(\mathbf{k},\check{\mathbf{B}}), \\
				& \check{\mathbf{t}}^{\prime}+ c\cdot \mathbf{y}= \check{\mathbf{t}} \mod p, \\%
				& \mathbf{G}_{n,q_{1}-1}\cdot \mathbf{b}^{\prime}=\mathbf{E}_{}\cdot \mathsf{vdec}(\mathsf{M2V}(\mathbf{B}+\check{\mathbf{B}})) \text{ mod } q_1,\\
				&  \mathsf{ACC}.\mathbf{Verify}_{}(\mathbf{u}^{\prime},\mathbf{b}^{\prime},\mathbf{w}^{\prime})=1.
			\end{aligned}
			\right\}\hspace{-0mm}[\mathfrak{m}].
			\]}

		Subsequently, the user responds to AP with tuple $(\mathbf{t}, \check{\mathbf{t}}, \Pi)$. Notably, all the terms mentioned above can be proved using ZKAoK framework of \cite{YAZ+19_ZKP_on_Z}. 
		We have provided the detailed proofs in \ref{secA1}. 
		When AP receives the tuple $(\mathbf{t}, \check{\mathbf{t}}, \Pi)$, it first calculates the accumulator value $\mathbf{u}^{\prime}$ as the user did. Then, it checks the validity of $\Pi$ and appends the transcript $(\mathbf{t}, \check{\mathbf{t}}, c, \mathfrak{m}, \Pi)$ to its $\mathcal{LOG}$. The AP accepts the user if $\mathbf{t}$ is not already in $\mathcal{LOG}$ and $\Pi$ is valid.\\
		\item[-] 
		\textbf{PublicTracing}. This algorithm takes as input the $\mathcal{LOG}$: $\left\{\left(\mathbf{t}_i,\check{\mathbf{t}}_i,c_i,\mathfrak{m}_i,\Pi_i\right)\right\}_{i=1}^\mathit{\Gamma}$ that consists of $\mathit{\Gamma}$ elements and the $\mathcal{LIST}$, then the algorithm initializes an empty set $\mathcal{S}$. 
		For each pair $m,n \in [1,\mathit{\Gamma}]$ such that $\mathbf{t}_m = \mathbf{t}_n$ and $c_m \neq c_n$, with both $\Pi_m$ and $\Pi_n$ being valid, the algorithm computes $\mathbf{y} = (c_m - c_n)^{-1} \cdot (\check{\mathbf{t}}_m - \check{\mathbf{t}}_n)$. Since $c_m - c_n$ is invertible due to the proper challenge space selection, the algorithm searches for $\mathbf{y}$ in the $\mathcal{LIST}$. If it finds an entry containing $\mathbf{y}$, it appends $i$ to $\mathcal{S}$; otherwise, it appends GM to $\mathcal{S}$. Finally, this algorithm returns the set $\mathcal{S}$.
	\end{enumerate}

	\subsection{Security Analysis}
	Our dynamic $k$-TAA scheme is proven secure by Theorem \ref{ktaa_security}, as presented below.
	
	\begin{theorem}\label{ktaa_security}
		If wPRF with efficient protocols (with strong uniqueness) $\mathsf{PRF}$, static accumulator scheme $\mathsf{ACC}$, dynamic accumulator scheme $\mathsf{DACC}$, signature with efficient protocols $\mathsf{SIGN}$, and the  ZKAoK are all secure, the SIS assumption holds, and $\mathsf{H}$ is regarded as a random oracle, our dynamic $k$-times anonymous authentication scheme is secure.
	\end{theorem}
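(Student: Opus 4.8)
The plan is to establish the three security notions that jointly define a secure dynamic $k$-TAA---D-Anonymity, D-Detectability, and D-Exculpability---in separate arguments, since ``secure'' here means exactly their conjunction. Each is handled by a sequence of game hops that isolates the single building block being invoked, so that an adversary breaking the property is turned into an adversary against one of $\mathsf{PRF}$, $\mathsf{ACC}$, $\mathsf{DACC}$, $\mathsf{SIGN}$, the ZKAoK, or the SIS assumption. Throughout, modeling $\mathsf{H}$ as a random oracle is what lets us treat each tag base $(\mathbf{B},\check{\mathbf{B}})=\mathsf{H}(ID_{AP},k_{AP},J_{AP})$ as a freshly sampled uniform input to the wPRF.

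For D-Anonymity I would begin from the real anonymity game and first replace every proof $\Pi$ (and $\Pi_U$) produced by the challenge user with a simulated proof, invoking the honest-verifier zero-knowledge property of the ZKAoK of \cite{YAZ+19_ZKP_on_Z}; this hop is indistinguishable and strips $\Pi$ of all dependence on the witness. Next, because an honest user authenticates at most $k$ times, each session uses a distinct counter $J_{AP}$ and hence a distinct, random-oracle-fresh pair $(\mathbf{B},\check{\mathbf{B}})$, so I can invoke the weak pseudorandomness of $\mathsf{PRF}$ to replace $\mathbf{t}=\mathsf{PRF}.\mathbf{Eval}(\mathbf{s},\mathbf{B})$ and the PRF component of $\check{\mathbf{t}}$ with uniform values. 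After these hops the whole transcript $(\mathbf{t},\check{\mathbf{t}},\Pi)$ is distributed independently of which registered user is acting, so no coalition of other users, the GM, and the AP can beat guessing. The point that makes this go through is that the \emph{weak} (random-input) PRF property is precisely matched by the random oracle feeding fresh inputs.

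For D-Detectability and D-Exculpability I would argue by extraction. Given an adversary that authenticates more than $k$ times to one honest AP while escaping tracing, I apply the proof-of-knowledge extractor of the ZKAoK (via special soundness on the Fiat-Shamir challenge) to each accepting transcript, obtaining a witness $(\mathbf{s},\mathbf{y},\mathbf{B},\check{\mathbf{B}},\mathbf{b}',\mathbf{w},\mathbf{w}',(\tau,\mathbf{v}),\mathbf{m})$ satisfying every clause of $\Pi$. Collision resistance of $\mathsf{ACC}$ forces each extracted $\mathbf{b}'$ to be a genuine accumulated tag base, so every session's index lies in $[1,k_{AP}]$; by pigeonhole two of the $k+1$ sessions share the same $\mathbf{B}$, and \emph{strong uniqueness} of $\mathsf{PRF}$ forces them to share the same $\mathbf{t}$. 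Since the AP samples $c\xleftarrow{\$}\mathbb{Z}_p$ independently, $c_i\neq c_j$ with overwhelming probability, so $\mathbf{y}=(c_i-c_j)^{-1}(\check{\mathbf{t}}_i-\check{\mathbf{t}}_j)$ is well defined and tracing succeeds, contradicting the escape. For D-Exculpability, framing an honest user means a traced $\mathbf{y}$ equals that user's public key from a collision containing at least one adversarial transcript; extracting from that transcript yields $\mathbf{s}$ with $\mathsf{PRF}.\mathbf{Eval}(\mathbf{s},\mathbf{A})=\mathbf{y}$, which by uniqueness is the honest user's secret key and thus cannot be produced from $\mathbf{y}$ alone without breaking weak pseudorandomness. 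That an honest GM is never accused follows from unforgeability of $\mathsf{SIGN}$: every accepting $\Pi$ carries a valid signature on $\mathbf{m}=\mathsf{bin}(\mathbf{F}\cdot\mathsf{bin}(\mathbf{y}))$, so the recovered $\mathbf{y}$ is always present in $\mathcal{LIST}$ and the algorithm outputs a user rather than GM.

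The main obstacle I anticipate is coordinating the two distinct PRF guarantees while keeping the extraction rigorous. Anonymity needs only weak (random-input) pseudorandomness, whereas detectability and exculpability need strong uniqueness, and the random-oracle treatment of $\mathsf{H}$ must be arranged so that both apply to the same tags without circularity. Moreover, the pigeonhole argument requires extracting witnesses from several Fiat-Shamir proofs at once, so I would need to bound the cumulative rewinding loss across the $k+1$ accepting transcripts and check that the extracted secret keys and tag bases are mutually consistent. Verifying that the accumulator and signature soundness arguments compose cleanly with this multi-extraction---without the extractor's rewinding disturbing the random-oracle programming used in the other hybrids---is where the care is concentrated.
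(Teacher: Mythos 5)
Your overall architecture (property-by-property game hops, ZK simulation for anonymity, knowledge extraction plus a pigeonhole for detectability and exculpability) matches the paper, but your D-Detectability argument has a genuine gap: your pigeonhole is taken only over tag bases --- ``two of the $k+1$ sessions share the same $\mathbf{B}$, and strong uniqueness forces them to share the same $\mathbf{t}$'' --- yet strong uniqueness forces equal $\mathbf{t}$ only when the two sessions also use the same secret key $\mathbf{s}$, and in the detectability game the adversary controls \emph{all} users and could a priori use unboundedly many keys, in which case no $\mathbf{t}$-collision ever occurs (the winning threshold is accordingly more than $k\times\#AG_\mathcal{V}$ authentications, not $k$). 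The paper closes this with three hops you omit: (i) unforgeability of $\mathsf{SIGN}$ \emph{together with the security of the dynamic accumulator} $\mathsf{DACC}$ shows every extracted witness carries a GM-issued credential $\mathbf{m}$ whose tag $\tau$ was actually granted by the AP, bounding usable credentials by $\#AG_\mathcal{V}$; (ii) a dedicated SIS-based lemma that $\mathbf{y}\mapsto\mathsf{bin}(\mathbf{F}\cdot\mathsf{bin}(\mathbf{y}))$ is collision resistant, so each $\mathbf{m}$ pins down a unique $\mathbf{y}$; (iii) wPRF uniqueness, so each $\mathbf{y}$ pins down a unique $\mathbf{s}$. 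Only then does the count of at most $k_{AP}\times\#AG_\mathcal{V}$ distinct (tag base, key) pairs make the pigeonhole valid. Note that you list $\mathsf{DACC}$ in your plan but never actually invoke it, even though it is precisely what enforces the \emph{dynamic} grant/revoke semantics; and the same $\mathsf{F}$-collision-resistance lemma is silently needed in your GM-exculpability step, since a signature on $\mathbf{m}$ only certifies the traced $\mathbf{y}$ if no second $\mathbf{y}'$ maps to the same $\mathbf{m}$.

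A smaller but still substantive issue is your anonymity hop: you keep the challenge users' public keys real while replacing only the tags with uniform values. Weak pseudorandomness switches \emph{all} evaluations under one key to random at once (a reduction cannot answer some queries really and others randomly without knowing the key), so the paper instead replaces, per challenge user, the public key $\mathbf{y}_{i_d}$ \emph{and} the tags simultaneously (its Games 2 and 3); the masked tag $\check{\mathbf{t}}=\check{\mathbf{t}}'+c\cdot\mathbf{y}$ still hides $b$ because $\check{\mathbf{t}}'$ is uniform. Your user-exculpability sketch is essentially the paper's, which routes the ``same $\mathbf{t}$ implies same $\mathbf{s}$, hence $\mathbf{y}=\mathbf{y}^*$, hence extracting the key contradicts pseudorandomness'' steps through Lemmas F.1/F.2 of the wPRF paper, so that part stands modulo the multi-extraction bookkeeping you already flag.
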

	\begin{proof}
		Our dynamic \(k\)-TAA scheme satisfies the following security properties: (1) D-Anonymity, (2) D-Detectability, (3) D-Exculpability for users, and (4) Exculpability for the group manager \cite{dynamic_k-taa_NS05}.
		We now proceed to prove that these properties, and
		in the following, we use $\mathcal{G}_i$ to denote Game i, $\mathcal{A}$ to denotes the adversary, $\mathcal{C}$ to denotes the challenger, and $\Pr_i$ to denote the probability that $\mathcal{A}$ succeeds in $\mathcal{G}_i$.
		\\\\
		\textbf{\textsl{D-Anonymity.}}
		The proof of D-Anonymity for our dynamic $k$-TAA scheme is based on the games defined as follows.
		\begin{enumerate}
			\item[{$\mathcal{G}_0$}]  is the real game of \textbf{D-Anonymity} defined in \ref{Formal_Definitions_Requirements}.
			\item[{$\mathcal{G}_1$}]  is same with $\mathcal{G}_0$, with the difference that the challenger $\mathcal{C}$ simulates a proof $\Pi$ when it needs to perform authentication representing $i_0$ or $i_1$. The zero-knowledge property of the ZKAoK implies that $| \Pr_1-\Pr_0|\leq negl(\lambda)$.
			\item[{$\mathcal{G}_2$}]  is same with $\mathcal{G}_1$, with the difference that $i_0$'s $upk$ $\mathbf{y}_{i_0}$ is sampled uniformly from the output space of $\mathsf{PRF}.\mathbf{Eval}$, and similarly, $\mathcal{C}$ selects a response $(\mathbf{t},\check{\mathbf{t}})$ uniformly at random from the output space of $\mathsf{PRF}.\mathbf{Eval}$ when it needs to perform authentication representing $i_0$. The weak pseudorandomness property of the $\mathsf{PRF}$ implies that $|\Pr_{2}-\Pr_{1}|\leq negl(\lambda)$. 
			\item[{$\mathcal{G}_3$}] is same with $\mathcal{G}_2$, with the difference that $i_1$'s $upk$ $\mathbf{y}_{i_1}$ is sampled uniformly from the output space of $\mathsf{PRF}.\mathbf{Eval}$, and similarly, $\mathcal{C}$ selects a response $(\mathbf{t},\check{\mathbf{t}})$ uniformly at random from the output space of $\mathsf{PRF}.\mathbf{Eval}$ when it needs to perform authentication representing $i_1$. The weak pseudorandomness property of the $\mathsf{PRF}$ implies that $|\Pr_{3}-\Pr_{2}|\leq negl(\lambda)$. 
			
		\end{enumerate}
		
		In $\mathcal{G}_3$, $\mathcal{A}$ is allowed to access oracle $\mathcal{O}_{Query}$ once of the form $(d,*)$ for $d\in\{0,1\}$, and is also allowed to access oracle $\mathcal{O}_{Auth-U}[gpk]$ at most $k-1$ times of the form $(i,(ID^*,M^*))$ for $i\in\{i_0,i_1\}$, hence $\mathcal{C}$ only need to perform authentication with AP associated with $(ID^*,k^*)$ at most $k^*$ times representing $i_0$ and $i_1$, respectively. Under these conditions, the query oracle behaves normally for $d \in \{0,1\}$, meaning that its executions are independent of the challenge bit $b$. Therefore, $\mathcal{A}$’s view in $\mathcal{G}_3$ does not depend on $b$, and its advantage in this game is $\Pr_3 = 1/2$. So, we obtain $|\Pr_0-1/2|\leq negl(\lambda)$. This concludes the proof for D-Anonymity of our scheme.\\
		
		Before proving D-Detectability of the dynamic $k$-TAA scheme in detail, we first provide the following lemma.
		\begin{lemma}
			Let $\mathbf{y}_i\in\mathbb{Z}_{p}^m$, $\mathbf{m}_i\in\{0,1\}^{m_3}$, $\mathbf{F}\in\mathbb{Z}_p^{n'\times m'}$, $\mathsf{F}:\mathbf{m}_i=\mathsf{bin}(\mathbf{F}\cdot \mathsf{bin}(\mathbf{y}_i)\mod p)$ is the map from $\mathbb{Z}_{p}^m$ to $\{0,1\}^{m_3}$. Then $\mathsf{F}$ is a collision-resistant function. Namely, $\mathrm{Pr}[\mathbf{y}_1\neq\mathbf{y}_2:\mathsf{bin}(\mathbf{F}\cdot \mathsf{bin}(\mathbf{y}_1)\mod p)=\mathsf{bin}(\mathbf{F}\cdot \mathsf{bin}(\mathbf{y}_2)\mod p)]\leq negl(\lambda)$.
		\end{lemma}
		\begin{proof}
			Suppose for $\mathbf{m}_i=\mathsf{bin}(\mathbf{F}\cdot \mathsf{bin}(\mathbf{y}_i)\mod p)$, or equivalently $\mathbf{G}_{n,q_1-1}\cdot\mathbf{m}_i=\mathbf{F}\cdot \mathsf{bin}(\mathbf{y}_i) \mod p$, we have $\mathbf{m}_1=\mathbf{m}_2$, but $\mathbf{y}_1\neq \mathbf{y}_2$. In other words, there exists a collision. So $\mathbf{F}\cdot \mathsf{bin}(\mathbf{y}_1)=\mathbf{F}\cdot \mathsf{bin}(\mathbf{y}_2)$. Then $\mathsf{bin}(\mathbf{y}_1)-\mathsf{bin}(\mathbf{y}_2)$ is a short non-zero vector, so it is a $\text{SIS}^\infty_{n',m',p,1}$ solution, where $n^\prime=\lfloor\frac{m_3}{\log{p}}\rfloor$ and $m^\prime=\lfloor m\cdot\log{p}\rfloor$.
		\end{proof}
		\noindent\textbf{\textsl{D-Detectability.}} The proof of D-Detectability for our dynamic $k$-TAA scheme is based on the games defined as follows.
		\begin{enumerate}
			\item[{$\mathcal{G}_0$}] is the real game of \textbf{D-Detectability} defined in \ref{Formal_Definitions_Requirements}.
			\item[{$\mathcal{G}_1$}] 
			is same with $\mathcal{G}_0$, with the difference that during the verification of $\mathcal{A}$’s success,the challenger  $\mathcal{C}$ also tries to extract a witness $(\mathbf{k},\mathbf{y},\mathbf{m},(\tau, \mathbf{v}),\mathbf{w},(\mathbf{B},\check{\mathbf{B}}),\mathbf{b}^{\prime},\check{\mathbf{t}}^{\prime},\mathbf{w}^{\prime},\{\mathsf{attr_i}\}_{i\notin\mathcal{I}})$ from every proof $\Pi$ in the $\mathcal{LOG}_{AP}$. If $\mathcal{C}$ is unable to extract a valid witness\footnote{Here, a witness is considered valid if it satisfies the statements which are defined in the authentication algorithm.} from any proof $\Pi$, it outputs 0, which indicates $\mathcal{A}$ fails. The soundness property of the ZKAoK systems implies that $|\Pr_1 - \Pr_0| \leq negl(\lambda)$.
			\item[{$\mathcal{G}_2$}]  is same with $\mathcal{G}_1$, with the difference that during the verification of $\mathcal{A}$’s success, if for an extracted witness $(\mathbf{k},\mathbf{y},\mathbf{m},(\tau, \mathbf{v}),\mathbf{w},(\mathbf{B},\check{\mathbf{B}}),\mathbf{b}^{\prime},\check{\mathbf{t}}^{\prime},\mathbf{w}^{\prime},\{\mathsf{attr_i}\}_{i\notin\mathcal{I}})$, \textcolor{black}{$\mathbf{m}$} has not been queried to the $\mathcal{O}_{Join-GM}[gpk,gsk]$ oracle and \textcolor{black}{$\tau$ has not been queried to $\mathcal{O}_{GRAN-AP}$ for some AP}, $\mathcal{C}$ outputs 0. The unforgeability of the signature scheme and security of the dynamic accumulator imply that $|\Pr_2-\Pr_1|\leq negl(\lambda)$.
			\item[{$\mathcal{G}_3$}] is same with $\mathcal{G}_2$, with the difference that during the verification of $\mathcal{A}$’s success, if for two extracted witness $(\mathbf{k}_1,\mathbf{y}_1,\mathbf{m}_1,(\tau_1, \mathbf{v}_1),\mathbf{w}_1,(\mathbf{B}_1,\check{\mathbf{B}}_1),\mathbf{b}^{\prime}_1,\check{\mathbf{t}}^{\prime}_1,\mathbf{w}^{\prime}_1,\{\mathsf{attr_i}\}_{i\notin\mathcal{I}}^1)$
			and $(\mathbf{k}_2,\mathbf{y}_2,\mathbf{m}_2,(\tau_2, \mathbf{v}_2),\mathbf{w}_2,(\mathbf{B}_2,\check{\mathbf{B}}_2),\mathbf{b}^{\prime}_2,\check{\mathbf{t}}^{\prime}_2,\mathbf{w}^{\prime}_2,\{\mathsf{attr_i}\}_{i\notin\mathcal{I}}^2)$,
			$\mathbf{m}_1=\mathbf{m}_2$ but $\mathbf{y}_1\neq \mathbf{y}_2$, then $\mathcal{C}$ outputs 0. The collision-resistance property of $\mathsf{F}$ implies that $|\Pr_{3}-\Pr_2|\leq negl(\lambda)$.
			\item[{$\mathcal{G}_4$}] is same with $\mathcal{G}_3$, with the difference that during the verification of $\mathcal{A}$’s success, if for two extracted witness
			$(\mathbf{k}_1,\mathbf{y}_1,\mathbf{m}_1,(\tau_1, \mathbf{v}_1),\mathbf{w}_1,(\mathbf{B}_1,\check{\mathbf{B}}_1),\mathbf{b}^{\prime}_1,\check{\mathbf{t}}^{\prime}_1,\mathbf{w}^{\prime}_1,\{\mathsf{attr_i}\}_{i\notin\mathcal{I}}^1)$
			and $(\mathbf{k}_2,\mathbf{y}_2,\mathbf{m}_2,(\tau_2, \mathbf{v}_2),\mathbf{w}_2,(\mathbf{B}_2,\check{\mathbf{B}}_2),\mathbf{b}^{\prime}_2,\check{\mathbf{t}}^{\prime}_2,\mathbf{w}^{\prime}_2,\{\mathsf{attr_i}\}_{i\notin\mathcal{I}}^2)$,
			$\mathbf{y}_1=\mathbf{y}_2$ but $\mathbf{k}_1\neq\mathbf{k}_2$, then $\mathcal{C}$ outputs 0. The uniqueness of the $\mathsf{PRF}$ implies that $|\Pr_4-\Pr_3|\leq negl(\lambda)$. 
			Also, note that $\mathcal{A}$ only utilizes at most $\#AG$  different $usk$s (with $\#AG\leq\|\mathcal{LIST}\|$ and $\#AG$ denotes the number of users in access group of AP) if it succeeds in this game.
			\item[{$\mathcal{G}_5$}] is same with $\mathcal{G}_4$, with the difference that during the verification of $\mathcal{A}$’s success, if for an extracted witness from $\mathcal{LOG}_{AP}$, $(\mathbf{B},\check{\mathbf{B}})$ is not contained in the tag bases set $\mathcal{B}_{AP}$ associated with $AP$, $\mathcal{C}$ outputs 0. The collision resistance property of the static accumulator scheme and the SIS assumption imply that $|\Pr_5-\Pr_4|\leq negl(\lambda)$.
		\end{enumerate}
		
		In $\mathcal{G}_5$, if $\mathcal{A}$ succeeds, it can only use at most $k_{AP}$ distinct tag bases for every $\mathcal{LOG}_{AP}$, and can only use at most $\#AG_\mathcal{}$ distinct $usk$s, therefore $\mathcal{A}$ is only able to generate tags $(\mathbf{t},\check{\mathbf{t}})$ with at most $k_{AP}\times\#AG_\mathcal{}$ distinct $\mathbf{t}$. 
		Hence, it can succeed in $\mathcal{G}_5$ with only negligible advantage, i.e. $\Pr_5\leq negl(\lambda)$. This concludes the proof of D-Detectability of our scheme.\\
		
		\noindent\textbf{\textsl{D-Exculpability for users.}} 
		The proof of D-Exculpability for users for our dynamic $k$-TAA scheme is based on the games defined as follows.
		In addition, we denote a pair of entries $(\mathbf{t}_1,\check{\mathbf{t}}_1,c_1,\mathfrak{m}_1,\Pi_1)$, $(\mathbf{t}_2,\check{\mathbf{t}}_2,c_2,\mathfrak{m}_2,\Pi_2)$ in an $\mathcal{LOG}_{AP}$ as ``malicious pair'' if the pair results in the user $i^*$ being accused, that is, $\mathbf{t}_1=\mathbf{t}_2$, $c_1\neq c_2$ and $(c_1-c_2)^{-1}\cdot(\check{\mathbf{t}}_1-\check{\mathbf{t}}_2)=\mathbf{y}^*$, where $\mathbf{y}^*$ is the $upk$ of user $i^*$. 
		In addition, we refer to an entry in an $\mathcal{LOG}_{AP}$ as a ``malicious entry'' if it belongs to a ``malicious pair'' but is not produced by $i^*$ itself.
		\begin{enumerate}
			\item[$\mathcal{G}_0$] is the real game of \textbf{D-Exculpability for users} defined in \ref{Formal_Definitions_Requirements}.
			\item[$\mathcal{G}_1$] is same with $\mathcal{G}_0$, with the difference that the challenger $\mathcal{C}$ simulate a proof $\Pi$ when it needs to perform authentication representing $i^*$. The zero-knowledge property of the ZKAoK systems implies that $|\Pr_1-\Pr_0|\leq negl(\lambda)$. 
			\item[$\mathcal{G}_2$] is same with $\mathcal{G}_1$, with the difference that during the verification of $\mathcal{A}$’s success, $\mathcal{C}$ also tries to extract the witness  $(\mathbf{k},\mathbf{y},\mathbf{m},(\tau, \mathbf{v}),\mathbf{w},(\mathbf{B},\check{\mathbf{B}}),\mathbf{b}^{\prime},\check{\mathbf{t}}^{\prime},\mathbf{w}^{\prime},\{\mathsf{attr_i}\}_{i\notin\mathcal{I}})$ from every proof $\Pi$ in a ``malicious entry''. If $\mathcal{C}$ is unable to extract a valid witness from any proof $\Pi$ in a ``malicious entry'', it outputs 0, which indicates $\mathcal{A}$ fails. The soundness property of the ZKAoK systems and the fact that each ``malicious entry'' contains a fresh statement-proof pair imply that $| \Pr_2-\Pr_1|\leq negl(\lambda)$.
			\item[$\mathcal{G}_3$] is same with $\mathcal{G}_2$, with the difference that during the verification of $\mathcal{A}$’s success, if for an extracted witness  $(\mathbf{k},\mathbf{y},\mathbf{m},(\tau, \mathbf{v}),\mathbf{w},(\mathbf{B},\check{\mathbf{B}}),\mathbf{b}^{\prime},\check{\mathbf{t}}^{\prime},\mathbf{w}^{\prime},\{\mathsf{attr_i}\}_{i\notin\mathcal{I}})$ from $\mathcal{LOG}_{AP}$, $(\mathbf{B},\check{\mathbf{B}})$ is not contained in $\mathcal{B}_{AP}$ associated with the $AP$, $\mathcal{C}$ outputs 0. The collision resistance of the static accumulator scheme and the SIS assumption imply that $|\Pr_3-\Pr_2|\leq negl(\lambda)$.
			\item[$\mathcal{G}_4$] is same with $\mathcal{G}_3$, with the difference that during the verification of $\mathcal{A}$’s success, if there are two extracted witness $(\mathbf{k}_1,\mathbf{y}_1,\mathbf{m}_1,(\tau_1, \mathbf{v}_1),\mathbf{w}_1,(\mathbf{B}_1,\check{\mathbf{B}}_1),\mathbf{b}^{\prime}_1,\check{\mathbf{t}}^{\prime}_1,\mathbf{w}^{\prime}_1,\{\mathsf{attr_i}\}_{i\notin\mathcal{I}}^1)$
			and $(\mathbf{k}_2,\mathbf{y}_2,\mathbf{m}_2,(\tau_2, \mathbf{v}_2),\mathbf{w}_2,(\mathbf{B}_2,\check{\mathbf{B}}_2),\mathbf{b}^{\prime}_2,\check{\mathbf{t}}^{\prime}_2,\mathbf{w}^{\prime}_2,\{\mathsf{attr_i}\}_{i\notin\mathcal{I}}^2)$ from two ``malicious entries'', the two witness formats a ``malicious pair'' and satisfy $\mathbf{k}_1\neq \mathbf{k}_2$, $\mathcal{C}$ outputs 0. Due to the secret key collision-resistance property of $\mathsf{PRF}$\footnote{This property implies that regardless of whether the inputs are same or not, as long as the $usk$s are different, the outputs of $\mathsf{PRF}$ are different with non-negligible probability, which can be derived from uniqueness and strong uniqueness properties, and is formally proved in \cite{Rupeng_wPRF}.}, we obtain $|\Pr_4-\Pr_3|\leq negl(\lambda)$. 
			Also, note that
			if $\mathcal{A}$ succeeds, for every ``malicious pair'' that consists of two ``malicious entries'', the extracted $usk$s $\mathbf{k}_1$ and $\mathbf{k}_2$ are identical, therefore, we obtain $\check{\mathbf{t}}^\prime_1=\check{\mathbf{t}}^\prime_2$ and $\mathbf{y}_1=\mathbf{y}_2$, implying that $\mathbf{y}_1=\mathbf{y}_2=\mathbf{y}^*$. 
			Consequently, in this case, $\mathcal{C}$ is able to extract the $usk$ corresponding to the $\mathbf{y}^*$.
			\item[$\mathcal{G}_5$] is same with $\mathcal{G}_4$, with the difference that during the verification of $\mathcal{A}$’s success, if there exists ``malicious pair'' consisting of two ``malicious entries'', $\mathcal{C}$ outputs 0. Due to the  one-wayness property of $\mathsf{PRF}$\footnote{This property implies that even if $\mathcal{A}$ knows both the input and the output of $\mathsf{PRF}$, it can only recover the $usk$ with negligible probability, which can be derived from the weak pseudorandomness and uniqueness properties, and is formally proved in \cite{Rupeng_wPRF}.}, we obtain $|\Pr_5-\Pr_4|\leq negl(\lambda)$.
			\item[$\mathcal{G}_6$] is same with $\mathcal{G}_5$, with the difference that during the verification of $\mathcal{A}$’s success, 
			if there exists an extracted witness  $(\mathbf{k},\mathbf{y},\mathbf{m},(\tau, \mathbf{v}),\mathbf{w},(\mathbf{B},\check{\mathbf{B}}),\mathbf{b}^{\prime},\check{\mathbf{t}}^{\prime},\mathbf{w}^{\prime},\{\mathsf{attr_i}\}_{i\notin\mathcal{I}})$ from a ``malicious entries'', which forms a ``malicious pair'' together with an entry generated by $\mathcal{C}$, and $usk$s of this pair satisfy $\mathbf{k}\neq \mathbf{k}^*$, $\mathcal{C}$ outputs 0, where $\mathbf{k}^*$ is the $usk$ of user $i^*$. Due to secret key collision-resistance property of $\mathsf{PRF}$, we obtain $|\Pr_6- \Pr_5| \leq negl(\lambda)$.
			\item[$\mathcal{G}_7$] is same with $\mathcal{G}_6$, with the difference that during the verification of $\mathcal{A}$’s success, if there exists ``malicious pair'' that consists of a ``malicious entry'' and an entry produced by $\mathcal{C}$, then $\mathcal{C}$ outputs 0. Due to the one-wayness property of $\mathsf{PRF}$, we obtain $|\Pr_7-\Pr_6|\leq negl(\lambda)$.
		\end{enumerate}
		
		In $\mathcal{G}_7$, if $\mathcal{A}$ succeeds, there must exist a ``malicious pair'' that consists of two entries generated by $\mathcal{C}$. But the probability that such a ``malicious pair'' exists is negligible, because: 
		(1) the user $i^*$ honestly authenticate with any $AP$ with authentication times less than $k_{AP}$ times; 
		(2) all $(\mathbf{B}_j,\check{\mathbf{B}}_j)$ are sampled uniformly, so the probability of obtaining repeated tag bases is negligible; 
		(3) given the same key sampled by the $\mathsf{PRF}.\mathbf{KeyGen}$ algorithm, the probability that distinct uniformly sampled inputs evaluate to the same output is negligible. 
		Therefore, $\mathcal{A}$ succeeds in $\mathcal{G}_7$ with only negligible advantage, i.e. $\Pr_7\leq negl(\lambda)$.
		This concludes the proof of D-Exculpability for users  of our scheme.
		\\\\
		\textbf{\textsl{D-Exculpability for group manager.}} 
		The proof of D-Exculpability for group manager for our dynamic $k$-TAA scheme is based on the games defined as follows.
		The meanings of ``malicious pair'' and ``malicious entry'' are the same as mentioned above.
		\begin{enumerate}
			\item[$\mathcal{G}_0$] is the real game of \textbf{D-Exculpability for the group manager} defined in \ref{Formal_Definitions_Requirements}.
			\item[$\mathcal{G}_1$] is same with $\mathcal{G}_0$, with the difference that during the verification of $\mathcal{A}$’s success, $\mathcal{C}$ also tries to extract the witness $(\mathbf{k},\mathbf{y},\mathbf{m},(\tau, \mathbf{v}),\mathbf{w},(\mathbf{B},\check{\mathbf{B}}),\mathbf{b}^{\prime},\check{\mathbf{t}}^{\prime},\mathbf{w}^{\prime},\{\mathsf{attr_i}\}_{i\notin\mathcal{I}})$ from every proof $\Pi$ in a ``malicious entry''. If $\mathcal{C}$ is unable to extract a valid witness from any proof $\Pi$ in a ``malicious entry'', it returns 0, which indicates $\mathcal{A}$ does not succeed. The soundness property of the ZKAoK systems implies that $|\Pr_1-\Pr_0|\leq negl(\lambda)$.
			\item[$\mathcal{G}_2$] is same with $\mathcal{G}_1$, with the difference that during the verification of $\mathcal{A}$’s success, if for an extracted witness $(\mathbf{k},\mathbf{y},\mathbf{m},(\tau, \mathbf{v}),\mathbf{w},(\mathbf{B},\check{\mathbf{B}}),\mathbf{b}^{\prime},\check{\mathbf{t}}^{\prime},\mathbf{w}^{\prime},\{\mathsf{attr_i}\}_{i\notin\mathcal{I}})$ from $\mathcal{LOG}_{AP}$, $(\mathbf{B},\check{\mathbf{B}})$ is not contained in $\mathcal{B}_{AP}$ associated with the $AP$, $\mathcal{C}$ outputs 0. The collision resistance of the static accumulator scheme and the SIS assumption imply that $|\Pr_2-\Pr_1|\leq$ $negl(\lambda)$.
			\item[$\mathcal{G}_3$] is same with $\mathcal{G}_2$, with the difference that during the verification of $\mathcal{A}$’s success, if there exist two extracted witness $(\mathbf{k}_1,\mathbf{y}_1,\mathbf{m}_1,(\tau_1, \mathbf{v}_1),\mathbf{w}_1,(\mathbf{B}_1,\check{\mathbf{B}}_1),\mathbf{b}^{\prime}_1,\check{\mathbf{t}}^{\prime}_1,\mathbf{w}^{\prime}_1,\{\mathsf{attr_i}\}_{i\notin\mathcal{I}}^1)$
			and $(\mathbf{k}_2,\mathbf{y}_2,\mathbf{m}_2,(\tau_2, \mathbf{v}_2),\mathbf{w}_2,(\mathbf{B}_2,\check{\mathbf{B}}_2),\mathbf{b}^{\prime}_2,\check{\mathbf{t}}^{\prime}_2,\mathbf{w}^{\prime}_2,\{\mathsf{attr_i}\}_{i\notin\mathcal{I}}^2)$ from two ``malicious entries'', the two witness formats a ``malicious pair'' and satisfy $\mathbf{k}_1\neq \mathbf{k}_2$, then $\mathcal{C}$ outputs 0. Due to the secret key collision-resistance property of $\mathsf{PRF}$, we obtain $|\Pr_3-\Pr_2|\leq negl(\lambda)$.
		\end{enumerate}
		
		In $\mathcal{G}_3$, if $\mathcal{A}$ succeeds, the extracted $usk$s $\mathbf{k}_1$ and $\mathbf{k}_2$ are same for each "malicious pair," resulting in $\check{\mathbf{t}}^\prime_1 = \check{\mathbf{t}}^\prime_2$ and $\mathbf{y}_1 = \mathbf{y}_2$, implying that $\mathbf{y}_1 = \mathbf{y}_2 = \mathbf{y}^\prime$, and $\mathbf{y}^\prime = (c_1 - c_2)^{-1} \cdot (\check{\mathbf{t}}_1 - \check{\mathbf{t}}_2)$.
		Since $\mathbf{m} = \mathsf{bin}(\mathbf{F} \cdot \mathsf{bin}(\mathbf{y}^\prime))$ has not been queried to the group manager during the \textbf{join} algorithm, and due to the unforgeability of the signature scheme \textcolor{black}{and the collision-resistance of $\mathsf{F}$}, $\mathcal{A}$ can succeed in $\mathcal{G}_3$ with negligible probability, i.e. $\Pr_3\leq negl(\lambda)$. This concludes the proof for D-Exculpability for the group manager of our scheme.
	\end{proof}
	\section{Comparison of Communication Cost}
	We compare the efficiency between our work and the existing lattice-based $k$-TAA \cite{Rupeng_wPRF} and lattice-based e-cash \cite{e-cash_DLNS20,YAZ+19_ZKP_on_Z} in this section\footnote{The communication cost of \cite{asiacrypt_LLNW17} already reaches approximately 720 \texttt{TB} at the 80-bit security level \cite{YAZ+19_ZKP_on_Z}, making it far from competitive. Therefore, we do not include it in our comparison.}. Since the total communication cost of the schemes is dominated by the \textbf{authentication} algorithm, we estimate the communication cost of the \textbf{authentication} algorithm between our scheme and the schemes in \cite{Rupeng_wPRF,e-cash_DLNS20}. To estimate bit-security of schemes and choose parameters, we use the online LWE-estimator \cite{LWE_estimator}. For fairness, we adopt the same soundness error for the NIZKAoK and ensure the same level of security for the entire scheme. The communication cost is primarily determined by the size of the NIZKAoK proof $\pi$, so we merely focus on comparing the proof size of our scheme with that of the schemes in \cite{Rupeng_wPRF,e-cash_DLNS20}.
	\begin{table*}[t]
		\caption{\centering Witness length and the size of $\mathcal{M}$ of dynamic $k$-TAA in this paper}\label{Witness_length_Ours}
		\centering 
		\scalebox{1}{
			\begin{threeparttable}
				\begin{tabular}{lll} 
					
					\toprule 
					Terms & Witness length & The size of $\mathcal{M}$ \\
					\midrule\smallskip
					$\check{\mathbf{t}}^\prime=\mathsf{PRF}.\mathbf{Eval}(\mathbf{k},\check{\mathbf{B}})$ &
					$n_1+m_D(2n_1+2+k_1)$  &  $m_D(k_1+n_1)$ \\\smallskip
					$\mathbf{y}=\mathsf{PRF}.\mathbf{Eval}(\mathbf{k},\mathbf{A})$ 
					& $n_1+m_D(2+k_1)$ &  $m_Dk_1$ \\\smallskip
					$\mathbf{t}=\mathsf{PRF}.\mathbf{Eval}(\mathbf{k},\mathbf{B})$ 
					& $n_1+m_D(2n_1+1+k_1)$ & $m_D(k_1+n_1)$ \\\smallskip
					$\mathbf{G}_{n',q_1-1}\cdot\mathbf{m}=\mathbf{F}\cdot \mathsf{bin}(\mathbf{y}) \text{ mod }  p$ 
					& $m_D+m_3+m_Dk_p$ &  $m_3+m_Dk_p$ \\\smallskip
					$\mathsf{DACC}.\mathbf{Verify}((\mathbf{U},\mathbf{A}_{AP}),\mathbf{u}_{AP},\tau,\mathbf{w})=1$ & $km_A+lN$ & $km_A+lN$ \\\smallskip
					$\mathsf{SIGN}.\mathbf{Verify}^{\prime}((\mathbf{A}_{GM},\mathbf{B}_{GM},\mathbf{u}),\tilde{\mathbf{m}},(\tau,\mathbf{v}))=1$ & $L^{\mathsf{SIGN}}_{\mathrm{X}}$ \tnote{$\dagger$} & $L^{\mathsf{SIGN}}_\mathcal{M}$ \tnote{$\diamondsuit$} \\\smallskip
					$\mathsf{ACC}.\mathbf{Tverify}_{}^{}(\mathbf{u},\mathbf{b}^{\prime},\mathbf{w}^{\prime})=1$ &  $2L+4l_1L+2l_2L$ & $L + 2l_1L + 2l_2L$ \\\smallskip
					$\check{\mathbf{t}}^{\prime}+c\cdot \mathbf{y}=\check{\mathbf{t}}$ & $2m_D$ & $0$ \\
					\makecell[l]{$\mathbf{G}_{n_{\mathbf{E}},{q_1}-1}\cdot \mathbf{b}^{\prime}=\mathbf{E}_{}\cdot \mathsf{vdec}(\mathsf{M2V}(\mathbf{B}+\check{\mathbf{B}}))\text{ mod } {q_1}$} 
					& $L^{\mathsf{TagBase}}_{\mathrm{X}}$ \tnote{$\star$}
					& $L^{\mathsf{TagBase}}_{\mathcal{M}}$ \tnote{$\heartsuit$}\\
					\bottomrule
				\end{tabular}
				\begin{tablenotes}   
					\footnotesize              
					\item[$\dagger$]          
					$L^{\mathsf{SIGN}}_{\mathrm{X}}=1+k_{q'}+m_1k_{\alpha_1}+m_2k_{\alpha}+m_3+2n$.
					\item[$\diamondsuit$] $L^{\mathsf{SIGN}}_\mathcal{M}=L_{\mathrm{X}}-n-1$.
					\item[$\star$] $L^{\mathsf{TagBase}}_{\mathrm{X}}=\mathfrak{b}\cdot\lambda\cdot\lceil \log{(m_Dn_1 k'\cdot(2^\iota-1)/\mathfrak{b})}\rceil+n_{\mathbf{E}}\lceil \log{(q_1-1)}\rceil+m_Dn_1 k'+3m_Dn_1$.
					\item[$\heartsuit$] $L^{\mathsf{TagBase}}_{\mathcal{M}}=\mathfrak{b}\cdot\lambda\cdot\lceil \log{(m_Dn_1 k'\cdot(2^\iota-1)/\mathfrak{b})}\rceil+n_{\mathbf{E}}\lceil \log{(q_1-1)}\rceil$.
				\end{tablenotes}      
			\end{threeparttable}
		}
	\end{table*}
	
	\begin{table*}[h]
		\caption{\centering Witness length and the size of $\mathcal{M}$ of $k$-TAA in \cite{Rupeng_wPRF}}
		\centering
		\scalebox {1} {
			\begin{tabular}{ll} 
				\toprule
				Terms & Witness length  \\
				\midrule\smallskip
				$\check{\mathbf{t}}^\prime=\mathsf{PRF}.\mathbf{Eval}(\mathbf{s},\check{\mathbf{B}})$  & 
				\makecell[l]{$m_D(2n\delta_{q_1-1}+3\delta_{\beta_1}+4n\delta_{q_1-1}^2+2\delta_{p-1})+2n\delta_{q_1-1}$}  \\ \smallskip
				$\mathbf{y}=\mathsf{PRF}.\mathbf{Eval}(\mathbf{s},\mathbf{A})$ & $2n\delta_{q_1-1} + 3m_D\delta_{\beta_1} + 2m_D\delta_{p-1}$ \\ \smallskip
				$\mathbf{t}=\mathsf{PRF}.\mathbf{Eval}(\mathbf{s},\mathbf{B})$ & \makecell[l]{$2m_Dn\delta_{q_1-1} + 2n\delta_{q_1-1}+ 3m_D\delta_{\beta_1} + 4m_Dn\delta_{q_1-1}^2$} \\ \smallskip
				$\mathsf{SIGN}.\mathbf{Verify}^{}(PK_{CL},\mathbf{y},\sigma)=1$ &
				\makecell[l]{$m''(6\delta_{\beta_2}\ell+6\delta_{\beta_2}+2+2N +6\delta_{p-1})+3L_0\delta_B+2\ell$} \\ \smallskip
				$\mathsf{ACC}.\mathbf{Tverify}_{}^{}(\mathbf{u},\mathbf{b}^{\prime},\mathbf{w})=1$ & $(5l-1)\cdot 2n\delta_{q_1-1}$ \\ \smallskip
				$\gamma\cdot\check{\mathbf{t}}^{\prime}+\gamma\cdot c\cdot \mathbf{y}=\gamma\check{\mathbf{t}}\text{ mod } q$ & $4m_D\delta_{p-1}$\\ 
				\makecell[l]{$\mathbf{G}_{n,q_1-1}\cdot \mathbf{b}^{\prime}$$=\mathbf{E}_{}\cdot \mathsf{bin}(\mathsf{M2V}(\mathbf{B}\|\check{\mathbf{B}})) \text{ mod } q_1$} & $(2m_D+1)\cdot n\delta_{q_1-1}$\\ 
				\bottomrule
			\end{tabular}
		}
		\label{Witness_length_Yang}
	\end{table*}
	
	\begin{table*}[ht]
		\caption{\centering Estimated parameters for the same security of schemes}
		\centering 
		\scalebox {1} {
			\begin{threeparttable}
				\begin{tabular}{l@{\hskip 0.4cm}l@{\hskip 0.4cm}l@{\hskip 0.4cm}l@{\hskip 0.4cm}l@{\hskip 0.4cm}l@{\hskip 0.4cm}l@{\hskip 0.4cm}l@{\hskip 0.4cm}l}
					\toprule
					& \multicolumn{2}{c}{This Paper} & \multicolumn{2}{c}{\cite{Rupeng_wPRF}} &
					\multicolumn{2}{c}{\cite{YAZ+19_ZKP_on_Z}} & \multicolumn{2}{c}{\cite{e-cash_DLNS20}}\\
					\cmidrule(r){2-3}   \cmidrule(r){4-5}  \cmidrule(r){6-7} \cmidrule(r){8-9} 
					Level & 80 $\mathtt{bits}$ & 128 $\mathtt{bits}$ & 80 $\mathtt{bits}$ & 128 $\mathtt{bits}$ & 80 $\mathtt{bits}$ & 128 $\mathtt{bits}$ & 80 $\mathtt{bits}$ & 128 $\mathtt{bits}$ \\
					\midrule
					$n$  			& 180 		& 175  		& 1270 	& 1565 & 1050 & 1050 &355 & 460	\\		
					$\log{q}$ 		& 201		& 301  		& 96 	& 113 & 201 & 310 &181 &211	\\
					$\log{p}$ 		& 11 		& 11  		& 9		& 13 & 105 & 155 &61 & 71	\\
					$\log{\hat{p}}$ & 80 		& 128  		& ---	& --- & 80 & 128 &80 & 128	\\
					$N_\kappa$ 		& 1 		& 1 		& 137	& 219 & 1 & 1 &1 & 1	\\
					$\|\pi\|$ 		& $\approx\text{314 } \mathtt{MB}$ & $\approx \text{758 } \mathtt{MB}$  &$\geq \text{28.59 }\mathtt{TB}$ &$\geq \text{98.13 }\mathtt{TB}$ & $\approx \text{262 } \mathtt{MB}$ & $\approx \text{671 } \mathtt{MB}$ &$\approx \text{21.43 }\mathtt{GB}$& $\approx\text{38.47 }\mathtt{GB}$ \\
					\bottomrule
				\end{tabular}
			\end{threeparttable}
		}
		\label{parameters_comparison}
	\end{table*}
	
	\begin{table}[ht]
		\caption{\centering Selected parameters of this paper for 80-bits security}
		\centering 
		\scalebox{1}{
			\begin{tabular}{l@{\hspace{2.2mm}}l@{\hspace{2.2mm}}l} 
				\toprule
				Parameters & Description & Value \\
				\midrule
				$n$  				& SIS dimension for $\mathsf{SIGN}$ and $\mathsf{DACC}$  	& 180 \\
				$n_1$  				& SIS dimension for other components  	& 32 \\
				$q_0$				& Base prime for moduli					& $2^{10}+7$ \\
				$p$  				& Prime for $\mathsf{PRF}$ 						& $q_0$ \\
				$q_1$  				& Modulus for $\mathsf{PRF}$ 						& $q_0^{19}$ \\ 
				$q$  				& Modulus for other components 			& $q_0^{20}$ \\
				$q^\prime$  		& Tag bound 							& $2^{10}$ \\
				
				$m\;(\text{or }m_1)$  & Commitment trapdoor dimension 				& 22740 \\
				$m^\prime(\text{or }m_2)$  & $n\cdot\lceil\log{q}\rceil$		& 36180 \\
				$m_e$  				& Bit-size of message for $\mathsf{SIGN}$  						& $2^{12}$ \\	
				
				$m_D$ 				& $\mathsf{PRF}$ trapdoor dimension 				& 1360 \\
				$m_S$ 				& $\mathsf{ACC}$ trapdoor dimension 		& 88440 \\
				$m_A$ 				& $\mathsf{DACC}$ trapdoor dimension 		& 18090 \\
				
				
				$\sigma$ 			& Gaussian width for signatures 	& 2761.34 \\
				$\sigma_1$ 			& $\sqrt{\sigma^2+\sigma_2^2}$			& 2761.37 \\
				$\sigma_2$ 			& Gaussian width for commitment	& 12.73 \\	
				$s$ 				& $\mathsf{DACC}$ preimage sampling width & 1224.81 \\	
				$k$ 				& $\lceil \log{q}\rceil$ 				& 201 \\
				$k_1$ 				& $\lceil \log{p}\rceil$ 				& 11 \\
				$k_2$ 				& $\lceil \log{q_1}\rceil$ 				& 191 \\
				
				$N$ 				& AP's member bound 					& $2^{10}$	\\
				$l$ 				& Merkle tree depth ($\log{N}$) 		& 10 \\
				$\iota$ 			& $\mathsf{vdec}(\cdot)$ decomposition size & 10 \\
				$\mathfrak{b}$ 		& Divided witness block size 			& 40 \\
				\bottomrule
			\end{tabular}
		}
		\label{parameters}
	\end{table}
	
	The ZKAoK used in our work is \cite{YAZ+19_ZKP_on_Z}, 
	we give a detailed analysis of proof size as follows.
	
	We rely on the NIZKAoK scheme of \cite{YAZ+19_ZKP_on_Z}, and use the optimized version of \cite{JRS23_ACS}, which result in the proof size $\|\pi_1\|=$
	\[ \begin{aligned}
		&\left(\log{\left(2\hat{p}+1\right)}+\hat{\kappa}+\left(\hat{l}_1+\mathfrak{n}+\mathfrak{l}\right)\cdot\log{q}\right)\cdot N_\kappa+\left(\hat{l}_1 + \mathfrak{n}\right)\cdot\log{q}\\
		&+\left(\left(2\hat{l}_1+2\hat{l_2}+\mathfrak{n}+\mathfrak{l}\right)\cdot\log\left(\sigma_{2}\log\left(2\hat{l}_1+2\hat{l_2}+\mathfrak{n}+\mathfrak{l}\right)\right)\right)\cdot N_\kappa
	\end{aligned}\]
	where $\mathfrak{n}$ is the total length of witness, $\mathfrak{l}$ is the size of $\mathcal{M}$, and $\hat{p}$, $\hat{\kappa}$, $\hat{l}_1$, $\hat{l}_2$, $q$, $N_\kappa$ are all parameters of ZKAoK system. The bit-size of both the witness length and $\mathcal{M}$'s size of every term in $\pi_1$ is shown in Table \ref{Witness_length_Ours}.

	By combining these statements, the total length of witness is:
	\begin{flalign*}
		\hspace{1.5em}
		\mathfrak{n}&= 3n_1+7m_Dn_1+8m_D+3(\lfloor \log{\frac{q_1}{p}}\rfloor+1)m_D+2 m_3\\
		&+m_D\lceil \log{({p}-1)}\rceil +\lceil \log_2{N}\rceil\cdot N +2\ell+(m_S/2)\cdot2\ell &\\
		&+m_A(\lfloor\log{(2\gamma_1)}\rfloor+1)+{n_{\mathbf{E}}\lceil\log{(q_{1}-1)}\rceil}+4n_1\ell\\
		&+m_1(\lfloor\log{(2\lfloor \sigma_1\log{m_1}\rfloor)}\rfloor+1)+2+\lfloor\log{q'}\rfloor+2n&\\
		&+\mathfrak{b}\lambda\cdot\lceil \log{({m_Dn_1\lceil\log{(q_1-1)}\rceil}/\iota\cdot(2^\iota-1)/{\mathfrak{b}})}\rceil\\
		&+m_2(\lfloor\log{(2\lfloor \sigma\log{m_2}\rfloor)}\rfloor+1)+m_Dn_1\lceil\log{(q_1-1)}\rceil/\iota,&
	\end{flalign*}
	
	and the size of $\mathcal{M}$ is:
	\begin{flalign*}
		\hspace{1.5em}\nonumber\mathfrak{l}\;&= 3(\lfloor \log{\frac{q_1}{p}}\rfloor+1)m_D+2m_Dn_1+m_D\lceil \log{({p}-1)}\rceil\\
		&+ m_A(\lfloor\log{(2\gamma_1)}\rfloor+1)+\lceil \log_2{N}\rceil\cdot N+ \ell+2n_1\ell &\\
		\nonumber &+(m_S/2)\cdot2\ell+ \lfloor\log{q'}\rfloor+1+{n_{\mathbf{E}}\lceil\log{(q_{1}-1)}\rceil}\\
		&+\mathfrak{b}\lambda\cdot\lceil \log{({m_Dn_1\lceil\log{(q_1-1)}\rceil}/\iota\cdot(2^\iota-1)/{\mathfrak{b}})}\rceil\\
		&+m_2(\lfloor\log{(2\lfloor \sigma\log{m_2}\rfloor)}\rfloor+1)+n +2m_3\\
		&+m_1(\lfloor\log{(2\lfloor \sigma_1\log{m_1}\rfloor)}\rfloor+1).&
	\end{flalign*}

	At this point, we can estimate the communication cost of our dynamic $k$-TAA system by setting concrete values to the parameters associated with $\|\pi_1\|$. The security of the system is based on the following nine assumptions:
	$$
	\begin{aligned}
		&SIS_{\hat{l}_1,q,\beta_1};SIS_{\hat{l}_1,q,\beta_2};LWE_{\hat{l}_2,q,\alpha};SIS_{n,q,\beta_4};LWR_{n,q_1,p};\\
		&SIS_{n,q,\beta_5};SIS_{n,q_1,\beta_6};SIS_{n',p,\beta_7};SIS_{n,m_A+l,q,\beta_8},
	\end{aligned}
	$$			
	where $\beta_{1}=16\hat{p}\cdot\sqrt{\hat{l}_{1}+\hat{l}_{2}+\mathfrak{n}_{2}}\cdot(\hat{\sigma}_{2}+\hat{p}\cdot\hat{\sigma}_{1})$, $\beta_{2}=16\hat{p}\cdot\sqrt{\hat{l}_{1}+\hat{l}_{2}+\mathfrak{l}_{2}}\cdot(\hat{\sigma}_{2}+\hat{p}\cdot\hat{\sigma}_{1})$, $\alpha=\frac{\sqrt{2\pi}\cdot\hat{\sigma}_{1}}{q}$, $\beta_{4}=\sqrt{1+(\sqrt{m_1}+\sqrt{m_2}+t)^2}\cdot\sqrt{m_1(\sigma_1\log{m_1})^2+m_2(\sigma\log{m_2})^2}+\min{(2\sqrt{m_1}, \sqrt{m_1} + \sqrt{m_3} + t)}\sqrt{m_3} + 1$, $\beta_{5}=\sqrt{m_{S}}$, $\beta_{6}=\sqrt{m_Dn\lceil \log{({q_1}-1)} \rceil/\iota}\cdot{m_Dn\lceil \log{({q_1}-1)} \rceil}/\iota\cdot(2^\iota-1)/\mathfrak{b}$,  $n'=\frac{m_3}{\lceil\log{p}\rceil}$, $\beta_{7}=\sqrt{m_D\log{p}}$, and  $\beta_8=\sqrt{2\gamma_1^2+s^2m+l}$.
	
	More specifically, we set the parameters according to those listed in Table \ref{parameters}. To ensure 80 bits security, we set $\hat{l}_1=7050$, $\hat{l}_2=7020$, $\hat{p}=2^{80}$, $\hat{\kappa}=80$, $N_\kappa=1$. So, we get $\mathfrak{n}=3708551.34$, $\mathfrak{l}=2648575.34$ and $\|\pi_1\|\approx 314$ $\mathtt{MB}$.

	The communication cost of Abstract Stern's Protocol \cite{LLM+16a_Signature} $\pi_2$ is $$\|\pi_2\|=N_\kappa\cdot\tilde{\mathcal{O}}(L\cdot\log{q}),$$ where $L$ is the witness length. The bit-size of $L$ of every term in $\pi_2$ is summarized in Table \ref{Witness_length_Yang}.

	So, after combing these above, we obtain:
	\begin{align*}
		L&=4m_Dn\delta_{{q_1}-1}+6n\delta_{{q_1}-1}+9m_D\delta_{\beta_1}+8m_Dn\delta_{{q_1}-1}^2\\
		&+2\ell+2m_D\delta_{p-1}+3m''\delta_{\beta_2}(2\ell+2)+2m''+2m''N \\
		&+6m''\delta_{p-1}+3L_0\delta_B+(5l-1)\cdot 2n\delta_{q_1-1}+4m_D\delta_{p-1}\\
		&+(2m_D+1)\cdot n\delta_{q_1-1},
	\end{align*}			
	where $m''=2n\lceil\log q\rceil$, $\beta_1=\lfloor\frac{q_1}{2p}\rfloor$, $\beta_2=\sigma\cdot\omega(\log{m''})$ - the infinity norm bound of signatures, $\delta_{B}=\lfloor\log_2{B}\rfloor+1$, and $L_0=\ell+3n+7m''+3m''N+nN$.

	Then the security of the scheme \cite{Rupeng_wPRF} is based on the following four assumptions:
	$$
	\begin{aligned}
		&SIS_{n,q,\beta_1};LWR_{n,q_1,p};SIS_{n,q,\beta_2};SIS_{n,q_1,\beta_3},\\
	\end{aligned}
	$$			
	where $\beta_1=\sigma^2\cdot(2nk)^{1.5}\cdot80$, $\beta_2=\sqrt{2nk}$, $\beta_3=\sqrt{2mnk_1}$, $\sigma=1.6k\cdot\sqrt{n}$, $k_1=\lceil\log{q_1}\rceil$, and $k=\lceil\log{q}\rceil$. To ensure 80 bits security, we can set $n=1270$, $p=2^{9}-3$, $q=2^{96}-17$ , $q_1=27995$, So, we get $L\approx 19125911100$, and the communication cost $\|\pi_2\|\geq \text{28.59 }\mathtt{TB}$, which is about 67,000 times larger than the communication cost of our scheme. 
	
	The proof size for the scheme of \cite{e-cash_DLNS20} is provided in their work, where the underlying ZKAoK framework is \cite{YAZ+19_ZKP_on_Z}. The witness length is:
	\begin{flalign*}
		\hspace{1.5em}&\ell + (\ell+2)m_s\delta_{2\sigma^\prime} + n\delta_{q_s-1} + 2m_s\delta_{2\sqrt{2}\sigma^\prime} + 3m + m\delta_{q-1} &\\
		&+ m\delta_{p-1}+ m\lceil \log{p}\rceil + 2L + m(L+1) + 2mL + m\delta_{q/p-1},
	\end{flalign*}
	
	the size of $\mathcal{M}$ is:
	\begin{flalign*}
		\hspace{1.5em}&\ell + (\ell+2)m_s\delta_{2\sigma^\prime} + n\delta_{q_s-1} + 2m_s\delta_{2\sqrt{2}\sigma^\prime} + m + m\delta_{q-1} &\\
		&+ m\delta_{p-1}+ m\lceil \log{p}\rceil + 2L + m\delta_{q/p-1}.
	\end{flalign*}
	
	Then the security of the scheme \cite{e-cash_DLNS20} is based on the following four assumptions:
	
	$$LWR_{m,m,q,p}; {LWE}_{n_{\mathsf{LTF}},m_{\mathsf{LTF}},q_{\mathsf{LTF}},\alpha};SIS_{n,m,q,2\sqrt{m}};{SIS}_{n,m_s,q_{s},\beta^{\prime}},$$
	
	where $\alpha=\Theta\left(\frac{\sqrt{n_{\mathsf{LTF}}}}{q_{\mathsf{LTF}}}\right),q_{\mathsf{LTF}}=\Theta(n_{\mathsf{LTF}}^{1+1/\gamma})\text{ for constant }\gamma<1$, and $\beta^{\prime}=\mathcal{O}(\sigma^{2}m_{s}^{1/2}(m_{s}+m\delta_{q-1}))$.
	For 80 bits security, as recommended in \cite{e-cash_DLNS20}, we set $n=355$, $p=2^{60}+33$, $q=p^{3}$, $q_s=p^{3}$, and get the communication cost $\|\pi_3\|\approx \text{21.43 }\mathtt{GB}$. 
	
	A detailed comparison of communication costs among this paper and \cite{Rupeng_wPRF,e-cash_DLNS20,YAZ+19_ZKP_on_Z} is presented in Table \ref{parameters_comparison}. 
	We evaluate the communication costs under two different security levels, 80 $ \mathtt{bits}$ and 128 $\mathtt{bits}$. 
	It shows that the communication cost of our scheme is much lower than that of \cite{Rupeng_wPRF} and \cite{e-cash_DLNS20} in the same security level, our scheme is efficient in terms of communication cost. 
	Although our communication cost is sightly higher than that of \cite{YAZ+19_ZKP_on_Z}, our scheme provides stronger security properties and better features: it achieves exculpability, supports dynamic management, and enables attribute-based authentication. In particular, in both e-cash and $k$-TAA, exculpability is a fundamental security property, and Deo et al. \cite{e-cash_DLNS20} showed that the e-cash scheme of \cite{YAZ+19_ZKP_on_Z} fails to provably achieve this property.

	\section{Conclusion}
	In this paper, we constructed the first lattice-based dynamic $k$-TAA, which not only supports limited number of authentication and selective disclosure of credential attributes for fine-grained authentication, but also offers anonymity, accountability, dynamicity and exculpability. In addition, compared with existing lattice-based $k$-TAA, our scheme achieves higher efficiency in terms of communication cost. We also formally proved its security based on standard lattice assumptions, which ensures post-quantum security. Overall, our results show that dynamic $k$-TAA can provide both efficiency and post-quantum security, and we will continue to improve efficiency to ensure practicality.

	\section*{Acknowledgment}
	This work was supported by the National Natural Science Foundation of China (Grant No. 62372103), the Natural Science Foundation of Jiangsu Province (Grant No. BK20231149) and the Jiangsu Provincial Scientific Research Center of Applied Mathematics (Grant No.BK202330020).
	
	\section*{Data Availability}
	Data will be made available on request.

			\appendix
			
			\section{Zero-Knowledge Arguments of Knowledge}\label{secA1}
			We now detail out the ZKAoK that we use to instantiate the protocols from \ref{Dynamic_k-TAA}. Since we propose a construction over $\mathbb{Z}_q$, we employ the framework from \cite{YAZ+19_ZKP_on_Z} to tackle the relations to be proven. The private information is all placed in \emph{witness}, and if some vectors on $\mathbb{Z}_q$ have norm requirements, it needs to be converted to binary/short vector using the technique in \cite{LNSW13_decomposition}. We also use the method of proving relation with hidden matrices, and the method of transforming the LWR relation \cite{YAZ+19_ZKP_on_Z,LLMNW16_Hidden_Matrix}. 

				We remark that our \textbf{authentication} protocol involves equations over different moduli. Under the assumption that $q$ is a prime power and that every other modulus divides $q$, all such equations can be lifted to the modulus $q$. Concretely, given an equation over $\mathbb{Z}_{q'}$ with $q' \mid q$, we can lift it to $\mathbb{Z}_q$ by multiplying both sides by $q/q' \in \mathbb{Z}$. We rely on this lifting property in the subsequent analysis.

			Now, we need to transform various relations to:
			\[\mathcal{R}^{*}\hspace{0mm}=\hspace{0mm}
			\left\{\hspace{0mm}
			\begin{aligned}
				&(\mathbf{P}, \mathbf{z}, \mathcal{M}), (\mathbf{x})\hspace{0mm}\in\hspace{0mm}\bigl(\mathbb{Z}_{q}^{m\times n}\times \mathbb{Z}_{q}^{m}\times ([1,n]^3)^\ell\bigr)\hspace{0mm}\times\hspace{0mm}(\mathbb{Z}_q^n) \hspace{0mm}:\\
				& \mathbf{P}\cdot\mathbf{x}=\mathbf{z}\wedge\forall(h,i,j)\in\mathcal{M},\mathbf{x}[h]=\mathbf{x}[i]\cdot\mathbf{x}[j]
			\end{aligned}
			\right\},
			\]
			
			and analyze the size of witness and the size of $\mathcal{M}$ after the transformation.
			\subsection{ZKAoK of wPRF Preimage and Key}\label{part_one}
			In this section, we give an argument for the weak pseudorandom function constructed implicitly in \cite{Rupeng_wPRF}. In particular, the argument claims knowledge of a key/input pair that evaluates to a public output.
			More precisely, let $\gamma =\frac{q_1}{p}$ is an odd integer, and we need to lift an equation over $\mathbb{Z}_{q_1}$ to an equation over $\mathbb{Z}_q$, we propose a ZKAoK for the following relation:
			$$
			\mathcal{R}_{wPRF}=\left\{
			\begin{aligned}
				(\mathbf{y}),(\mathbf{A},\mathbf{k})&\in(\mathbb{Z}_{p}^{m})\times(\mathbb{Z}_{q_1}^{m\times n}\times\mathbb{Z}_{q_1}^{n}):\\
				&\mathbf{y}= \lfloor \mathbf{A} \cdot \mathbf{k} \rfloor _p \mod p
			\end{aligned}
			\right\}.
			$$
			
			We construct the argument via reducing the relation $\mathcal{R}_{wPRF}$ to an instance of the relation $\mathcal{R}^{*}$. First, we rewrite the equation $\mathbf{y}= \lfloor \mathbf{A} \cdot \mathbf{k} \rfloor _p \mod p$ as follows:
			$$\left\{
			\begin{aligned}
				\mathbf{A}\cdot\mathbf{k} & =  \mathbf{u}\mod q_1 \\
				\lfloor \frac{p}{q_1}\cdot\mathbf{u} \rfloor & =  \mathbf{y} \mod p
			\end{aligned}
			\right..$$
			
			The first equation is a linear equation with hidden matrix. The second equation holds iff each element of the vector $\mathbf{e}=\mathbf{u}-\gamma\cdot\mathbf{y}\in\mathbb{Z}^m$ is in $[0,\gamma]$, and thus can be transformed into a linear equation with short solution. Namely,
			$$\left\{
			\begin{aligned}
				\mathbf{A}\cdot\mathbf{k} & =  \mathbf{u}\mod q_1 \\
				\gamma\cdot\mathbf{y} &= \mathbf{u}-\mathbf{e} \mod q_1
			\end{aligned}
			\right. .$$
			
			%
			First, 
			for $i\in[1,m]$, we also define $\mathbf{a}_i$ as the $i$-th row of $\mathbf{A}$ and define $\mathbf{v}_i$ as the Hadamard product between $\mathbf{a}_i$ and $\mathbf{k}$, i.e., $\mathbf{v}_i[j]=\mathbf{a}_i[j]\cdot\mathbf{k}[j]$ for $j\in[1,n]$.
			
			Then we decompose the vector $\mathbf{e}^{}$ into a binary vector $\bar{\mathbf{e}}^{}$ using the decomposition technique proposed in \cite{LNSW13_decomposition}. Let $k_1=\lfloor \log{\gamma}\rfloor+1$ and $\mathbf{g}_1=(\lfloor \frac{\gamma+1}{2} \rfloor\|\lfloor \frac{\gamma+2}{4} \rfloor\|...\|\lfloor \frac{\gamma+2^{i-1}}{2^i} \rfloor\|...\|\lfloor \frac{\gamma+2^{k_1-1}}{2^{k_1}} \rfloor)$, and $\mathbf{g}_1$ is a row vector. 
			
			Next, we define the gadget matrix $\mathbf{G}_1=\mathbf{I}_{m}\otimes\mathbf{g}_1$, and it satisfies that $\mathbf{G}_1\cdot\bar{\mathbf{e}}=\mathbf{e}$. 
			Also, we define $\mathbf{a}=(\mathbf{a}_1^{\top}\|...\|\mathbf{a}_m^{\top})^{\top}\in\mathbb{Z}_{q_1}^{m\cdot n}$ and define $\mathbf{v}=(\mathbf{v}_1^{\top}\|...\|\mathbf{v}_m^{\top})^{\top}\in\mathbb{Z}_{q_1}^{m\cdot n}$.
			
			Finally, we set 
			$$\mathbf{P}=\frac{q}{q_1}\cdot\begin{pmatrix}
				\mathbf{0} & \mathbf{0} & \mathbf{M} & -\mathbf{I}_m & \mathbf{0}  \\
				\mathbf{0} & \mathbf{0} & \mathbf{0} & \mathbf{I}_m  & -\mathbf{G}_1  
			\end{pmatrix},$$
			$$\mathbf{x}=\begin{pmatrix}
				\mathbf{k}^{\top} & \mathbf{a}^{\top} & \mathbf{v}^{\top} & \mathbf{u}^{\top} & \bar{\mathbf{e}}^{\top} 
			\end{pmatrix}^{\top}, \mathbf{z}=\frac{q}{q_1}\cdot\left(\begin{array}{c|c}
				\mathbf{0} & \gamma\mathbf{y}^{\top}
			\end{array}\right)^{\top},
			$$
			
			where $\mathbf{M}=\mathbf{I}_m\otimes(1,1,...,1)\in\mathbb{Z}_{q_1}^{m\times m\cdot n}$.
			Besides, we define
				$\mathcal{M}_1=\{(i,i,i)\}_{i\in[n+2mn+m+1,n+2mn+m+k_1m]}$, 
				and $\mathcal{M}_2=\{(n+mn+(i-1)\cdot n+j),n+(i-1)\cdot n+j,j\}_{i\in[1,m],j\in[1,n]}$,
			where $\mathcal{M}_1$ indicates that $\bar{\mathbf{e}}$ is a binary vector and $\mathcal{M}_2$ indicates that $\mathbf{v}_i$ as the Hadamard product between $\mathbf{a}_i$ and $\mathbf{k}$, then we set $\mathcal{M}=\mathcal{M}_1\cup\mathcal{M}_2$. In the new relation, the length of the witness is $n+2mn+m+k_1m$, and the size of $\mathcal{M}$ is $k_1m+mn$.
			
			\subsection{ZKAoK of wPRF Preimage, Key and Output}
			In this section, we give an argument for the weak pseudorandom function constructed implicitly in \cite{Rupeng_wPRF}. In particular, we consider the case that the secret key, the input and the output all need to be hidden.
			More precisely, let $p\geq 2$ and $\gamma =\frac{q_1}{p} \in n^{\omega(1)}$ is an odd integer, we propose a ZKAoK for the following relation:
			$$\mathcal{R}_{wPRF}^{\prime}=\left\{\begin{aligned}
				(\perp),(\mathbf{y},\mathbf{A},\mathbf{k})&\in(\perp)\times(\mathbb{Z}_{p}^{m}\times\mathbb{Z}_{q_1}^{m\times n}\times\mathbb{Z}_{q_1}^{n}):\\
				&\mathbf{y}= \lfloor \mathbf{A} \cdot \mathbf{k} \rceil _p \mod p
			\end{aligned}\right\}. $$
			
			We construct the argument via reducing the relation $\mathcal{R}_{wPRF}^{\prime}$ to an instance of the relation $\mathcal{R}^{*}$. The required vectors and matrices are roughly the same as in Section \ref{part_one}, except:
			$$\mathbf{P}=\frac{q}{q_1}\cdot\begin{pmatrix}
				\mathbf{0} & \mathbf{0} & \mathbf{M} & -\mathbf{I}_m & \mathbf{0}   & \mathbf{0}\\
				\mathbf{0} & \mathbf{0} & \mathbf{0} & \mathbf{I}_m  & -\mathbf{G}_1 & -\gamma\mathbf{I}_{m}
			\end{pmatrix},$$
			$$\mathbf{x}=\begin{pmatrix}
				\mathbf{k}^{\top} & \mathbf{a}^{\top} & \mathbf{v}^{\top} & \mathbf{u}^{\top} & \bar{\mathbf{e}}^{\top} & \mathbf{y}^{\top}
			\end{pmatrix}^{\top}, \mathbf{z}=\frac{q}{q_1}\cdot\left(\begin{array}{c|c}
				\mathbf{0} & \mathbf{0}
			\end{array}\right)^{\top},
			$$
			and
				$\mathcal{M}_1=\{(i,i,i)\}_{i\in[n+2mn+m+1,n+2mn+m+k_1m]}$, 
				and $\mathcal{M}_2=\{(n+mn+(i-1)\cdot n+j),n+(i-1)\cdot n+j,j\}_{i\in[1,m],j\in[1,n]}$,
			where $\mathcal{M}_1$ indicates that $\bar{\mathbf{e}}$ is a binary vector and $\mathcal{M}_2$ indicates that $\mathbf{v}_i$ as the Hadamard product between $\mathbf{a}_i$ and $\mathbf{k}$, then we set $\mathcal{M}=\mathcal{M}_1\cup\mathcal{M}_2$. In the new relation, the length of the witness is $n+2mn+2m+k_1m$, and the size of $\mathcal{M}$ is $k_1m+mn$.
			
			\subsection{ZKAoK of wPRF Key and Output}
			In this section, we give an argument for the weak pseudorandom function constructed implicitly in \cite{Rupeng_wPRF}. In particular, we consider the case that the secret key, the input and the output all need to be hidden.
			
			More precisely, let $p\geq 2$ and $\gamma =\frac{q_1}{p} \in n^{\omega(1)}$ is an odd integer, we propose a ZKAoK for the following relation:
			$$\mathcal{R}_{wPRF}^{\prime\prime}=\left\{\begin{aligned}
				(\mathbf{A}),(\mathbf{y},\mathbf{k})&\in(\mathbb{Z}_{q_1}^{m\times n})\times(\mathbb{Z}_{p}^{m}\times\mathbb{Z}_{q_1}^{n}):\\
				&\mathbf{y}= \lfloor \mathbf{A} \cdot \mathbf{k} \rceil _p \mod p
			\end{aligned}\right\} .$$
			
			
			We construct the argument via reducing the relation $\mathcal{R}_{wPRF}^{\prime\prime}$ to an instance of the relation $\mathcal{R}^{*}$. The required vectors and matrices are roughly the same as in Section \ref{part_one}, except:
			$$\mathbf{P}=\frac{q}{q_1}\cdot\begin{pmatrix}
				\mathbf{A} & -\mathbf{I}_m & \mathbf{0}  & \mathbf{0}\\
				\mathbf{0} & \mathbf{I}_m  & -\mathbf{G}_1 & -\gamma\mathbf{I}_{m}
			\end{pmatrix},$$
			$$\mathbf{x}=\begin{pmatrix}
				\mathbf{k}^{\top} & \mathbf{u}^{\top} & \bar{\mathbf{e}}^{\top} & \mathbf{y}^{\top}
			\end{pmatrix}^{\top}, \mathbf{z}=\frac{q}{q_1}\cdot\left(\begin{array}{c|c}
				\mathbf{0} & \mathbf{0}
			\end{array}\right)^{\top},
			$$
			and
			$$
			\begin{aligned}
				\mathcal{M}_1&=\{(i,i,i)\}_{i\in[n+m+1,n+m+k_1m]} \\
			\end{aligned},$$
			
			where $\mathcal{M}_1$ indicates that $\bar{\mathbf{e}}$ is a binary vector, then we set $\mathcal{M}=\mathcal{M}_1$. In the new relation, the length of the witness is $n+2m+k_1m$, and the size of $\mathcal{M}$ is $k_1m$.
			
			\subsection{ZKAoK of Accumulator Value in \cite{LLNW16_static_acc}}
			Yang et al. \cite{YAZ+19_ZKP_on_Z} have shown how to instantiate an argument of knowledge of an accumulator value for the accumulator scheme presented in \cite{LLNW16_static_acc} in zero-knowledge with the proof system of \cite{YAZ+19_ZKP_on_Z}, please refer to \cite{YAZ+19_ZKP_on_Z}. 
			
			\subsection{ZKAoK of Accumulator Value in \cite{ZYH24_dynamic_acc}}
			In this section, we give an argument of knowledge of an accumulator value for the accumulator scheme presented in \cite{ZYH24_dynamic_acc}.
			More precisely, 
			we propose a ZKAoK for the following relation:
			$$						\mathcal{R}_{DACC}=\left\{
			\begin{aligned}
				(\mathbf{A}&,\mathbf{U},\mathbf{u}),(j,\mathbf{y},\mathbf{w})\in(\mathbb{Z}_q^{n\times m}\times \mathbb{Z}_q^{n\times l\cdot N}\times\mathbb{Z}_q^n)\\
				&\times([N]\times\{0,1\}^{l\cdot N}\times\mathbb{Z}_q^m):\;\\
				&\mathbf{u}=\mathbf{Aw}+\mathbf{Uy}\wedge\|\mathbf{w}\|_\infty\leq\gamma_1\wedge\mathbf{y}=\vartheta_{j}^{N}
			\end{aligned}
			\right\}.$$
			
			We construct the argument via reducing the relation $\mathcal{R}_{DACC}$ to an instance of the relation $\mathcal{R}^{*}$. 
			
			First, we define vectors $\boldsymbol{\gamma}=(\gamma_1,\gamma_1,...,\gamma_1)^{\top}\in\mathbb{Z}_{q}^m$, and define $\mathbf{w}^{\prime}=\mathbf{w}+\boldsymbol{\gamma}$. Then we decompose the vector $\mathbf{w}^{\prime}$ into a binary vector $\bar{\mathbf{w}}^{\prime}$ using the decomposition technique proposed in \cite{LNSW13_decomposition}. Let $k=\lfloor \log{(2\gamma_1)}\rfloor+1$ and $\mathbf{g}=(\lfloor \frac{2\gamma_1+1}{2} \rfloor\|\lfloor \frac{2\gamma_1+2}{4} \rfloor\|...\|\lfloor \frac{2\gamma_1+2^{i-1}}{2^i} \rfloor\|...\|\lfloor \frac{2\gamma_1+2^{k-1}}{2^k} \rfloor)$ be a row vector. Also, we define the gadget matrix $\mathbf{G}=\mathbf{I}_{m}\otimes\mathbf{g}$, and it satisfies that $\mathbf{G}\cdot\bar{\mathbf{w}}^{\prime}=\mathbf{w}^{\prime}$.
			
			Finally, we set 
			$$
			\mathbf{P} = \left(
			\begin{array}{c|c}
				\mathbf{A} \mathbf{G} & \mathbf{U} \\
			\end{array}\right),
			$$
			$$\mathbf{x}=\left(
			\begin{array}{c|c}
				\bar{\mathbf{w}}^{\prime\top} & \mathbf{y}^{\top} 
			\end{array}\right)^{\top}, \mathbf{z}^{}=\mathbf{u}+\mathbf{A}\cdot\boldsymbol{\gamma}.
			$$
			
			Besides, we define
			$$
			\begin{aligned}
				\mathcal{M}&=\{(i,i,i)\}_{i\in[1,km+lN]},
			\end{aligned}
			$$
			
			where $\mathcal{M}$ indicates that $\bar{\mathbf{w}}^{\prime}$ and $\mathbf{y}$ are binary vectors. In the new relation, the length of the witness is $km+lN$, and the size of $\mathcal{M}$ is also $km+lN$.
			
			
			\subsection{ZKAoK of Message-Signature Pair}
			Jeudy et al. \cite{JRS23_ACS} have shown how to instantiate an argument of knowledge of a valid message-signature pair for the signature scheme in zero-knowledge with the proof system of \cite{YAZ+19_ZKP_on_Z}, please refer to \cite{JRS23_ACS}. 
			
			\subsection{Other Relations in {$\Pi$}}	
			\subsubsection{ZKAoK of User Public Key Transformation}
			We propose a ZKAoK for the following relation:
			$$\label{R1}						
			\mathcal{R}_{1}=\left\{\begin{aligned}(\mathbf{G}_{n',p-1}&,\mathbf{F}),(\mathbf{m},\mathbf{y})\in(\mathbb{Z}_{p}^{n'\times m_3}\times\mathbb{Z}_{p}^{n'\times m_D\log{p}})\\
				&\times(\{0,1\}^{m_3}\times\mathbb{Z}_p^{m_D}):\\
				&\mathbf{G}_{n,p-1}\cdot\mathbf{m}=\mathbf{F}\cdot \mathsf{bin}(\mathbf{y}) \mod p\end{aligned}\right\}, 
			$$
			
			where $n' =\frac{m_3}{\lceil\log{p}\rceil}$.
			Then we construct the argument via reducing the relation $\mathcal{R}_{1}$ to an instance of the relation $\mathcal{R}^{*}$, and set:
			$$\mathbf{P}=\frac{q}{p}\cdot\begin{pmatrix}
				\mathbf{G}_{n',p-1}&-\mathbf{F} & \mathbf{0}\\
				\mathbf{0} & \mathbf{G}_{m_D,p-1} & -\mathbf{I}_{m_D}
			\end{pmatrix},$$
			$$\mathbf{x}=\left(\mathbf{m}^\top,\mathsf{bin}(\mathbf{y})^\top,\mathbf{y}^\top\right),
			\mathbf{z}=\frac{q}{p}\cdot\left(\begin{array}{c|c}
				\mathbf{0} & \mathbf{0}
			\end{array}\right)^{\top}.
			$$
			
			%
			
			Besides, we define 
			$$\mathcal{M}=\{(i,i,i)\}_{i\in[1, m_3+m_D\cdot\lceil \log{({p}-1)}\rceil]},$$
			
			where $n' =\frac{m_3}{\lceil\log{p}\rceil}$, $\mathcal{M}$ indicates that $\mathbf{m}$ and $\mathsf{bin}(\mathbf{y})$ are binary vectors. In this relation, the length of the witness is $m_D+m_3+m_D\cdot\lceil \log{({p}-1)}\rceil$, and the size of $\mathcal{M}$ is $m_3+m_D\cdot\lceil \log{({p}-1)}\rceil$.
			
			\subsubsection{ZKAoK of User Tag Equation}
			We propose a ZKAoK for the following relation:
			$$\mathcal{R}_{2}=\left\{\begin{aligned}						(\check{\mathbf{t}},c),(\check{\mathbf{t}}^{\prime},\mathbf{y})\in&(\mathbb{Z}_p^{m}\times\mathbb{Z}_p)\times(\mathbb{Z}_p^{m}\times \mathbb{Z}_p^{m}):\\
				&\check{\mathbf{t}}^{\prime}+ c\cdot\mathbf{y}=\check{\mathbf{t}} \mod p  \end{aligned}\right\}.$$
			
			We construct the argument via reducing the relation $\mathcal{R}_{2}$ to an instance of the relation $\mathcal{R}^{*}$, and set:
			$$\mathbf{P}=\frac{q}{p}\cdot\left(\begin{array}{c|c}
				\mathbf{I}_n & \mathbf{I}_n
			\end{array}\right),$$
			$$\mathbf{x}=\left(\begin{array}{c|c}
				\check{\mathbf{t}}^{\prime\top} & \mathbf{y}^{\top}
			\end{array}\right)^{\top}, \mathbf{z}=\frac{q}{p}\cdot\check{\mathbf{t}}.
			$$
			
			%
			%
			%
			In this relation, the length of the witness is $2m$, and the size of $\mathcal{M}$ is 0.

			\subsubsection{ZKAoK of User Tag Base Transformation}\label{fun_note}
			We recall that ${\mathsf{vdec}}(\cdot): \mathbb{Z}_{q_1}^n\rightarrow[{0,2^\iota-1}]^{nk'}$, which is used to obtain the decomposition of each element of a vector. And $\mathbf{G}_{m}^{\prime}=\mathbf{I}_m\otimes(1,2^\iota,...,2^{(k'-1)\cdot\iota})$. Obviously, $\mathbf{G}_{m}^{\prime}\cdot  \mathsf{vdec}(\mathbf{v})=\mathbf{v}$ for any $\mathbf{v}\in\mathbb{Z}_{q_1}^m$. Let ${\mathbf{C}}=\mathbf{B}+\check{\mathbf{B}}$, we know that $\mathsf{M2V}(\mathbf{C})\in\mathbb{Z}_{q_1}^{mn}$, so $\mathsf{vdec}(\mathsf{M2V}(\mathbf{C})\in[{0,2^\iota-1}]^{mn\cdot\lceil \log{({q_1}-1)}\rceil/ \iota}$, and $\mathbf{E}_{}\cdot \mathsf{vdec}(\mathsf{M2V}(\mathbf{C}))\mod q_1\in\mathbb{Z}_{q_1}^{n}$, then $\mathbf{b}^{\prime}=\mathsf{bin}(\mathbf{E}_{}\cdot \mathsf{vdec}(\mathsf{M2V}(\mathbf{C})))\in\{0,1\}^{n\cdot\lceil\log{({q_1}-1)}\rceil}$. We propose a ZKAoK for the following relation:
			$$
			\mathcal{R}_{3}\hspace{0mm}=\hspace{0mm}\left\{
			\begin{aligned}				
				(\mathbf{G}&_{n_{\mathbf{E}},q_1-1},\mathbf{E}),(\mathbf{b}^{\prime},\mathbf{B},\check{\mathbf{B}})\\
				&\in(\mathbb{Z}_{q_1}^{n_{\mathbf{E}}\times n_{\mathbf{E}}\lceil \log{({q_{}}-1)} \rceil}\times\mathbb{Z}_{q_1}^{n_{\mathbf{E}}\times mn\lceil \log{({q_1}-1)} \rceil/\iota})\\
				&\times(\{0,1\}^{n_{\mathbf{E}}\lceil \log{({q_{1}}-1)} \rceil}\times\mathbb{Z}_{q_1}^{m\times n}\times \mathbb{Z}_{q_1}^{m\times n}):\;\\
				&\mathbf{G}_{n_{\mathbf{E}},{q_1}-1}\cdot \mathbf{b}^{\prime}=\mathbf{E}_{}\cdot \mathsf{vdec}(\mathsf{M2V}(\mathbf{B}+\check{\mathbf{B}}))\text{ mod } {q_1} 
			\end{aligned}
			\right\}.
			$$
			
			We construct the argument via reducing the relation $\mathcal{R}_{3}$ to an instance of the relation $\mathcal{R}^{*}$, and set:
			$$\mathbf{P}=\frac{q}{q_1}\cdot\begin{pmatrix}
				\mathbf{G}_{n_{\mathbf{E}},{q}-1} & -\mathbf{E}_{} & \mathbf{0}& \mathbf{0}& \mathbf{0}\\
				\mathbf{0} & \mathbf{G}_{mn}^\prime & -\mathbf{I}_{mn}& \mathbf{0}& \mathbf{0}\\
				\mathbf{0} & \mathbf{0}& -\mathbf{I}_{mn}&\mathbf{I}_{mn}&\mathbf{I}_{mn}
			\end{pmatrix},$$
			
			$$\mathbf{x}=\left(
			\mathbf{b}^{\prime\top}, \mathsf{vdec}(\mathsf{M2V}(\mathbf{C}))^{\top},\mathsf{M2V}(\mathbf{C})^\top,\mathsf{M2V}{(\mathbf{B})}^\top,\mathsf{M2V}{(\check{\mathbf{B}})}^\top
			\right)^{\top},$$
			$$\mathbf{z}=\frac{q}{q_1}\cdot\left(\begin{array}{c|c|c}
				\mathbf{0}&\mathbf{0} &\mathbf{0} 
			\end{array}\right)^\top.$$ 
			
			%
			
			%
			This is a linear equation with short/binary solutions, $\mathbf{b}^{\prime}$ and $\mathsf{vdec}(\mathsf{M2V}(\mathbf{C}))$, and can be reduced to an instance of the relation $\mathcal{R}^{*}$. Here, we use \emph{the fast mode} to argue that $\mathsf{vdec}(\mathsf{M2V}(\mathbf{C}))$ is a short vector, and this leads to an argument for an instance of $\mathcal{R}^{*}$.
			
			Therefore, the length of the witness is $\mathfrak{b}\cdot\lambda\cdot\lceil \log{(mn k'\cdot(2^\iota-1)/\mathfrak{b})}\rceil+n_{\mathbf{E}}\lceil \log{(q_1-1)}\rceil+mn k'+3mn$, and the size of $\mathcal{M}$ is $\mathfrak{b}\cdot\lambda\cdot\lceil \log{(mnk'\cdot(2^\iota-1)/\mathfrak{b})}\rceil+n_{\mathbf{E}}\lceil \log{(q_1-1)}\rceil$, where $k'=\lceil\log(q_1-1)\rceil/\iota$, $\lambda$ is security parameter, and $\mathfrak{b}$ is the number of divided blocks\footnote{To balance the size of the witness/$\mathcal{M}$ and the hardness of the underlying SIS problem, we divide the ``short'' witness into several, say $\mathfrak{b}$, blocks and use the fast mode on each block.}.

			\section{Formal Definitions of Oracles and Security Requirements}
			\subsection{Oracles}\label{Formal_Definitions_Oracles}
			The adversary has access to a number of oracles and can query them according to the description below, to learn about the system and increase his success chance in the attacks. Their formal definitions can be found in \cite{k-TAA_TFS04} and \cite{dynamic_k-taa_NS05}.
			
			$\mathcal{O}_{LIST}$: Suppose there is an identification list $\mathcal{LIST}$ of user identity/public key pairs. This oracle maintains correct correspondence between user identities and user public keys. Any party can query the oracle to view a user's public key. A user or his colluder can request the oracle to record the user's identity and public key to $\mathcal{LIST}$. The GM or his colluder can request the oracle to delete data from $\mathcal{LIST}$.
			
			$\mathcal{O}_{QUERY}$: It is only used once in the definition for anonymity requirement to give the adversary a challenged authentication transcript. Identities of an AP and two honest users, who are in the AP's access group and have not been authenticated by the AP more than the limit, are given to the oracle. It then randomly chooses one of the two identities, executes the authentication protocol between the chosen identity and the AP, and outputs the transcript of the protocol.
			
			Given a user identity, $\mathcal{O}_{JOIN-GM}$ performs the \textbf{Join} protocol as executed by the honest GM and the user. Given an honest user's identity, $\mathcal{O}_{JOIN-U}$ performs the \textbf{Join} protocol between the GM and the user. Given an honest AP's identity and a user identity, $\mathcal{O}_{AUTH-AP}$ makes the AP to execute the \textbf{authentication} protocol with the user. Given an honest user's identity and an AP identity, $\mathcal{O}_{AUTH-U}$ makes the user to perform the \textbf{authentication} protocol with the AP. $\mathcal{O}_{GRAN-AP}$ takes as input an honest AP's identity and a group member's identity and the AP executes the \textbf{Grant} algorithm to grant access to the user. $\mathcal{O}_{REVO-AP}$ takes as input an honest AP's identity and a member of the AP's access group and the AP executes the \textbf{Revoke} algorithm to revoke the user's access right. $\mathcal{O}_{CORR-AP}$ corrupts an AP specified in its input.
			
			\subsection{Security Properties}
			\label{Formal_Definitions_Requirements}
			In this section, we use $\mathcal{A}$ to denotes the adversary and $\mathcal{C}$ to denotes the challenger.
			
			\subsubsection*{\textbf{D-Anonymity.}} A dynamic $k$-TAA is d-anonymous if for any PPT adversary $\mathcal{A}$, the advantage that $\mathcal{A}$ succeeds in the following game is close to $1/2$ with only negligible gap.
			\begin{enumerate}
				\item[1.] In the beginning, $\mathcal{C}$ chooses a bit $b$, two target users $i_0$, $i_1$, and an application provider AP$^*$ with public information $(ID^*,k^*)$, whose access group contains $i_0$ and $i_1$. Then it publish $(i_0,i_1)$ and $(ID^*,k^*)$.
				\item[2.] Then, $\mathcal{A}$ corrupts the group manager, all application providers (including AP$^*$), and all users except the two target users $i_0,i_1$. It also generates a (malicious) group public key $gpk$.
				\item[3.] Then $\mathcal{A}$ is allowed to access the oracle $\mathcal{O}_{List}[\mathcal{U}\cup\{GM\}-\{i_0,i_1\}],\mathcal{O}_{Join-U}[gpk]$, $\mathcal{O}_{Auth-U}[gpk]$, and $\mathcal{O}_{Query}[b,gpk,(i_0,i_1),(ID^*,k^*)]$ multiple times in any order, but with the restriction that it can only access the oracle $\mathcal{O}_{Query}$ one time of the form $(d,*)$ for each queried $d\in\{0,1\}$, and can only access the oracle $\mathcal{O}_{Auth- U}[ gpk]$ $k-1$ times of the form $( i, ( ID^* , M^* ) )$ for each queried $i\in\{i_0,i_1\}$.
				\item[4.] Finally, $\mathcal{A}$ outputs a bit $b^\prime$ and succeeds if $b=b^\prime$.\\
			\end{enumerate}
			
			\subsubsection*{\textbf{D-Detectability.}} A dynamic $k$-times anonymous authentication is d-detectable if for any PPT adversary $\mathcal{A}$, the advantage that $\mathcal{A}$ succeeds in the following game is negligible.
			\begin{enumerate}
				\item[1.]  Initially, $\mathcal{C}$ generates the group key pair $(gpk,gsk)$ by running the setup procedure, and sends $gpk$ to $\mathcal{A}$. The game has the following two stages.
				\item[2.] In the first stage, $\mathcal{A}$ controls all users in the system and is allowed to access the oracle $\mathcal{O}_{GRAN-AP}$, $\mathcal{O}_{REVO-AP}$, $\mathcal{O}_{CORR-AP}$, $\mathcal{O}_{List}[\mathcal{U}]$, $\mathcal{O}_{Join-GM}[gpk,gsk]$, and $\mathcal{O}_{Auth-AP}[gpk]$ multiple times in any order. After that, all authentication logs of all application providers are emptied.
				\item[3.] In the second stage, $\mathcal{A}$ continues the game, but without access to the revoking oracle $\mathcal{O}_{REVO-AP}$. $\mathcal{A}$ wins if he/she can be successfully authenticated by an honest application provider $\mathcal{V}$ with access bound $k$ for more than $k\times \#AG_\mathcal{V}$ times, where $\#AG_\mathcal{V}$ is the number of members in the AP's access group. The detectability condition requires that the probability that $\mathcal{A}$ wins is negligible.\\
			\end{enumerate}
			
			\subsubsection*{\textbf{D-Exculpability for users.}} A dynamic $k$-times anonymous authentication is d-exculpable for users if for any PPT adversary $\mathcal{A}$, the advantage that $\mathcal{A}$ succeeds in the following game is negligible.
			\begin{enumerate}
				\item[1.] In the beginning, $\mathcal{C}$ chooses a target users $i^*$ and publish it.
				\item[2.] Then, the adversary $\mathcal{A}$ corrupts the group manager, all application providers, and all users except the target user $i^*.$ It also generates a (malicious) group public key $gpk$.
				\item[3.] Then $\mathcal{A}$ is allowed to access the oracle $\mathcal{O}_{List}[\mathcal{U}\cup\{GM\}-\{i^*\}],\mathcal{O}_{Join-U}[gpk]$, and $\mathcal{O}_{Auth-U}[gpk]$ multiple times in any order.
				\item[4.] Finally, $\mathcal{C}$ checks authentication logs of applications providers. $\mathcal{A}$ succeeds in the game if there exists an application provider AP with public information $({ID}_{AP},k_{AP})$ and public $\log\mathcal{LOG}_{AP}$ such that the result of running the \textbf{PublicTracing} algorithm on input $(gpk,\mathcal{LOG}_{AP},\mathcal{LIST})$ consists $i^*$.\\
			\end{enumerate}
			
			\subsubsection*{\textbf{D-Exculpability for the group manager.}} A dynamic $k$-times anonymous authentication is d-exculpable for the group manager if for any PPT adversary $\mathcal{A}$, the advantage that $\mathcal{A}$ succeeds in the following game is negligible.
			\begin{enumerate}
				\item[1.] In the beginning, $\mathcal{C}$ generates the group key pair $(gpk,gsk)$ by running the setup procedure, and sends $gpk$ to the adversary $\mathcal{A}$.
				\item[2.] Then, $\mathcal{A}$ controls all application providers and all users and is allowed to access the oracle $\mathcal{O}_{List}[\mathcal{U}]$ and $\mathcal{O}_{Join-GM}[gpk,gsk]$ multiple times in any order. Here, whenever $\mathcal{A}$ queries the oracle $\mathcal{O}_{Join-GM}[gpk,gsk]$ with an identity $i$ and a user public key $mpk$, it should also query the oracle $\mathcal{O}_{List}[\mathcal{U}]$ with input $(1,i,mpk)$.
				\item[3.] Finally, $\mathcal{C}$ checks authentication logs of applications providers. $\mathcal{A}$ succeeds in the game if there exists an application provider AP with public information $({ID}_{AP},k_{AP})$ and public $\log\mathcal{LOG}_{AP}$ such that the result of running the \textbf{PublicTracing} algorithm on input $(gpk,\mathcal{LOG}_{AP},\mathcal{LIST})$ consists GM.\\
			\end{enumerate}

		\end{document}